\newcommand{\theHalgorithm}{\arabic{algorithm}}
\DeclareRobustCommand{\sm}[1]{{\begingroup\sethlcolor{BurntOrange}\hl{ (Stefano: #1) }\endgroup}}
\theoremstyle{plain}
\newtheorem{theorem}{Theorem}[section]
\newtheorem{lemma}[theorem]{Lemma}
\theoremstyle{definition}
\theoremstyle{remark}
\theoremstyle{conjecture}
\newtheorem{conjecture}[theorem]{Conjecture}
\newcommand{\N}{\mathbb{N}}
\newcommand{\Z}{\mathbb{Z}}
\newcommand{\R}{\mathbb{R}}
\newcommand{\C}{\mathbf{C}}
\newcommand{\U}{\mathbf{U}}
\newcommand{\W}{\mathbf{W}}
\newcommand{\Wpr}{\mathbf{W}_r}
\newcommand{\J}{\mathbf{J}}
\newcommand{\Q}{\mathbf{Q}}
\newcommand{\I}{\mathbf{I}}
\newcommand{\D}{\mathbf{D}}
\newcommand{\B}{\mathbf{B}}
\newcommand{\K}{\mathbf{K}}
\newcommand{\A}{\mathbf{A}}
\newcommand{\Ss}{\mathbf{S}}
\newcommand{\Nm}{\mathbf{N}}
\newcommand{\M}{\mathbf{M}}
\newcommand{\z}{\mathbf{z}}
\newcommand{\bb}{\mathbf{b}}
\newcommand{\bpr}{\mathbf{b}^\prime}
\newcommand{\av}{\mathbf{a}}
\newcommand{\x}{\mathbf{x}}
\newcommand{\vv}{\mathbf{v}}
\newcommand{\y}{\mathbf{y}}
\newcommand{\Pb}{\mathbf{P}}
\newcommand{\Pd}{\mathbf{P}}
\newcommand{\Ci}{\mathbf{C}^{-1}}
\newcommand{\lam}{\mathbf{T}}
\DeclareMathAlphabet\mathbfcal{OMS}{cmsy}{b}{n}
\newcommand{\btau}{\boldsymbol{\tau}}
\newcommand{\bsigma}{\boldsymbol{\sigma}}
\renewcommand{\labelitemii}{$\blacktriangleright$}
\DeclareMathOperator{\Tr}{Tr}
\title{Unconditional stability of a recurrent neural circuit implementing divisive normalization}
\author{Shivang Rawat$^{1,2}$ \hspace{14pt} David J. Heeger$^{3,4}$ \hspace{14pt} Stefano Martiniani$^{1,2,5}$ \vspace{10pt} \\
	$^1$ Courant Institute of Mathematical Sciences, NYU \\
	$^2$ Center for Soft Matter Research, Department of Physics, NYU\\
	$^3$ Department of Psychology, NYU \hspace{5pt} $^4$ Center for Neural Science, NYU \\
	$^5$ Simons Center for Computational Physical Chemistry, Department of Chemistry, NYU  \\ 
	\\
	\texttt{\{sr6364, david.heeger, sm7683\}@nyu.edu}
}
\begin{document}
	\maketitle
	\begin{abstract}
		Stability in recurrent neural models poses a significant challenge, particularly in developing biologically plausible neurodynamical models that can be seamlessly trained. Traditional cortical circuit models are notoriously difficult to train due to expansive nonlinearities in the dynamical system, leading to an optimization problem with nonlinear stability constraints that are difficult to impose. Conversely, recurrent neural networks (RNNs) excel in tasks involving sequential data but lack biological plausibility and interpretability. In this work, we address these challenges by linking dynamic divisive normalization (DN) to the stability of ``oscillatory recurrent gated neural integrator circuits'' (ORGaNICs), a biologically plausible recurrent cortical circuit model that dynamically achieves DN and that has been shown to simulate a wide range of neurophysiological phenomena. By using the indirect method of Lyapunov, we prove the remarkable property of unconditional local stability for an arbitrary-dimensional ORGaNICs circuit when the recurrent weight matrix is the identity. We thus connect ORGaNICs to a system of coupled damped harmonic oscillators, which enables us to derive the circuit's energy function, providing a normative principle of what the circuit, and individual neurons, aim to accomplish. Further, for a generic recurrent weight matrix, we prove the stability of the 2D model and demonstrate empirically that stability holds in higher dimensions. Finally, we show that ORGaNICs can be trained by backpropagation through time without gradient clipping/scaling, thanks to its intrinsic stability property and adaptive time constants, which address the problems of exploding, vanishing, and oscillating gradients. By evaluating the model's performance on RNN benchmarks, we find that ORGaNICs outperform alternative neurodynamical models on static image classification tasks and perform comparably to LSTMs on sequential tasks.
	\end{abstract}

	\section{Introduction}
	\label{Introduction}
	Deep neural networks (DNNs) have found widespread use in modeling tasks from experimental systems neuroscience. The allure of DNN-based models lies in their ease of training and the flexibility they offer in architecting systems with desired properties \cite{krizhevsky2012imagenet, vaswani2017attention, hochreiter1997long}. In contrast, neurodynamical models like the Wilson-Cowan \cite{wilson1972excitatory} or the Stabilized Supralinear Network (SSN) \cite{rubin2015stabilized} are more biologically plausible than DNNs, but these models confront considerable training challenges due to the lack of stability guarantees for high-dimensional problems. Training recurrent neural networks (RNNs), by comparison, is more straightforward thanks to ad hoc regularization techniques like layer normalization, batch normalization, and gradient clipping/scaling, which help stabilize training without imposing strict stability constraints. Conversely, neurodynamical models require enforcing hard stability constraints while maintaining biological plausibility.
	%On the other hand, we need to impose hard stability constraints on neurodynamical models while also preserving biological plausibility.
	%Moreover, the Wilson-Cowan model uses saturating nonlinearities, while real cortical neurons exhibit power-law activation functions. This discrepancy led to the development of the SSN (Supralinear Stabilized Network) model, which features cortical-like supralinear activation. \sm{This sentence is irrelevant at this point of the discussion}
	In lower dimensions, it is relatively straightforward to derive constraints on model parameters that ensure a dynamically stable system \cite{cowan2016wilson, ahmadian2013analysis}. However, for high-dimensional systems, this becomes significantly more challenging, as integrating these hard constraints into the optimization problem is more complex \cite{kotary2021end, donti2021dc3}. Stability is generally advantageous in DNNs, as it is linked to improved generalization, mitigation of exploding gradient problems, increased robustness to input noise, and simplified training techniques \cite{haber2017stable}.
	
	The divisive normalization (DN) model was developed to explain the responses of neurons in the primary visual cortex (V1) \cite{carandini1994summation26, albrecht1991motion29, heeger1992normalization, heeger1993modeling32}, and has since been applied to diverse cognitive processes and neural systems \cite{simoncelli1998model, reynolds1999competitive, schwartz2000natural, zoccolan2005multiple, boynton2009framework, lee2009normalization, bays2014noise, ma2014changing, zenger2003response, foley1994human}. Therefore, DN has been proposed as a canonical neural computation \cite{carandini2012normalization33} that is linked to many well-documented physiological \cite{brouwer2011cross, cavanaugh2002nature} and psychophysical \cite{xing2000center, petrov2005dynamics} phenomena. DN models various neural processes: adaptation \cite{wainwright2002natural, westrick2016pattern}, attention \cite{reynolds2009normalization}, automatic gain control \cite{heeger1996computational}, decorrelation, and statistical whitening \cite{lyu2009nonlinear}. The defining characteristic of DN is that each neuron’s response is divided by a weighted sum of the activity of a pool of neurons (Eq.~\ref{eq:norm_eqn2D}, below) like when normalizing the length of a vector. Due to its wide applicability and ability to explain a variety of neurophysiological phenomena, we argue that this characteristic should be central to any neurodynamical model. Both the Wilson-Cowan and SSN models have been shown to approximate DN responses \cite{malo2024cortical, rubin2015stabilized}, but only approximately in certain parameter regimes.
	
	Normalization techniques have been extensively adopted for training DNNs, demonstrating their ability to stabilize, accelerate training, and enhance generalization \cite{ioffe2015batch, ba2016layer, huang2023normalization}. Divisive normalization can be viewed as a comprehensive normalization strategy, with batch and layer normalization being specific instances \cite{ren2016normalizing}. Models implementing DN have shown superior performance compared to common normalization methods (Batch \cite{ioffe2015batch}, Layer \cite{ba2016layer}, Group \cite{wu2018group}) in tasks such as image recognition with convolutional neural networks (CNNs)\cite{miller2021divisive} and language modeling with RNNs \cite{ren2016normalizing, singh2020filter}. Despite the foundational role of these techniques in deep learning algorithms, their implementation is ad hoc, limiting their conceptual relevance. They serve as practical solutions addressing the limitations of current machine learning frameworks rather than offering principled insights derived from understanding cortical circuits.
	
	It has been proposed that DN is achieved via a recurrent circuit \cite{heeger1992normalization, carandini1994summation26, carandini1997linearity, ozeki2009inhibitory, carandini2002synaptic, brosch2014interaction, heeger2019oscillatory}. Oscillatory recurrent gated neural integrator circuits (ORGaNICs) are rate-based recurrent neural circuit models that implement DN dynamically via recurrent amplification \cite{heeger2019oscillatory, heeger2020recurrent}. Since ORGaNICs’ response follows the DN equation at steady-state, its steady-state response captures the full range of aforementioned neural phenomena explained by DN \cite{carandini1994summation26, albrecht1991motion29, heeger1992normalization, heeger1993modeling32, simoncelli1998model, reynolds1999competitive, schwartz2000natural, zoccolan2005multiple, boynton2009framework, lee2009normalization, bays2014noise, ma2014changing, zenger2003response, foley1994human, carandini2012normalization33, brouwer2011cross, cavanaugh2002nature, xing2000center, petrov2005dynamics, wainwright2002natural, westrick2016pattern, reynolds2009normalization, heeger1996computational, lyu2009nonlinear}. ORGaNICs have further been shown to simulate key time-dependent neurophysiological and cognitive/perceptual phenomena under realistic biophysical constraints \cite{heeger2019oscillatory, heeger2020recurrent}. Additional phenomena not explained by DN \cite{duong2024adaptive} can in principle be integrated into the model. In this paper, however, we focus on the effects of DN on the dynamical stability of ORGaNICs. Despite some empirical evidence that ORGaNICs are highly robust, the question of whether the model is stable for arbitrary parameter choices, and thus whether it can be robustly trained on ML tasks by backpropagation-through-time (BPTT), remains open. 
	
	Here, we establish the unconditional stability — applicable across all parameters and inputs — of a multidimensional two-neuron-types ORGaNICs model when the recurrent weight matrix is the identity. We prove this result, detailed in Section~\ref{sec:high-dim-stability}, by the indirect method of Lyapunov: we perform linear stability analysis around the model's analytically-known normalization fixed point and reduce the stability problem to that of a high-dimensional mechanical system, whose stability is defined in terms of a tractable quadratic eigenvalue problem. We then address the stability of the model with an arbitrary recurrent weight matrix in Section~\ref{sec:wr-not-identity}. While the indirect method of Lyapunov becomes intractable for such a system, we provide proof of unconditional stability for a two-dimensional circuit with an arbitrary recurrent weight and offer empirical evidence supporting the claim of stability for high-dimensional systems. 
	
	ORGaNICs can be viewed as biophysically plausible extensions of Long Short Term Memory units (LSTMs) \cite{hochreiter1997long} and Gated Recurrent Units (GRUs) \cite{chung2014empirical}, RNN architectures that have been widely used in ML applications \cite{hochreiter1997long, sutskever2014sequence, cho2014learning, graves2013speech, graves2013generating}. The main differences are that ORGaNICs operate in continuous time and have built-in dynamic normalization (via recurrent gain modulation) and built-in attention (via input gain modulation). Thus, we expect that ORGaNICs should be able to solve relatively sophisticated tasks \cite{heeger2019oscillatory}. Here, we demonstrate (Section~\ref{sec:experiments}) that by virtue of their intrinsic stability, ORGaNICs can be trained on sequence modeling tasks by BPTT, in the same manner as traditional RNNs (unlike SSN that instead requires costly specialized training strategies~\cite{soo2022training}), despite implementing power-law activations \cite{rubin2015stabilized}. Moreover, we show that ORGaNICs trained by naive BPTT (i.e., without gradient clipping/scaling or other ad hoc strategies) achieve performance comparable to LSTMs on the tasks that we consider, despite no systematic hyperparameter tuning.
	
	\section{Related Work}
	\textbf{Trainable biologically plausible neurodynamical models}: There have been several attempts to develop neurodynamical models that mimic the function of biological circuits and that can be trained on cognitive tasks. Song et al.~\cite{song2016training} incorporated Dale’s law into the vanilla RNN architecture, which was successfully trained across a variety of cognitive tasks. Building on this, Soo et al.~\cite{soo2024training} developed a technique for such RNNs to learn long-term dependencies by using skip connections through time. ORGaNICs is a model that is already built on biological principles and can learn long-term dependencies intrinsically by tuning the (intrinsic or effective) time constants, therefore it does not require the method used in \cite{soo2024training}. Soo et al.~\cite{soo2022training} introduced a novel training methodology (dynamics-neural growth) for SSNs and demonstrated its utility for tasks involving static (time-independent) stimuli. However, this training approach is costly and difficult to scale (because SSNs, unlike ORGaNICs, are not unconditionally stable), and its applicability on tasks with dynamically changing inputs remains unclear.
	
	\textbf{Dynamical systems view of RNNs:} The stability of continuous-time RNNs has been extensively studied and discussed in a comprehensive review by Zhang et al.~\cite{zhang2014comprehensive}. Recent advancements have focused on designing architectures that address the issues of vanishing and exploding gradients, thereby enhancing trainability and performance. A central idea in these designs is to achieve better trainability and generalization by ensuring the dynamical stability of the network. Moreover to avoid the problem of vanishing gradients the key idea is to constrain the real part of the eigenvalues of the linearized dynamical system to be close to zero, which facilitates the propagation and retention of information over long durations of time. Chang et al.~\cite{chang2019antisymmetricrnn} and Erichson et al.~\cite{erichson2020lipschitz} achieve this by imposing an antisymmetric constraint on the recurrent weight matrix. Meanwhile, Rusch et al.~\cite{rusch2020coupled, rusch2021unicornn} propose an architecture based on coupled damped harmonic oscillators, resulting in a second-order system of ordinary differential equations that behaves similarly to how ORGaNICs behave in the vicinity of the normalization fixed point, as we show in Section~\ref{sec:high-dim-stability}. Despite their impressive performance on various sequential data benchmarks, these models lack biological plausibility due to their use of saturating nonlinearities (instead of normalization) and unrealistic weight parameterizations.

	\section{Model description}
	In its simplest form, the two-neuron-types ORGaNICs model \cite{heeger2019oscillatory, heeger2020recurrent} with $n$ neurons of each type can be written as,
	\begin{equation}
	\begin{split}
	\btau_{y} \odot \dot{\y} &= -\y +  \bb \odot \z + \left(\mathbf{1} - \av^{+}\right) \odot \left(\W_r \left(\sqrt{\y^+} - \sqrt{\y^-}\right)\right) \\
	\btau_{a} \odot \dot{\av}  &= -\av + \bb_0^2 \odot \bsigma^2 + \W \, \left(\left(\y^+ + \y^-\right) \odot \av^{+2}\right)
	\end{split}
	\label{eq:full}
	\end{equation}
	where $\y \in \R^{n}$ and $\av \in \R^{n}$ are the membrane potentials (relative to an arbitrary threshold potential that we take to be $0$) of the excitatory ($\y$) and inhibitory ($\av$) neurons, evolving according to the dynamical equations defined above with $\dot{\y}$ and $\dot{\av}$ denoting the time derivatives. The notation $\odot$ denotes element-wise multiplication of vectors, and squaring, rectification, square-root, and division are also performed element-wise. $\mathbf{1}$ is an n-dimensional vector with all entries equal to 1. $\z \in \R^{n}$ is the input drive to the circuit and is a weighted sum of the input, $\x \in \R^{m}$, i.e., $\z = \W_{zx} \x$. The firing rates, $\y^\pm = \lfloor\pm\y\rfloor^2$ and $\av^+ = \sqrt{\lfloor\av\rfloor}$ are rectified ($\lfloor .\rfloor$) power functions of the underlying membrane potentials. For the derivation of a general model with arbitrary power-law exponents, including the Eq.~\ref{eq:full}, see Appendix~\ref{appendix_derivation}. Note that the term $\sqrt{\y^+} - \sqrt{\y^-}$ serves the purpose of defining a mechanism for reconstructing the membrane potential (which can be negative, depending on the sign of the input) from the firing rates $\y^\pm$ that are strictly nonnegative. $\y^+$ and $\y^-$ are the firing rates of neurons with complementary receptive fields such that they encode inputs with positive and negative signs, respectively. Note that only one of these neurons fires at a given time. In ORGaNICs, these neurons have a single dynamical equation for their membrane potentials, where the sign of $\y$ indicates which neuron is active. Neurons with such complementary (anti-phase) receptive fields are found adjacent to each other in the visual cortex \cite{liu1992interneuronal}, and we hypothesize that such complementary neurons are ubiquitous throughout the neocortex. $\bb \in {\R^+_*}^{n}$ and $\bb_0 \in {\R^+_*}^{n}$ are the input gains for the external inputs $\z$ and $\bsigma$ fed to neurons $\y$ and $\av$, respectively. ${\R^+_*}$ is the set of positive real numbers, $\{x \in \R \, | \,  x > 0\}$. $\bsigma \in {\R^+_*}^{n}$ determines the semisaturation of the responses of neurons $\y$ by contributing to the depolarization of neurons $\av$. $\btau_{y}  \in {\R^+_*}^{n}$ and $\btau_{a} \in {\R^+_*}^{n}$ represent the time constants of $\y$ and $\av$ neurons.
	
	In addition to receiving external inputs, both $\y$ and $\av$ neurons receive recurrent inputs, represented by the last term in both of the equations. $\W_r  \in {\R}^{n\times n}$ is the recurrent weight matrix that captures lateral connections between the $\y$ neurons. This recurrent input is gated by the $\av$ neurons, via the term $\left(\mathbf{1} - \av^{+}\right)$. Similarly, the \emph{nonnegative} normalization weight matrix, $\W  \in {\R_*}^{n\times n}$, encapsulates the recurrent inputs received by the $\av$ neurons. The differential equations are designed in such a way that when $\W_r=\I$ and $\bb = \bb_0 = b_0 \mathbf{1}$ (i.e., with all elements equal to a constant $b_0$), the principal neurons follow the normalization equation exactly (and approximately when $\W_r\neq\I$) at steady-state,
	\begin{equation}
	\y^+_s \equiv \lfloor\y_s\rfloor^2 = \frac{\lfloor\z\rfloor^2}{\bsigma^2 + \W \left(\lfloor\z\rfloor^2 + \lfloor-\z\rfloor^2\right)}. \label{eq:norm_eqn2D}
	\end{equation}
	%This equation has been shown to recapitulate a wide variety of neurophysiological phenomena across different cortical areas and different species \cite{carandini2012normalization33, heeger1992normalization, carandini1997linearity, simoncelli1998model, reynolds1999competitive, schwartz2000natural, zoccolan2005multiple, boynton2009framework, lee2009normalization, bays2014noise, ma2014changing, zenger2003response, foley1994human}. 
	$\lfloor\mathbf{z}\rfloor^2$ and $\lfloor-\mathbf{z}\rfloor^2$ represent the contribution of neurons with complementary receptive fields to the normalization pool, and $\lfloor\mathbf{z}\rfloor^2+\lfloor-\mathbf{z}\rfloor^2 = \z^2$ is the contrast energy of the input. Note that the recurrent gain, $\left(\mathbf{1} - \av^{+}\right)$, is a particular nonlinear function of the output responses/activation designed to achieve DN, while the input gain, $\bb^+$, is an input gate that can implement an attention mechanism.
	
	\section{Stability analysis of high-dimensional model with identity recurrent weights}
	\label{sec:high-dim-stability}
	
	We consider the stability of the general high-dimensional ORGaNICs (Eq.~\ref{eq:full}) when the recurrent weight matrix is identity, $\W_r = \I$. We first simplify the dynamical system by noting that $\sqrt{\y^+} - \sqrt{\y^-} = \y$ and $\y^+ + \y^- = \y^2$ yielding the following equations,
	\begin{equation}
	\begin{split}
	\btau_{y} \odot \dot{\y} &= - \sqrt{\lfloor\av\rfloor} \odot \y +  \bb \odot \z  \\
	\btau_{a} \odot \dot{\av}  &= -\av + \bb_0^2 \odot \bsigma^2 + \W \, \left(\y^2 \odot\lfloor \av \rfloor\right)
	\end{split}
	\label{eq:full-simplified}
	\end{equation}
	For identity recurrent weights, we have a unique fixed point, given by,
	\begin{equation}
	\begin{aligned}
	\y_s &= \frac{\bb \odot \mathbf{z}}{\sqrt{\bb_0^2 \odot \bsigma^2 + \mathbf{W} \left( \bb^2 \odot \mathbf{z}^2\right)}}; \quad \mathbf{a}_s = {\bb_0^2 \odot \bsigma^2 + \mathbf{W} \left(\bb^2 \odot \mathbf{z}^2\right)} \\
	% \y_s^+ &= \frac{\lfloor \bb \odot \mathbf{z} \rfloor ^2 }{\bb_0^2 \odot \bsigma^2 + \mathbf{W} \left( \bb^2 \odot \mathbf{z}^2\right)}; \quad 
	% \y_s^- = \frac{\lfloor - \bb \odot \mathbf{z} \rfloor ^2 }{\bb_0^2 \odot \bsigma^2 + \mathbf{W} \left( \bb^2 \odot \mathbf{z}^2\right)}
	\end{aligned} \label{eq:fixed_points_various}
	\end{equation}
	Since the normalization weights in the matrix $\W$ are nonnegative, at steady-state we have $\mathbf{a}_s > \mathbf{0}$, so that $\sqrt{\lfloor\av_s\rfloor} = \sqrt{\av_s}$, and the corresponding firing rates at steady-state are,
	\begin{equation}
	\y_s^\pm = \frac{\lfloor \pm \bb \odot \mathbf{z} \rfloor ^2 }{\bb_0^2 \odot \bsigma^2 + \mathbf{W} \left( \bb^2 \odot \mathbf{z}^2\right)}; \quad \av^+_s = \sqrt{\bb_0^2 \odot \bsigma^2 + \mathbf{W} \left(\bb^2 \odot \mathbf{z}^2\right)}
	\end{equation}
	Note that we recover the normalization equation, Eq.~\ref{eq:norm_eqn2D}, if $\bb = \bb_0 = b_0 \mathbf{1}$. Since the fixed points of $\y$ and $\av$ neurons are known analytically, to prove that this fixed point is \textit{locally asymptotically stable} (i.e., the responses converge asymptotically to the fixed point), we apply the \textit{indirect method of Lyapunov} at this fixed point \cite{khalil2002control}. This method allows us to analyze the stability of the nonlinear system in the vicinity of the fixed point by studying the corresponding linearized system. The Jacobian matrix $\J \in \R^{2n \times 2n}$ about $(\y_s, \av_s)$, defining the linearized system, is given by,
	\begin{equation}
	\mathbf{J} = \left[\begin{matrix}
	-\mathbf{D}\left(\frac{\sqrt{\mathbf{a}_s}}{\btau_y}\right) & -\mathbf{D}\left(\frac{\mathbf{y}_s}{\mathbf{2} \odot \sqrt{\mathbf{a}_s} \odot \btau_y}\right) \\[6pt]
	\mathbf{D}\left(\frac{\mathbf{2}}{\btau_a}\right) \mathbf{W}\, \mathbf{D}\left(\mathbf{a}_s \odot \mathbf{y}_s \right)  & \mathbf{D}\left(\frac{\mathbf{1}}{\btau_a}\right) \left(-\mathbf{I} + \mathbf{W} \,\mathbf{D}\left(\mathbf{y}_s^2 \right)\right) 
	\end{matrix}\right]
	\end{equation}
	where $\D(\x)$ is a diagonal matrix of appropriate size with the elements of the vector $\x$ on the diagonal. A necessary and sufficient condition for local stability is that the real parts of all eigenvalues of this matrix are negative. We thus proceed by computing the characteristic polynomial for the Jacobian, $p_\mathbf{J}(\lambda) \equiv \mathrm{det}(\mathbf{J} - \lambda \mathbf{I})$. The roots of this polynomial, found by setting $p_\mathbf{J}(\lambda) = 0$, are the eigenvalues of the system. Consider the block matrix,
	\begin{equation}
	\begin{split}
	\mathbf{J} - \lambda \mathbf{I} =  \left[\begin{matrix}
	\mathbf{A}_{11} & \mathbf{A}_{12} \\
	\mathbf{A}_{21} & \mathbf{A}_{22}
	\end{matrix}\right] 
	= \left[\begin{matrix}
	-\mathbf{D}\left(\frac{\sqrt{\mathbf{a}_s}}{\btau_y}\right) - \lambda \mathbf{I} & -\mathbf{D}\left(\frac{\mathbf{y}_s}{\mathbf{2}\odot \sqrt{\mathbf{a}_s} \odot \btau_y}\right) \\[6pt]
	\mathbf{D}\left(\frac{\mathbf{2}}{\btau_a}\right) \mathbf{W}\, \mathbf{D}\left(\mathbf{a}_s \odot \mathbf{y}_s \right)  & \mathbf{D}\left(\frac{\mathbf{1}}{\btau_a}\right) \left(-\mathbf{I} + \mathbf{W} \,\mathbf{D}\left(\mathbf{y}_s^2 \right)\right) - \lambda \mathbf{I}
	\end{matrix}\right]
	\end{split}
	\label{eq:dho_eignevalue}
	\end{equation}
	Notice that $\A_{11}$ and $\A_{12}$ are diagonal and therefore they commute, i.e., $\A_{11} \A_{12} = \A_{12} \A_{11}$, so we have that $\mathrm{det}(\mathbf{J} - \lambda \mathbf{I}) = \mathrm{det}(\A_{22}\A_{11} - \A_{21}\A_{12})$ which is a property of the determinant of block matrices with commuting blocks \cite{silvester2000determinants}. Therefore, the characteristic polynomial of the linearized system after expansion of the terms and simplification is given by,
	\begin{equation}
	\begin{split}
	\mathrm{det}(\mathbf{J} - \lambda \mathbf{I}) = \mathrm{det} \bigg( \lambda^2 \I +  \lambda \left[\mathbf{D}\left(\frac{\mathbf{1}}{\btau_a}\right)  + \mathbf{D}\left(\frac{\sqrt{\mathbf{a}_s}}{\btau_y}\right) - \mathbf{D}\left( \frac{\mathbf{1}}{\btau_a}\right)\W \mathbf{D}\left({\mathbf{y}_s^2}\right)\right] + \mathbf{D}\left(\frac{\sqrt{\mathbf{a}_s}}{\btau_y \odot \btau_a}\right)\bigg)
	\end{split} \label{eq:determinant_organics}
	\end{equation}
	Finding the roots of this polynomial is thus a quadratic eigenvalue problem of the form $\mathcal{L}(\lambda) \equiv {\mathrm{det}(\lambda^2 \I + \lambda \mathbf{B} + \mathbf{K}) = 0}$, which has been studied extensively \cite{lancaster2002lambda, tisseur2001quadratic, lancaster2013stability, kirillov2021nonconservative}. $\mathcal{L}(\lambda)$ can be interpreted as the characteristic polynomial associated with a system of linear second-order differential equations with constant coefficients of the form $\mathbf{I} \ddot{\mathbf{x}} + \mathbf{B} \dot{\mathbf{x}} + \mathbf{K} \mathbf{x} = \mathbf{0}$. Therefore, proving the stability of our system (i.e., $\mathbf{Re}(\lambda) < 0$ for $\{\lambda:\mathcal{L}(\lambda)=0\}$), is equivalent to proving the asymptotic stability of $\mathbf{I} \ddot{\mathbf{x}} + \mathbf{B} \dot{\mathbf{x}} + \mathbf{K} \mathbf{x} = \mathbf{0}$. 
	
	Tisseur et al.~\cite{tisseur2001quadratic} and Kirillov et al.~\cite{kirillov2021nonconservative} list a set of constraints on the damping matrix, $\mathbf{B}$, and stiffness matrix, $\mathbf{K}$, that yield a stable system, but they are \textit{not} directly applicable to our system. In the context of a high-dimensional mechanical system, our system falls under the category of \textit{gyroscopically stabilized systems with indefinite damping}. Few results are known about the conditions leading to the stability of such systems. By constructing a Lyapunov function, we prove (Appendix~\ref{stability-theorem-proof}) the following stability theorem that is directly applicable to our system, following an approach similar to Kliem et al.~\cite{kliem2009indefinite}.
	
	\begin{theorem}
		\label{theorem:main_text}
		For a system of linear differential equations with constant coefficients of the form,
		\begin{equation}
		\I \ddot{\x} + \B \dot{\x} + \K \x = \mathbf{0} 
		\end{equation}
		where $\B \in \R^{n \times n}$ and $\K \in \R^{n \times n}$ is a positive diagonal matrix (hence $\K \succ 0$), the dynamical system is globally asymptotically stable if $\B$ is Lyapunov diagonally stable. 
	\end{theorem}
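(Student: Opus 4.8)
The plan is to build an explicit quadratic Lyapunov function adapted both to the diagonal structure of $\K$ and to the diagonal scaling that witnesses Lyapunov diagonal stability of $\B$. Recall that $\B$ being Lyapunov diagonally stable means there exists a positive diagonal matrix $\Pb$ with $\Pb\B + \B^T\Pb \succ 0$. Rewriting the second-order system in first-order form, I would propose the candidate
\[
V(\x,\dot{\x}) = \tfrac{1}{2}\,\dot{\x}^T \Pb\, \dot{\x} + \tfrac{1}{2}\,\x^T \Pb\K\, \x .
\]
Since $\Pb$ is positive diagonal and $\K$ is positive diagonal, the product $\Pb\K$ is again positive diagonal; hence both quadratic forms are positive definite and $V$ is a valid, radially unbounded Lyapunov candidate whose sublevel sets are bounded.

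Next I would differentiate $V$ along trajectories and substitute the equation of motion $\ddot{\x} = -\B\dot{\x} - \K\x$, which yields
\[
\dot{V} = -\dot{\x}^T \Pb\B\, \dot{\x} - \dot{\x}^T \Pb\K\, \x + \x^T \Pb\K\, \dot{\x} .
\]
The key observation is that $\Pb\K$ is symmetric, being a product of commuting diagonal matrices, so the two cross terms are equal scalars and cancel exactly, leaving
\[
\dot{V} = -\tfrac{1}{2}\,\dot{\x}^T\!\left(\Pb\B + \B^T\Pb\right)\dot{\x} \le 0
\]
by the Lyapunov diagonal stability of $\B$. This is the heart of the argument: the diagonality of $\K$ is precisely what allows a single diagonal witness $\Pb$ to simultaneously symmetrize the potential-energy term (so the cross terms vanish) and dissipate the kinetic-energy term.

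Because $\dot{V}$ is only negative semidefinite, vanishing whenever $\dot{\x} = \mathbf{0}$, I would close the argument with LaSalle's invariance principle rather than a strict-Lyapunov estimate. On the set $\{\dot{V}=0\} = \{\dot{\x}=\mathbf{0}\}$, any complete invariant trajectory must also satisfy $\ddot{\x}=\mathbf{0}$; the equation of motion then forces $\K\x = \mathbf{0}$, and since $\K \succ 0$ is invertible this gives $\x = \mathbf{0}$. Thus the largest invariant set contained in $\{\dot{V}=0\}$ is the origin alone, and LaSalle yields global asymptotic stability (global because $V$ is radially unbounded).

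The main obstacle is identifying the correct Lyapunov candidate — specifically, recognizing that the diagonal witness $\Pb$ supplied by the hypothesis should weight both the kinetic and potential terms, and that this works only because $\K$ is diagonal. For a general positive definite $\K$ the product $\Pb\K$ need not be symmetric, the cross terms would no longer cancel, and one would be left with indefinite contributions that a diagonal scaling cannot absorb; treating that case would require a genuinely different construction, which is why the diagonality assumption on $\K$ is doing real work here.
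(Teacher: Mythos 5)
Your proof is correct. The Lyapunov candidate you choose, $V = \tfrac12\dot{\x}^\top\Pb\dot{\x} + \tfrac12\x^\top\Pb\K\x$ with $\Pb$ the diagonal witness of Lyapunov diagonal stability, is exactly the block-diagonal matrix $\Pd = \mathrm{diag}(\lam\K,\lam)$ that the paper uses in the first half of its argument, and your computation that the cross terms cancel (because $\Pb\K$ is diagonal, hence symmetric) and that $\dot V = -\tfrac12\dot{\x}^\top(\Pb\B+\B^\top\Pb)\dot{\x}\le 0$ matches the paper's Schur-complement verification that $\Q\succeq 0$. Where you diverge is in upgrading semidefiniteness to asymptotic stability: you invoke LaSalle's invariance principle, observing that on $\{\dot{\x}=\mathbf{0}\}$ the only invariant trajectory is $\x=\mathbf{0}$ since $\K$ is invertible, whereas the paper constructs a \emph{strict} Lyapunov function by adding an off-diagonal coupling $\epsilon\I$ to $\Pd$ and then derives three explicit upper bounds on $\epsilon$ (in terms of $\min(t_i\sqrt{k_i})$, the smallest eigenvalue of $\lam\B+\B^\top\lam$, and the largest eigenvalue of $\B\B^\top$) to make $\Q\succ 0$. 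Your route is shorter and avoids all of that bookkeeping; for a linear autonomous system LaSalle (or equivalently a Krasovskii/observability argument) applies without difficulty, since your $V$ is radially unbounded and its sublevel sets are compact and invariant. What the paper's heavier construction buys is an explicit strict quadratic Lyapunov function with a quantified admissible range for $\epsilon$, which it then reuses (taking $\epsilon$ arbitrarily small) in Appendix I to define the \emph{energy} of ORGaNICs; a LaSalle-based proof would not by itself deliver that object. Your closing remark about where the diagonality of $\K$ does real work --- namely that a single diagonal $\Pb$ must simultaneously make $\Pb\K$ symmetric and dissipate through $\B$ --- is accurate and consistent with the paper's hypotheses.
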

	
	Since the stiffness matrix,
	\begin{equation}
	\begin{split}
	\K &= \mathbf{D}\left(\frac{\sqrt{\mathbf{a}_s}}{\btau_y \odot \btau_a}\right) = \mathbf{D}\left(\frac{\sqrt{\bb_0^2 \odot \bsigma^2 + \mathbf{W} \left(\bb^2 \odot \mathbf{z}^2\right)}}{\btau_y \odot \btau_a}\right)
	\end{split}
	\end{equation}
	is a positive diagonal matrix, a sufficient condition for stability of the system is  that the damping matrix, $\B$, given by,
	\begin{equation}
	\begin{split}
	\B =& \B_1 + \B_2 - \B_3 \\
	=& \mathbf{D}\left(\frac{\mathbf{1}}{\btau_a}\right)  + \mathbf{D}\left(\frac{\sqrt{\mathbf{a}_s}}{\btau_y}\right) - \mathbf{D}\left(\frac{\mathbf{1}}{\btau_a}\right) \W \mathbf{D}\left({\mathbf{y}_s^2}\right) \\
	=& \mathbf{D}\left(\frac{\mathbf{1}}{\btau_a}\right) +  \mathbf{D}\left(\frac{\sqrt{\bb_0^2 \odot \bsigma^2 + \mathbf{W} \left(\bb^2 \odot \mathbf{z}^2\right)}}{\btau_y }\right) - \mathbf{D}\left(\frac{\mathbf{1}}{\btau_a}\right) \W \mathbf{D}\left(\frac{{\bb^2 \odot \mathbf{z}^2}}{{\bb_0^2 \odot \bsigma^2 + \mathbf{W} \left(\bb^2 \odot \mathbf{z}^2\right)}}\right)
	\end{split}
	\raisetag{3\baselineskip}
	\end{equation}
	is \textit{Lyapunov diagonally stable}, i.e., there exists a positive definite diagonal matrix $\lam$, such that $\lam \B + \B^\top \lam$ is positive definite.
	
	Since all of the parameters are positive, and the weights in the matrix $\W$ are nonnegative, we can conclude the following: $\B_1$ and $\B_2$ are positive diagonal matrices and $\B_3$ is a matrix with all positive entries (that may or may not be symmetric). Therefore, $\B$ is a \textit{Z-matrix}, meaning that its off-diagonal entries are nonpositive. Further, a \textit{Z-matrix} is \textit{Lyapunov diagonally stable} if and only if it is a nonsingular \textit{M-matrix}. Intuitively, \textit{M-matrices} are 
	matrices with non-positive off-diagonal elements and ``large enough" positive diagonal entries. Berman \& Plemmons~\cite{berman1994nonnegative} list 50 equivalent definitions of nonsingular \textit{M-matrices}. We use the one that is best suited for our problem,
	\begin{theorem}{(Chapter~6, Theorem~2.3 from \cite{berman1994nonnegative})}
		A Z-matrix matrix $\B \in \R^{n \times n}$ is Lyapunov diagonally stable if and only if there exists a convergent regular splitting of the matrix, that is, it has a representation of the form $\B = \M - \Nm$, where $\M^{-1}$ and $\Nm$ have all nonnegative entries, and $\M^{-1} \Nm$ has a spectral radius smaller than 1. 
	\end{theorem}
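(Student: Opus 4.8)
The plan is to route both implications through a single pivot, \emph{inverse-nonnegativity}: for a Z-matrix $\B$ I would treat the property $\B^{-1}\geq 0$ (entrywise) as the hub equivalent to $\B$ being a nonsingular M-matrix, and show that each of the two conditions in the statement is equivalent to it, i.e.\
\[
\text{convergent regular splitting}\;\Longleftrightarrow\;\B^{-1}\geq 0\;\Longleftrightarrow\;\text{Lyapunov diagonal stability},
\]
using a Varga-type argument for the first biconditional and M-matrix theory for the second.

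For the splitting side I would first prove the Varga-type lemma: any regular splitting $\B=\M-\Nm$ is convergent iff $\B^{-1}\geq 0$. Writing $\B=\M(\I-\mathbf{H})$ with $\mathbf{H}=\M^{-1}\Nm\geq 0$, the forward direction is the Neumann series $\B^{-1}=(\I-\mathbf{H})^{-1}\M^{-1}=\big(\sum_{k\geq 0}\mathbf{H}^{k}\big)\M^{-1}\geq 0$, which converges precisely because $\rho(\mathbf{H})<1$. For the converse I would apply Perron--Frobenius to the nonnegative matrix $\mathbf{H}$: taking $r=\rho(\mathbf{H})$ with eigenvector $\mathbf{x}\geq \mathbf{0}$, $\mathbf{x}\neq\mathbf{0}$, one checks $\B^{-1}\Nm\,\mathbf{x}=\tfrac{r}{1-r}\mathbf{x}$ whenever $r\neq 1$; since the nonnegative matrix $\B^{-1}\Nm$ cannot carry a negative eigenvalue on a nonnegative eigenvector, $r>1$ is excluded, while $r=1$ would make $\B$ singular, so $r<1$. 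For the existence half of $\B^{-1}\geq 0\Rightarrow$ splitting, I would simply exhibit the Jacobi splitting in which $\M$ is the diagonal part of $\B$ and $\Nm=\M-\B\geq 0$ (nonnegative because $\B$ is a Z-matrix), which is regular and convergent by the lemma.

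On the Lyapunov side the easy direction is LDS $\Rightarrow\B^{-1}\geq 0$: if $\lam\B+\B^{\top}\lam\succ 0$ for a positive diagonal $\lam$, testing against an eigenvector of $\B$ shows every eigenvalue has positive real part; a positive-stable Z-matrix is a nonsingular M-matrix (compare its shift form $\B=s\I-\Pb$, $\Pb\geq 0$, against the Perron root of $\Pb$), and a nonsingular M-matrix has nonnegative inverse via its own Neumann series. The hard direction is $\B^{-1}\geq 0\Rightarrow$ LDS, i.e.\ producing the positive diagonal $\lam$. I would obtain it by induction on $n$ via Schur complements: peeling off the first coordinate, the Schur complement $\B_{22}-\B_{21}\B_{11}^{-1}\B_{12}$ is again a Z-matrix and a nonsingular M-matrix (the rank-one correction is entrywise nonnegative, being an outer product of two nonpositive vectors scaled by $1/\B_{11}>0$), so the inductive hypothesis supplies a scaling for the $(n-1)$-block, which I would then glue to a weight for the first coordinate and certify $\lam\B+\B^{\top}\lam\succ 0$ by completing the square against that Schur complement.

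The main obstacle is exactly this construction of $\lam$. A one-shot scaling fails: choosing $\mathbf{u}>0$ with $\B\mathbf{u}>0$ makes $\B\,\D(\mathbf{u})$ strictly row-diagonally dominant, but the \emph{symmetrized} matrix $\lam\B+\B^{\top}\lam$ couples row and column sums, so diagonal dominance of the symmetric part generally breaks down and positive definiteness must be forced through the Schur-complement congruence rather than by inspection. The delicate point is selecting the relative weight between the first coordinate and the recursively obtained block scaling so that the cross terms are dominated; the remaining care is to handle reducible $\B$, where Perron vectors need not be strictly positive, by a limiting argument on $\B+\varepsilon\I$.
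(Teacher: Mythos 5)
The paper does not actually prove this statement: it is quoted directly from Berman \& Plemmons (Chapter~6) as one of the many equivalent characterizations of nonsingular \emph{M-matrices} and is used as a black box, so there is no in-paper proof to compare against. Judged on its own terms, your architecture is the standard one and is sound: both conditions are equivalent, for a Z-matrix, to inverse-nonnegativity (equivalently, to being a nonsingular M-matrix). Your Varga-type lemma is correctly sketched in both directions (the Neumann series for $\rho(\mathbf{H})<1\Rightarrow\B^{-1}\geq 0$, and the identity $\B^{-1}\Nm\,\mathbf{x}=\tfrac{r}{1-r}\mathbf{x}$ plus the fact that a nonnegative matrix cannot have a negative eigenvalue on a nonnegative eigenvector for the converse), the Jacobi splitting supplies existence, and the chain LDS $\Rightarrow$ positive stability $\Rightarrow$ nonsingular M-matrix $\Rightarrow$ $\B^{-1}\geq 0$ is standard.

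The one step you correctly flag as delicate --- producing the diagonal $\lam$ from $\B^{-1}\geq 0$ --- is also the one where your plan is at real risk. In the Schur-complement induction the off-diagonal block of $\lam\B+\B^\top\lam$ is $t_1\B_{12}+\B_{21}^\top\lam_2$, so the Schur complement of the $(1,1)$ entry contains cross terms scaling like both $t_1$ and $t_1^{-1}$, and the inductive hypothesis $\lam_2 S+S^\top\lam_2\succ 0$ applied to the \emph{unsymmetrized} Schur complement $S=\B_{22}-\B_{21}b_{11}^{-1}\B_{12}$ does not obviously leave enough slack to absorb them; making this work requires perturbing $S$ before invoking the hypothesis. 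A cleaner finish avoids the induction entirely: since $\B$ and $\B^\top$ are both nonsingular M-matrices, $\x=\B^{-1}\mathbf{1}>\mathbf{0}$ and $\y=(\B^\top)^{-1}\mathbf{1}>\mathbf{0}$; set $\lam=\D(\y/\x)$ so that $\lam\x=\y$. Then $(\lam\B+\B^\top\lam)\x=\lam(\B\x)+\B^\top\y>\mathbf{0}$, and since $\lam\B+\B^\top\lam$ is a symmetric Z-matrix with positive diagonal, semipositivity makes it a symmetric nonsingular M-matrix, hence positive definite. With that substitution (or with the extra care in the gluing step spelled out), your proposal is a correct reconstruction of the textbook proof.
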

	
	We now show that, indeed, $\B$ has a \textit{convergent regular splitting} for all combinations of the circuit parameters and for all inputs. We have already shown that $\B$ is a \textit{Z-matrix}, therefore, the first condition of the theorem is satisfied. Next, we consider the following splitting $\B = \M - \Nm$ with $\M = \B_1 + \B_2$ and $\Nm = \B_3$. Since $\B_1$ and $\B_2$ are positive diagonal matrices, $\M^{-1}$ is nonnegative, while $\Nm$ is also nonnegative because $\B_3$ has all positive entries. Therefore, the only condition left to satisfy is that the spectral radius of $\M^{-1} \Nm$ is smaller than 1, or that the matrix is convergent.
	
	The matrix $\Ss = \M^{-1} \Nm = (\B_1 + \B_2)^{-1} \B_3 $ can be written as,
	\begin{equation}
	\Ss = \D \left(\frac{\mathbf{1}}{\mathbf{1}+(\btau_a /\btau_y)\odot{\sqrt{\bb_0^2 \odot \bsigma^2 + \mathbf{W} \left(\bb^2 \odot \mathbf{z}^2\right)}}} \right) \W \mathbf{D}\left(\frac{{\bb^2 \odot \mathbf{z}^2}}{{\bb_0^2 \odot \bsigma^2 + \mathbf{W} \left(\bb^2 \odot \mathbf{z}^2\right)}}\right) \label{eq:Ss1}
	\end{equation}
	We prove the following theorem (Appendix~\ref{theorem:convergent}) which directly applies to $\Ss$, 
	\begin{theorem}
		A matrix $\A$ of the form \(\A = \mathbf{D}(\mathbf{t}) \, \mathbf{W} \, \mathbf{D}\left(\mathbf{u} / \left(\mathbf{v} + \mathbf{W} \mathbf{u}\right)\right)\)
		is convergent (i.e., its spectral radius is less than 1), if \(\mathbf{W} \in {\mathbb{R}}^{n \times n}\) and \(\mathbf{t}, \mathbf{u}, \mathbf{v} \in \mathbb{R}^n\) satisfy $0 < t_i < 1$, $u_i \geq 0$, $v_i > 0$ and $w_{ij}\geq 0$ for all $i, j$.
	\end{theorem}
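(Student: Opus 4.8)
The plan is to exploit the fact that $\A$ is a \emph{nonnegative} matrix and then bound its spectral radius by exhibiting an explicit strictly positive vector that $\A$ contracts. First I would check nonnegativity: all three factors of $\A = \D(\mathbf{t})\,\W\,\D\!\left(\mathbf{u}/(\mathbf{v}+\W\mathbf{u})\right)$ have nonnegative entries, since $0 < t_i < 1$, $w_{ij}\geq 0$ by hypothesis, and the diagonal $\mathbf{u}/(\mathbf{v}+\W\mathbf{u})$ is well-defined and nonnegative because $u_i \geq 0$ while the denominator $v_i + (\W\mathbf{u})_i > 0$ strictly (using $v_i > 0$ and $\W\mathbf{u}\geq 0$). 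Hence $\A \geq 0$ entrywise, and the Perron--Frobenius / Collatz--Wielandt machinery is available. The single tool I need is the standard one-directional bound: for any nonnegative matrix $\A$ and any strictly positive vector $\mathbf{x} > 0$,
\[
\rho(\A) \le \max_i \frac{(\A\mathbf{x})_i}{x_i}.
\]
I would justify this in one line by setting $\alpha$ equal to the right-hand side, noting $\A\mathbf{x}\le \alpha\mathbf{x}$ componentwise, and iterating: since $\A\ge 0$ preserves inequalities, $\A^k\mathbf{x}\le \alpha^k\mathbf{x}$, so the entries of $\A^k$ grow no faster than $\alpha^k$ and $\rho(\A)=\lim_k\|\A^k\|^{1/k}\le\alpha$.

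The crux of the argument, and the step I expect to require the decisive insight, is the choice of test vector. I would take
\[
\mathbf{x} = \mathbf{v} + \W\mathbf{u},
\]
which is strictly positive precisely because $v_i > 0$. The reason this choice works is that it is engineered to cancel the troublesome denominator. Writing $A_{ij} = t_i\, w_{ij}\, u_j/\!\left(v_j + (\W\mathbf{u})_j\right)$ and observing that $x_j = v_j + (\W\mathbf{u})_j$ is exactly the $j$-th denominator, the factor $x_j$ cancels column by column, leaving the clean identity
\[
(\A\mathbf{x})_i = t_i \sum_j w_{ij}\, u_j = t_i\,(\W\mathbf{u})_i.
\]

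Finally I would combine the two pieces. Dividing by $x_i = v_i + (\W\mathbf{u})_i$ gives, for every $i$,
\[
\frac{(\A\mathbf{x})_i}{x_i} = t_i\,\frac{(\W\mathbf{u})_i}{v_i + (\W\mathbf{u})_i} < 1,
\]
where the strict inequality uses $t_i < 1$ together with $(\W\mathbf{u})_i/\!\left(v_i + (\W\mathbf{u})_i\right) \le 1$, itself strict because $v_i > 0$. Taking the maximum over $i$ and applying the upper bound yields $\rho(\A) \le \max_i t_i (\W\mathbf{u})_i/\!\left(v_i+(\W\mathbf{u})_i\right) < 1$, establishing convergence. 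I would emphasize that this route needs \emph{no} irreducibility assumption on $\W$ and handles the degenerate case $u_i = 0$ automatically: the test vector is built from $\mathbf{v} + \W\mathbf{u}$ rather than from $\mathbf{u}$ itself, so the proof never divides by any $u_i$, and the only role of $v_i > 0$ is to keep $\mathbf{x}$ strictly positive and to force each ratio strictly below one. The main obstacle is therefore not any hard estimate but guessing the contracting vector $\mathbf{v}+\W\mathbf{u}$; once that is in hand, everything reduces to the elementary inequality above.
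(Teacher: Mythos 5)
Your proof is correct and takes essentially the same route as the paper's: both arguments rest on the weighted row-sum bound $\rho(\mathbf{S})\le\max_i \tfrac{1}{x_i}\sum_j s_{ij}x_j$ for nonnegative matrices (Horn--Johnson, Theorem~8.1.26), applied with the identical test vector $\x=\mathbf{v}+\W\mathbf{u}$, which is the decisive choice in both cases. The only cosmetic difference is that the paper first drops the factor $\D(\mathbf{t})$ via monotonicity of the spectral radius for nonnegative matrices and bounds $\rho\left(\W\,\D\left(\mathbf{u}/(\mathbf{v}+\W\mathbf{u})\right)\right)$, whereas you apply the bound directly to $\A$ and carry the $t_i$ through — the resulting estimate is the same.
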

	Defining $\mathbf{t} \rightarrow \mathbf{1} / (\mathbf{1}+(\btau_a /\btau_y)\odot{\sqrt{\bb_0^2 \odot \bsigma^2 + \mathbf{W} \left(\bb^2 \odot \mathbf{z}^2\right)}})$, $\mathbf{u} \rightarrow {\bb^2 \odot \mathbf{z}^2}$ and $\mathbf{v} \rightarrow \bb_0^2 \odot \bsigma^2$, it can be seen that they satisfy the constraints of the theorem, and thus $\Ss$ is convergent. This implies that $\B$ has a \textit{convergent regular splitting} and, as a result, the linearized dynamical system is unconditionally globally asymptotically stable for all the values of parameters and inputs. Further, the global asymptotic stability of linearization implies the local asymptotic stability of the normalization fixed point for ORGaNICs. 
	
	This result holds even when the neurons have different time constants, regardless of their type, as no assumptions were made about the time constants. This finding is significant for machine learning, particularly for designing architectures based on ORGaNICs. It allows neurons/units to integrate information at varying time scales while maintaining a stable circuit that performs normalization dynamically. Moreover, analytical expressions for eigenvalues can be obtained in the following case,
	
	\begin{theorem} 
		Let $\W_r=\I$, the normalization matrix be given by $\mathbf{W} = \alpha \mathbf{E}$, where $\mathbf{E}$ is the all-ones matrix, and the parameters are scalars, i.e., $\btau_y = \tau_y \mathbf{1}$, $\btau_a = \tau_a \mathbf{1}$, $\bb_0 = b_0 \mathbf{1}$, and $\bsigma = \sigma \mathbf{1}$. Under these conditions, the eigenvalues of the system admit closed form solutions (detailed in Appendix~\ref{appendix:analytical_eig}).
	\end{theorem}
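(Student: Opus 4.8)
The plan is to exploit the strong symmetry imposed by the hypotheses to collapse both matrices of the quadratic eigenvalue problem $\mathcal{L}(\lambda) = \det(\lambda^2\I + \lambda\B + \K)$ into explicitly diagonalizable forms. The key observation is that $\mathbf{E} = \mathbf{1}\mathbf{1}^\top$ is rank one, so $\mathbf{E}\vv = (\mathbf{1}^\top\vv)\mathbf{1}$ for every vector $\vv$. Applying this to $\av_s = \bb_0^2\odot\bsigma^2 + \W(\bb^2\odot\z^2)$ with the scalar choices for $\bb_0$ and $\bsigma$ shows that $\av_s = a_s\mathbf{1}$ is a \emph{constant} vector, where $a_s = b_0^2\sigma^2 + \alpha\sum_i b_i^2 z_i^2$. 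First I would substitute this into $\K$; since $\sqrt{\av_s}$, $\btau_y$, and $\btau_a$ are all constant multiples of $\mathbf{1}$, the stiffness matrix reduces to the scalar matrix $\K = k\I$ with $k = \sqrt{a_s}/(\tau_y\tau_a)$.

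Next I would reduce the damping matrix $\B$ to a rank-one perturbation of a scalar matrix. The terms $\D(\mathbf{1}/\btau_a)$ and $\D(\sqrt{\av_s}/\btau_y)$ are already scalar multiples of $\I$ and combine into $\beta\I$ with $\beta = 1/\tau_a + \sqrt{a_s}/\tau_y$. For the recurrent term I would use $\y_s^2 = (\bb^2\odot\z^2)/a_s$ together with $\W = \alpha\mathbf{E}$ to write $\D(\mathbf{1}/\btau_a)\,\W\,\D(\y_s^2) = \tfrac{\alpha}{\tau_a a_s}\,\mathbf{1}(\bb^2\odot\z^2)^\top = \mathbf{1}\mathbf{p}^\top$, a rank-one matrix with $\mathbf{p} = \tfrac{\alpha}{\tau_a a_s}(\bb^2\odot\z^2)$. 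Hence $\B = \beta\I - \mathbf{1}\mathbf{p}^\top$.

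With these reductions the characteristic polynomial becomes $\mathcal{L}(\lambda) = \det\!\big(q(\lambda)\I - \lambda\,\mathbf{1}\mathbf{p}^\top\big)$ where $q(\lambda) = \lambda^2 + \beta\lambda + k$. The matrix inside is a scalar matrix minus a rank-one term, so its spectrum is explicit: the null space of $\mathbf{1}\mathbf{p}^\top$ (dimension $n-1$) contributes the eigenvalue $q(\lambda)$, while the eigenvector $\mathbf{1}$ contributes $q(\lambda) - \lambda\,\mathbf{p}^\top\mathbf{1}$. Equivalently, by the matrix determinant lemma, $\mathcal{L}(\lambda) = q(\lambda)^{\,n-1}\big(q(\lambda) - \lambda P\big)$ with $P = \mathbf{p}^\top\mathbf{1} = \tfrac{\alpha}{\tau_a a_s}\sum_i b_i^2 z_i^2$. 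I would then read off the eigenvalues by the quadratic formula on each factor: the factor $q(\lambda)$ gives $\lambda = \tfrac{1}{2}\big(-\beta \pm \sqrt{\beta^2 - 4k}\big)$ with multiplicity $n-1$, and the factor $q(\lambda) - \lambda P = \lambda^2 + (\beta - P)\lambda + k$ gives $\lambda = \tfrac{1}{2}\big(-(\beta - P) \pm \sqrt{(\beta - P)^2 - 4k}\big)$.

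I do not anticipate a genuine obstacle: once the rank-one structure is recognized, the computation is essentially forced. The one step warranting care is justifying the factorization for all $\lambda$ (including the roots of $q$), which I would handle by noting that both sides are polynomials in $\lambda$ agreeing on the dense set where $q(\lambda)\neq 0$, hence identically. A secondary check worth recording is that these closed forms are consistent with the general unconditional-stability result established above, i.e.\ that $\beta>0$, $\beta - P > 0$, and $k>0$, so both quadratics have roots with negative real part.
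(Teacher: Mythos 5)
Your proposal is correct and follows essentially the same route as the paper's Appendix~\ref{appendix:analytical_eig}: exploit the rank-one structure of $\W=\alpha\mathbf{E}$ to reduce the quadratic pencil to a scalar quadratic $q(\lambda)\I$ minus a rank-one term, apply the matrix determinant lemma to factor the characteristic polynomial as $q(\lambda)^{n-1}\bigl(q(\lambda)-\lambda P\bigr)$, and solve the two quadratics. The only cosmetic difference is that the paper observes $\beta^2-4k=(1/\tau_a-\sqrt{a_s}/\tau_y)^2$ so the $(n-1)$-fold factor yields the explicit real roots $-1/\tau_a$ and $-\sqrt{a_s}/\tau_y$, whereas you leave them in quadratic-formula form; your extra remark justifying the factorization at the zeros of $q$ by polynomial continuation is a fine point the paper leaves implicit.
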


	This result is particularly useful for neuroscience as it elucidates the connection between ORGaNICs parameters and the strength and frequency of oscillatory activity. Since we followed a direct Lyapunov approach to prove Theorem~\ref{theorem:main_text} as shown in Appendix~\ref{stability-theorem-proof}, we can derive an \textit{energy} (viz., Lyapunov function) for ORGaNICs as shown in Appendix~\ref{sec:energy}. 
	\begin{theorem}
		When $\W_r=\I$, the \textit{energy} (Lyapunov function) minimized by ORGaNICs in the vicinity of the normalization fixed point, is given by,
		\begin{equation}
		V(\y, \av) = \sum_{i=1}^n t_i \frac{{a_s}_i}{{{y_s}_i}^2}  \left[\frac{{\tau_y}_i}{{\tau_a}_i}  \sqrt{{a_s}_i}\left(y_i - {y_s}_i\right)^2 + \left(\sqrt{a_i} y_i - \sqrt{{a_s}_i}{y_s}_i \right)^2 \right].
		\label{eq:lyapunov_energy}
		\end{equation}
		Where $t_i$ are the diagonal entries of $\lam$ and ${y_s}_i$ (${a_s}_i$) are the steady-state values of neurons $y_i$ ($a_i$).
	\end{theorem}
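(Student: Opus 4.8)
The plan is to obtain $V$ as the pullback, to the physical coordinates, of the quadratic Lyapunov function that already certifies stability of the associated damped-oscillator system $\I\ddot{\x} + \B\dot{\x} + \K\x = \mathbf{0}$ in the proof of Theorem~\ref{theorem:main_text}. For that system, with $\lam \succ 0$ the diagonal witness of Lyapunov diagonal stability of $\B$ and $\K \succ 0$ diagonal, the natural energy is $V_{\mathrm{abs}}(\x,\dot{\x}) = \dot{\x}^\top \lam \dot{\x} + \x^\top \lam \K \x$. Differentiating along trajectories and substituting $\ddot{\x} = -\B\dot{\x} - \K\x$, the cross terms cancel because $\lam$ and $\K$ are diagonal (hence symmetric and commuting), leaving $\dot{V}_{\mathrm{abs}} = -\dot{\x}^\top(\lam\B + \B^\top\lam)\dot{\x} \le 0$; positive definiteness is immediate since $\lam, \lam\K \succ 0$, and asymptotic stability follows by LaSalle's principle (on $\{\dot{\x} = \mathbf{0}\}$ the dynamics force $\K\x = \mathbf{0}$, hence $\x = \mathbf{0}$). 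The task is then purely to rewrite this $V_{\mathrm{abs}}$ in terms of $\y$ and $\av$.

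First I would set up the reduction from the linearized first-order circuit to this second-order form. Linearizing Eq.~\ref{eq:full-simplified} about $(\y_s,\av_s)$ gives $\dot{\delta\y} = \J_{11}\delta\y + \J_{12}\delta\av$ and $\dot{\delta\av} = \J_{21}\delta\y + \J_{22}\delta\av$ with the blocks read off from $\J$. Because $\J_{11}$ and $\J_{12}$ are diagonal, I would solve the first equation for $\delta\av = \J_{12}^{-1}(\dot{\delta\y} - \J_{11}\delta\y)$, differentiate it, and substitute the second equation to obtain a closed second-order ODE for $\delta\y$. A diagonal change of variables $\x = \mathbf{G}^{-1}\delta\y$ with $\mathbf{G} = \J_{12}$ then maps this ODE onto $\I\ddot{\x} + \B\dot{\x} + \K\x = \mathbf{0}$: the diagonal damping pieces $\B_1, \B_2$ are invariant under conjugation by the diagonal $\mathbf{G}$, while the $\W$-dependent piece is conjugated back into the stated $\D(\mathbf{1}/\btau_a)\,\W\,\D(\y_s^2)$, recovering $\B$ exactly.

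The crucial and least routine step is verifying that the two $\W$-dependent contributions to the effective stiffness—one arising from $\mathbf{G}\J_{22}\mathbf{G}^{-1}\J_{11}$ and one from $-\J_{12}\J_{21}$—cancel identically, so that $\K$ collapses to the positive diagonal matrix $\D(\sqrt{\av_s}/(\btau_y\odot\btau_a))$ of the theorem. This is the main obstacle: it is an exact cancellation that hinges on the precise steady-state values $\y_s,\av_s$ of Eq.~\ref{eq:fixed_points_various}, and it is exactly what decouples the potential energy into a sum of per-neuron oscillator terms and ties the abstract witness $\lam$ to the coefficients $t_i$ in $V$.

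Finally I would pull back $V_{\mathrm{abs}} = \sum_i t_i(\dot{x}_i^2 + K_{ii} x_i^2)$. Since each $x_i$ is a fixed multiple (proportional to $\sqrt{{a_s}_i}\,\tau_{y,i}/{y_s}_i$) of $\delta y_i = y_i - {y_s}_i$, both $x_i^2$ and $\dot{x}_i^2$ carry the common prefactor $({a_s}_i\tau_{y,i}^2/{y_s}_i^2)$. For the velocity term I would invoke the \emph{exact} (not merely linearized) identity from the first row of Eq.~\ref{eq:full-simplified}: since $\tau_{y,i}\dot{y}_i = -\sqrt{a_i}\,y_i + b_i z_i$ and $b_i z_i = \sqrt{{a_s}_i}\,{y_s}_i$ at the fixed point, one has $\tau_{y,i}\dot{y}_i = -(\sqrt{a_i}\,y_i - \sqrt{{a_s}_i}\,{y_s}_i)$, turning $\dot{x}_i^2$ into the $(\sqrt{a_i}y_i - \sqrt{{a_s}_i}{y_s}_i)^2$ term. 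Substituting $K_{ii} = \sqrt{{a_s}_i}/(\tau_{y,i}\tau_{a,i})$ into the potential term produces the $(\tau_{y,i}/\tau_{a,i})\sqrt{{a_s}_i}(y_i-{y_s}_i)^2$ term, and absorbing the common positive constant into $t_i$ yields precisely Eq.~\ref{eq:lyapunov_energy}, with positive definiteness and $\dot{V}\le 0$ inherited from $V_{\mathrm{abs}}$.
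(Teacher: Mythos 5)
Your proposal is correct and follows essentially the same route as the paper's Appendix~\ref{sec:energy}: reduce the linearization to the damped-oscillator form $\I\ddot{\x}+\B\dot{\x}+\K\x=\mathbf{0}$ already analyzed in Appendix~\ref{stability-theorem-proof}, take the quadratic Lyapunov function $\x^\top\lam\K\x+\vv^\top\lam\vv$ there, and pull it back to $(\y,\av)$ via a diagonal rescaling of $\y-\y_s$ (the paper packages this as an explicit block similarity $\U$ with $\A=\U^{-1}\B\U$, whereas you eliminate $\delta\av$ directly; the stiffness cancellation you flag is the same computation that already yields the diagonal $\K$ in Eq.~\ref{eq:determinant_organics}). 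Your identification of the velocity term through the exact identity $\tau_{y,i}\dot{y}_i=-(\sqrt{a_i}y_i-\sqrt{{a_s}_i}{y_s}_i)$ is equivalent to (and slightly cleaner than) the paper's first-order Taylor expansion of $\sqrt{a_i}y_i$ about the fixed point, and your LaSalle argument replaces the paper's small-$\epsilon$ cross term without changing the conclusion.
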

	Specifically, for a two-dimensional model (one $y$ neuron and one $a$ neuron) this expression simplifies to reveal that ORGaNICs behave like a damped harmonic oscillator with \textit{energy},
	\begin{equation}
	V(y,a) = \frac{\tau_y}{\tau_a} \sqrt{b_0^2 \sigma^2 + w b^2 z^2} \, \left(y-\frac{b z}{\sqrt{b_0^2 \sigma^2 + w b^2 z^2}}\right)^2 + (\sqrt{a} y - b z)^2
	\end{equation}
	This result demonstrates that ORGaNICs minimize the residual of the instantaneously reconstructed gated input drive ($\sqrt{a} y - b z$), while also ensuring that the principal neuron's response, $y$, achieves DN. The balance between these objectives is governed by the parameters and the external input strength. With fixed parameters, weaker inputs, $z$, cause the model to prioritize input matching over normalization, whereas stronger inputs increasingly engage the normalization objective. 
	
	\section{Stability analysis for arbitrary recurrent weights}
	\label{sec:wr-not-identity}
	
	\begin{figure}[t]
		\centering
		\includegraphics[width=\linewidth]{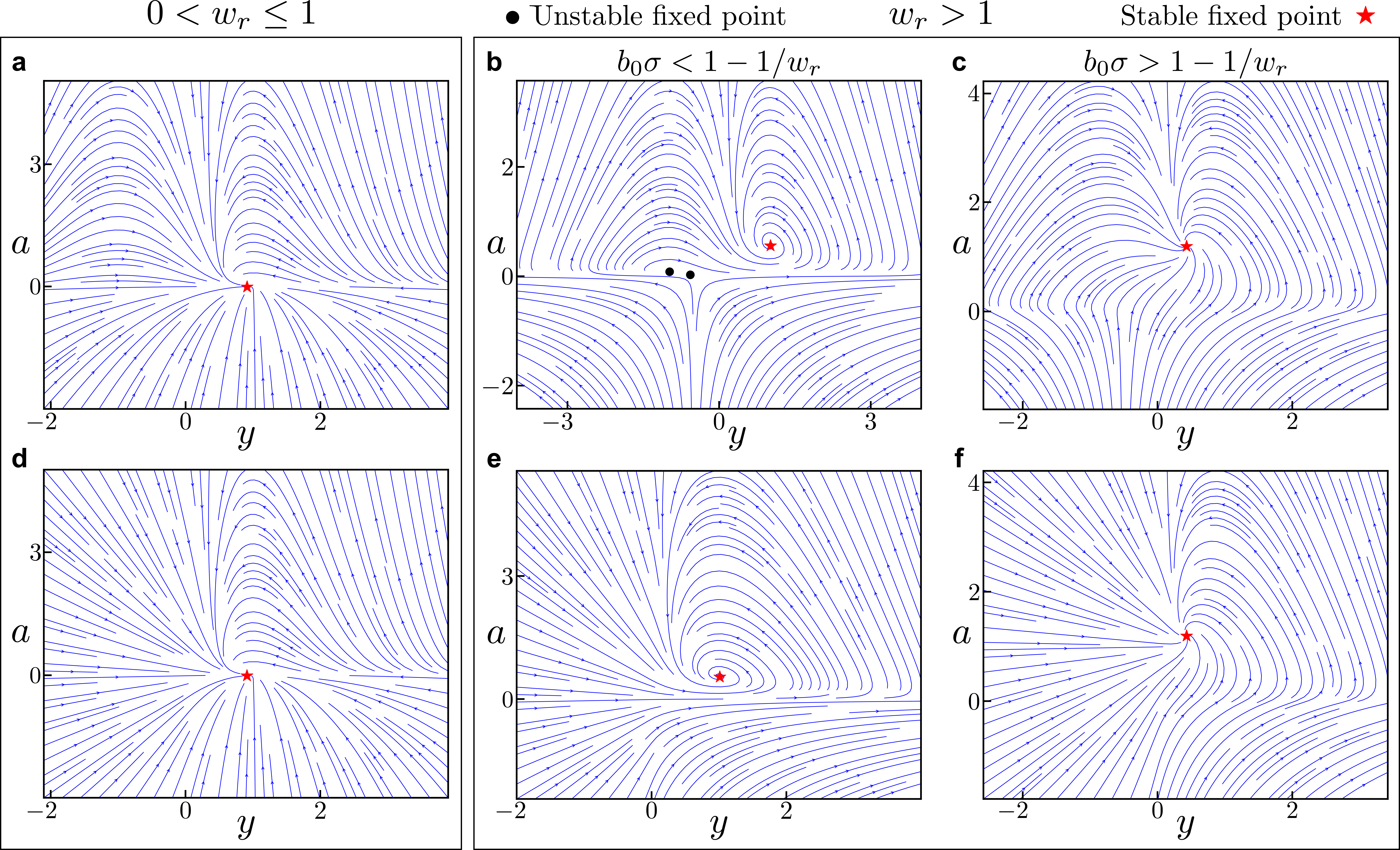}
		\caption{\textbf{Phase portraits for 2D ORGaNICs with positive input drive.} We plot the phase portraits of 2D ORGaNICs in the vicinity of the stable fixed points for contractive (\textbf{a, d}) and expansive (\textbf{b, c, e, f}) recurrence scalar $w_r$. A stable fixed point always exists, regardless of the parameter values. \textbf{(a-c)}, The main model (Eq.~\ref{eq:main_madel_2D_eqns}). \textbf{(d-f)}, The rectified model (Eq.~\ref{eq:2d_rectified_organics_eqn}). Red stars and black circles indicate stable and unstable fixed points, respectively. The parameters for all plots are: $b = 0.5$, $\tau_a = 2\,\text{ms}$, $\tau_y = 2\,\text{ms}$, $w = 1.0$, and $z = 1.0$. For \textbf{(a) \& (d)}, the parameters are $w_r = 0.5$, $b_0 = 0.5$, $\sigma = 0.1$; for \textbf{(b) \& (e)}, $w_r = 2.0$, $b_0 = 0.5$, $\sigma = 0.1$; and for \textbf{(c) \& (f)}, $w_r = 2.0$, $b_0 = 1.0$, $\sigma = 1.0$.}
		\label{fig:phase_portraits}
	\end{figure}
	
	Now, we relax the constraint that the recurrent weight matrix must be identity, allowing $\W_r \neq \I$, and see how the stability result changes. This leads to the following set of equations,
	\begin{equation}
	\begin{split}
	\btau_{y} \odot \dot{\y} &= - \y  +  \bb \odot \z + \left(\mathbf{1} - \sqrt{\lfloor\av\rfloor}\right)  \odot \left(\W_r \y\right) \\
	\btau_{a} \odot \dot{\av}  &= -\av + \bb_0^2 \odot \bsigma^2 + \W \, \left(\y^2 \odot\lfloor \av \rfloor\right)
	\end{split} \label{eq:wr_not_identity}
	\end{equation}
	The linear stability analysis becomes intractable for a general $\W_r$ because we no longer have a closed-form analytical expression for the steady states of $\y$ and $\av$. Additionally, the characteristic polynomial cannot be expressed in a way similar to  Eq.\ref{eq:determinant_organics}. Nevertheless, for a two-dimensional system,
	\begin{equation}
	\begin{split}
	\tau_{y} \dot{y} &= -y +  b  z + \left(1 - \sqrt{\lfloor a\rfloor}\right)w_r y \\
	\tau_{a} \dot{a} &= -a + b_0^2 \sigma^2 + w y^2 \lfloor a\rfloor 
	\end{split}
	\label{eq:main_madel_2D_eqns}
	\end{equation}
	we can prove the following, with a detailed analysis provided in Appendix~\ref{sec:wr-not-identity-2d}.
	\begin{theorem}
		\label{theorem:recurrence_contractive}
		Given that the recurrence is contracting, i.e., $0 < w_r \leq 1$, when $z>0$ ($z<0$) there exists a unique fixed point with $y_s>0$ ($y_s<0$) and $a_s>0$, and it is asymptotically stable.
	\end{theorem}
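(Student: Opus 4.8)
The plan is to reduce the fixed-point conditions of the planar system (Eq.~\ref{eq:main_madel_2D_eqns}) to a single scalar equation, establish existence and uniqueness by a monotonicity argument, and then certify asymptotic stability through the trace--determinant (Routh--Hurwitz) criterion for planar systems. First I would observe that any fixed point must have $a_s > 0$: setting the right-hand side of the $a$-equation to zero gives $a_s = b_0^2 \sigma^2 + w y_s^2 \lfloor a_s \rfloor$, and for $a_s \le 0$ the rectification kills the last term, forcing $a_s = b_0^2\sigma^2 > 0$, a contradiction. Hence the rectification is inactive near any fixed point, $\lfloor a \rfloor = a$ and $\sqrt{\lfloor a \rfloor}=\sqrt{a}$, the vector field is smooth there, and the indirect method of Lyapunov applies.

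Next, writing $s = \sqrt{a_s}>0$, the $y$-equation gives $y_s = bz/(1 - w_r + w_r s)$. Because $0 < w_r \le 1$ and $s>0$, the denominator satisfies $1 - w_r + w_r s > 0$, so $y_s$ inherits the sign of $z$; thus $z>0$ ($z<0$) forces $y_s>0$ ($y_s<0$) and no fixed point with $y_s=0$ can occur. The $a$-equation then reads $s^2(1 - w y_s^2) = b_0^2\sigma^2 > 0$, which crucially yields the single inequality $w y_s^2 < 1$ at any fixed point. Substituting $y_s$ reduces the problem to $f(s) = g(s)$ with $f(s) = w b^2 z^2/(1 - w_r + w_r s)^2$ strictly decreasing and $g(s) = 1 - b_0^2\sigma^2/s^2$ strictly increasing on $(0,\infty)$. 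Since $f-g$ is therefore strictly decreasing, with $f-g>0$ at $s=b_0\sigma$ (where $g=0<f$) and $f-g\to-1$ as $s\to\infty$, the intermediate value theorem delivers exactly one root $s^\star > b_0\sigma$, producing the unique fixed point with the claimed signs.

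For stability I would compute the Jacobian at $(y_s,a_s)$, whose entries are
\begin{equation}
J_{11} = \frac{-1 + (1-\sqrt{a_s})w_r}{\tau_y}, \quad J_{12} = \frac{-w_r y_s}{2\sqrt{a_s}\,\tau_y}, \quad J_{21} = \frac{2 w y_s a_s}{\tau_a}, \quad J_{22} = \frac{-1 + w y_s^2}{\tau_a}.
\end{equation}
The contracting hypothesis $w_r\le 1$ together with $a_s>0$ makes $J_{11}<0$, while $w y_s^2<1$ (derived above) makes $J_{22}<0$, so $\Tr J<0$. For the determinant, the excitatory--inhibitory coupling supplies a negative-feedback sign pattern: $J_{12}J_{21} = -w_r w y_s^2 \sqrt{a_s}/(\tau_y\tau_a) < 0$, and this holds independently of the sign of $y_s$ since it depends only on $y_s^2$. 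As $J_{11}J_{22}>0$ is a product of two negatives, $\det J = J_{11}J_{22} - J_{12}J_{21}>0$. A negative trace and positive determinant place both eigenvalues in the open left half-plane, so by the linearization (Hartman--Grobman) theorem the fixed point is locally asymptotically stable, uniformly for $z>0$ and $z<0$.

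The genuinely delicate step is the global uniqueness claim rather than the stability itself: stability follows almost mechanically once the inequality $wy_s^2<1$ is extracted from the $a$-equation and the negative-feedback sign structure of the coupling is recognized. Establishing uniqueness instead hinges on the reduction to $f=g$ and on the observation that the two sides are strictly monotone in opposite directions, with care taken to confine the root to the physical region $s>b_0\sigma$ where $y_s^2>0$ and to exclude spurious fixed points in the rectified regime $a\le0$. I would finally note that $w_r\le1$ enters only to sign $J_{11}$ and the denominator $1-w_r+w_r s$, which pinpoints exactly why the expansive case $w_r>1$ requires the separate treatment given in the following analysis.
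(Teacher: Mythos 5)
Your proof is correct, and the existence--uniqueness part takes a genuinely different route from the paper's. The paper substitutes the $y$-equation into the $a$-equation to obtain a quartic polynomial in $m=\sqrt{a_s}$ and invokes Descartes' rule of signs: for $0<w_r\le 1$ the coefficient sign pattern $(+,+,\pm,-,-)$ has exactly one change, hence exactly one positive root, with auxiliary lemmas needed to show that positive roots automatically satisfy $m>b_0\sigma$ and hence yield valid fixed points (and a separate remark to cover $w_r=1$, where some coefficients vanish). You instead rewrite the same scalar equation as an intersection $f(s)=g(s)$ of a strictly decreasing and a strictly increasing function and apply the intermediate value theorem; this is more elementary, treats $w_r=1$ with no case split, and localizes the root to $s>b_0\sigma$ for free since $g\le 0<f$ on $(0,b_0\sigma]$. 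What the paper's heavier machinery buys is uniformity with the expansive case: for $w_r>1$ the denominator $1-w_r+w_r s$ vanishes at $s=1-1/w_r>0$, so your $f$ is no longer monotone on $(0,\infty)$ and the intersection argument breaks down, whereas the quartic/Descartes framework carries over directly to count the one-or-three roots of Theorem~\ref{theorem:recurrence_expansive} --- exactly the point you flag at the end. The stability half of your argument (sign the Jacobian entries using $w_r\le 1$, $a_s>0$, and the extracted inequality $wy_s^2<1$ to get $\Tr(\mathbf{J})<0$ and $\det(\mathbf{J})>0$) is in substance identical to the paper's computation in Eqs.~\ref{eq:trace_condition} and \ref{eq:det_condition}.
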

	
	\begin{theorem}
		\label{theorem:recurrence_expansive}
		Given that the recurrence is expansive, i.e., $w_r > 1$, there are either 1 or 3 fixed points of which at least one is asymptotically stable. When $z>0$ ($z<0$) there exists exactly 1 fixed point with $y_s>0$ ($y_s<0$) and $a_s>0$, and it is asymptotically stable. If $b_0 \sigma > 1 - 1/w_r$, there are no additional fixed points. If $b_0 \sigma < 1 - 1/w_r$, there exist either $0$ or $2$ additional fixed points with $y_s<0$ ($y_s>0$) and $a_s>0$ whose stability cannot be guaranteed.
	\end{theorem}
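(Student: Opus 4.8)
My plan is to reduce the search for fixed points to a single scalar equation and then analyze it by monotonicity and convexity. Assume $z>0$ (the case $z<0$ follows by the symmetry $y\mapsto -y$). First I would note that every fixed point has $a_s>0$: if $a_s\le 0$ then $\lfloor a_s\rfloor=0$ and the second equation of Eq.~\ref{eq:main_madel_2D_eqns} forces $a_s=b_0^2\sigma^2>0$, a contradiction. Hence $\lfloor a\rfloor=a$ near any fixed point and the vector field is smooth there. Writing $s=\sqrt{a_s}>0$ and $\beta=w_r-1>0$, the stationarity conditions give $y_s=bz/(w_r s-\beta)$ and $y_s^2=(1-b_0^2\sigma^2/s^2)/w$. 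Eliminating $y_s$ reduces the problem to finding $s$ with $f(s)=g(s)$, where $f(s)=b^2z^2/(w_r s-\beta)^2$ and $g(s)=(1-b_0^2\sigma^2/s^2)/w$; since $f>0$, a valid root additionally requires $g(s)>0$, i.e.\ $s>b_0\sigma$, and each such root yields exactly one fixed point $(y_s,a_s)=(bz/(w_r s-\beta),\,s^2)$.

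Next I would study $f$ and $g$ on $(b_0\sigma,\infty)$, observing that $f$ has a pole at $s=\beta/w_r=1-1/w_r$, precisely the threshold in the statement. On this domain $g$ increases from $0$ to $1/w$ and is concave ($g''<0$), while $f$ is convex ($f''>0$) on each side of its pole. Hence $\phi:=f-g$ has $\phi''=f''-g''>0$, so $\phi$ is strictly convex wherever it is defined. On the branch $s>\max(b_0\sigma,\beta/w_r)$ (``Region B'') one also has $\phi'=f'-g'<0$, so $\phi$ decreases strictly from a positive value (or $+\infty$) down to $-1/w$, giving exactly one root. That root satisfies $w_r s-\beta>0$ and therefore $y_s>0$: it is the unique fixed point of the same sign as $z$, as claimed.

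The remaining branch $s\in(b_0\sigma,\beta/w_r)$ (``Region A'') is nonempty precisely when $b_0\sigma<1-1/w_r$. There $\phi(b_0\sigma)=f(b_0\sigma)>0$ and $\phi(s)\to+\infty$ as $s\to(\beta/w_r)^-$, so $\phi$ is positive at both ends; strict convexity then forces either $0$ or $2$ roots, each with $w_r s-\beta<0$ and hence $y_s<0$. Combined with Region B this gives a total of $1$ or $3$ fixed points, and when $b_0\sigma>1-1/w_r$ Region A is empty, leaving only the single main fixed point. I expect this counting of the opposite-sign fixed points to be the main obstacle, and the clean resolution is the global strict convexity of $\phi=f-g$, which caps Region A at two roots and pins the bifurcation exactly at $b_0\sigma=1-1/w_r$; some care is also needed in reading off the endpoint signs near the pole of $f$.

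Finally I would settle stability through the $2\times2$ Jacobian of Eq.~\ref{eq:main_madel_2D_eqns}. Using the stationarity relations, the diagonal entries simplify to $J_{11}=-(w_r s-\beta)/\tau_y$ and $J_{22}=-b_0^2\sigma^2/(a_s\tau_a)<0$, while the off-diagonal product gives $-J_{12}J_{21}=w_r w\,y_s^2\sqrt{a_s}/(\tau_y\tau_a)>0$. For the main fixed point $w_r s-\beta>0$, so $J_{11}<0$: the trace is negative and the determinant $J_{11}J_{22}-J_{12}J_{21}$ is a sum of two positive terms, hence positive, yielding eigenvalues with negative real part and local asymptotic stability by the indirect method of Lyapunov. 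For an additional fixed point $w_r s-\beta<0$ makes $J_{11}>0$, so that $J_{11}J_{22}<0$ now competes with the positive $-J_{12}J_{21}$ and both the determinant and the trace are sign-indeterminate, which is exactly why stability cannot be guaranteed for those points.
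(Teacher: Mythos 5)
Your proof is correct and reaches all the conclusions of the theorem, but the root-counting step takes a genuinely different route from the paper. The paper eliminates $y_s$ to obtain a quartic polynomial in $m=\sqrt{a_s}$ (Eq.~\ref{eq:quartic_poly}) and counts positive roots via Descartes' Rule of Signs, then isolates the unique root with $m>1-1/w_r$ by the substitution $m\rightarrow \hat{m}+(1-1/w_r)$ and a second application of Descartes; the remaining $0$ or $2$ roots are then located in $(b_0\sigma,\,1-1/w_r)$ by elimination. You instead keep the fixed-point condition in the form $f(s)=g(s)$ with $f(s)=b^2z^2/(w_r s-\beta)^2$ convex and $g(s)=(1-b_0^2\sigma^2/s^2)/w$ increasing and concave, split the domain at the pole $s=\beta/w_r=1-1/w_r$, and count intersections by strict monotonicity of $f-g$ on the outer branch and strict convexity plus positive endpoint values on the inner branch. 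The two arguments are equally rigorous, and your endpoint signs and derivative signs all check out against the paper's formulas (your $J_{11}$, $J_{22}$, and $-J_{12}J_{21}$ reproduce Eqs.~\ref{eq:trace_condition_2} and \ref{eq:det_condition_2} exactly, and the stability/indeterminacy conclusions coincide). What your approach buys is geometric transparency: the threshold $b_0\sigma=1-1/w_r$ appears immediately as the condition for the pole of $f$ to lie inside the admissible interval $s>b_0\sigma$, rather than emerging from a sign pattern of quartic coefficients, and no polynomial expansion is needed. What the paper's approach buys is that Descartes' rule counts roots with multiplicity mechanically and handles the $\pm$ ambiguity in the $m^2$ coefficient uniformly. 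One shared caveat: in the degenerate tangency case where $f-g$ has a double root on the inner branch, there is exactly one additional fixed point rather than $0$ or $2$; your convexity argument and the paper's Descartes count (which tallies multiplicity) both gloss over this measure-zero case in the same way, so it is not a defect of your proof relative to the paper's.
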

	
	We plot the phase portraits for these different cases in Fig.~\ref{fig:phase_portraits}. The key takeaway is that there is always a fixed point $(y_s, a_s)$ with $a_s > 0$ and $y_s$ having the same sign as $z$. This fixed point is asymptotically stable regardless of the value of $w_r$. Based on these results and the proven stability of arbitrary dimensional ORGaNICs when $\W_r = \I$ (as shown in Section~\ref{sec:high-dim-stability}), we conjecture that 
	\begin{conjecture}
		\label{th:conjecture}
		Consider high-dimensional ORGaNICs with an arbitrary recurrent weight matrix $\W_r$ and no constraints on the remaining parameters. If the norm of the input drive satisfies $||\z|| \leq 1$, and the maximum singular value of $\W_r$ is constrained to be 1, then the system possesses at least one asymptotically stable fixed point.
	\end{conjecture}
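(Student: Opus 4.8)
The plan is to split the claim into two parts, existence of a fixed point and asymptotic stability of one of them, and to use the fully-understood identity case $\W_r=\I$ as an anchor. First I would construct a compact convex forward-invariant region $\Omega\subset\R^{2n}$ and apply a standard consequence of Brouwer's theorem—every continuous flow on a nonempty compact convex set admits an equilibrium—to obtain a fixed point inside $\Omega$. The $\av$-block is the easy part: nonnegativity of $\W$ together with $\lfloor\av\rfloor\ge\mathbf{0}$ means the drive $\bb_0^2\odot\bsigma^2+\W(\y^2\odot\lfloor\av\rfloor)$ keeps $\av$ in the positive orthant and bounds it above once $\y$ is bounded. Bounding $\y$ is where the hypotheses enter: in the diagonal metric $\Pb=\D(\btau_y)$ one has $\tfrac{1}{2}\tfrac{d}{dt}\|\y\|_{\Pb}^2=\y^\top\big(-\y+\bb\odot\z+\D(\mathbf{1}-\sqrt{\lfloor\av\rfloor})\W_r\y\big)$, and the constraint $\sigma_{\max}(\W_r)\le1$ renders the recurrent operator non-expansive while $\|\z\|\le1$ caps the forcing, which should deliver ultimate boundedness and hence the region $\Omega$.

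Second, to promote existence to stability I would run a homotopy from the identity case. Set $\W_r(s)=(1-s)\I+s\W_r$ for $s\in[0,1]$; since the operator norm is convex and both endpoints satisfy $\sigma_{\max}\le1$, every $\W_r(s)$ obeys $\sigma_{\max}(\W_r(s))\le1$, so $\Omega$ and the input bound survive along the whole path. At $s=0$ the circuit has a unique normalization fixed point whose Jacobian has spectrum strictly in the open left half-plane (Section~\ref{sec:high-dim-stability}), hence it is hyperbolic. I would track it as a smooth branch $(\y_s(s),\av_s(s))$ by the implicit function theorem, valid wherever $\J(s)$ stays nonsingular, and observe that its stability can only be lost if an eigenvalue of $\J(s)$ reaches the imaginary axis.

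Third, and this is the crux, I would have to exclude such a crossing—no saddle-node (a real eigenvalue hitting $0$, equivalently $\det\J(s)=0$) and no Hopf bifurcation (a conjugate pair reaching $\mathrm{Re}(\lambda)=0$)—on the tracked branch. The principal obstacle is that for $\W_r\ne\I$ both structural tools of Section~\ref{sec:high-dim-stability} fail: the fixed point is no longer known in closed form, and the blocks of $\J-\lambda\I$ no longer commute, so the characteristic polynomial does not collapse to the quadratic-eigenvalue form of Eq.~\ref{eq:determinant_organics} and the clean nonsingular-$M$-matrix argument is unavailable. I would therefore try to localize the spectrum of $\J(s)$ directly from $\Omega$: on the invariant region the gate $\mathbf{1}-\sqrt{\av}$ and the branch magnitudes $\y_s(s),\av_s(s)$ are uniformly bounded in terms of $\|\z\|$ and $\sigma_{\max}(\W_r)$, which one could attempt to convert into a left-half-plane enclosure via a field-of-values or Gershgorin estimate.

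Cleanest of all would be to find a single contraction metric $\M(s)\succ0$ making the symmetric part of $\M(s)\J(s)+\J(s)^\top\M(s)$ negative definite uniformly on $\Omega$ and in $s$; such a metric would simultaneously furnish existence, uniqueness, and global stability within $\Omega$, subsuming the Brouwer step and closing the continuation at $s=1$. The hard part, and presumably the reason the statement is posed as a conjecture, is constructing this metric in the presence of the sign-indefinite gating factor $\mathbf{1}-\sqrt{\av}$, whose negative values make the $\y$-subsystem locally expansive and defeat the monotone $M$-matrix structure that made the $\W_r=\I$ analysis tractable.
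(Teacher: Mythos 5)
This statement is posed in the paper as a conjecture, not a theorem: the paper offers no proof, only the two proven limiting cases (arbitrary dimension with $\W_r=\I$, Section~\ref{sec:high-dim-stability}, and the 2D circuit with arbitrary $w_r$, Appendix~\ref{sec:wr-not-identity-2d}) together with empirical evidence that randomly sampled circuits satisfying the stated constraints were stable in all trials. Your proposal is therefore not in competition with any proof in the paper; it is a candidate research program, and you correctly identify that it does not close. Two of its gaps you name yourself (excluding fold and Hopf crossings of the Jacobian spectrum along the homotopy, and constructing a uniform contraction metric), so the honest verdict is that your proposal, like the paper, leaves the statement open.

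Beyond the acknowledged gaps, one step as written is actually incorrect and would need repair even to secure the existence half. In the dissipation estimate $\tfrac{1}{2}\tfrac{d}{dt}\|\y\|_{\Pb}^{2}=\y^{\top}\bigl(-\y+\bb\odot\z+\D(\mathbf{1}-\sqrt{\lfloor\av\rfloor})\W_{r}\y\bigr)$, the recurrent operator is not $\W_{r}$ but the gated operator $\D(\mathbf{1}-\sqrt{\lfloor\av\rfloor})\W_{r}$, and $\sigma_{\max}(\W_{r})\le 1$ does not make this non-expansive: the gain $1-\sqrt{a_i}$ has modulus exceeding $1$ whenever $a_i>4$, and $\av$ is not a priori confined to $[0,4]^n$. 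Your upper bound on $\av$ in turn presupposes a bound on $\y$, so the two estimates are circular and ultimate boundedness does not follow as stated; you would need either a joint Lyapunov function for $(\y,\av)$ or a sign argument exploiting that a large $a_i$ makes the gate strongly negative, which is stabilizing only when $(\W_{r}\y)_i$ and $y_i$ share a sign — not guaranteed for general $\W_{r}$. Note also that the paper's own 2D analysis shows saddle-node creation of additional fixed points as $w_r$ increases past $1$, so your continuation argument genuinely depends on the singular-value constraint to keep the tracked branch away from folds; no mechanism for this is known in dimension greater than two, and that is precisely why the statement remains a conjecture.
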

	
	This conjecture is supported by empirical evidence showing consistent stability, as ORGaNICs initialized with random parameters and inputs under these constraints have exhibited stability in 100\% of trials, see Fig.~\ref{fig:iteration_combined}. We further speculate that ORGaNICs may be \textit{typically} stable beyond this regime as we find that 100\% of trials yield a stable circuit when the constraint on the maximum singular value of $\W_r$ is increased to 2, but it becomes unstable when it is increased to 3.

	\section{Experiments}
	\label{sec:experiments}
	
	We provide further empirical evidence in support of Conjecture~\ref{th:conjecture} that ORGaNICs is asymptotically stable by showing that stability is preserved when training ORGaNICs using na\"ive BPTT on two different tasks: 1) static classification of MNIST, 2) sequential classification of pixel-by-pixel MNIST. Because these ML tasks have no relevance for neurobiological or cognitive processes, we relax one aspect of the biological plausibility of ORGaNICs, specifically, allowing arbitrary (learned) nonnegative values for the intrinsic time constants.\footnote{Python code for this study is available at \href{https://github.com/martiniani-lab/dynamic-divisive-norm}{https://github.com/martiniani-lab/dynamic-divisive-norm}.}
	
	\subsection{Static input classification task}
	\setlength{\intextsep}{5pt} % Adjust the space above and below the wrapped table
	\begin{wraptable}{r}{0.41\textwidth}
		\centering
		\caption{Test accuracy on MNIST dataset}
		\label{tab:MNIST_results}
		\begin{tabular}{@{}cc@{}}
			\toprule
			Model & Accuracy \\ \midrule
			SSN (50:50) & 94.9\%  \\
			SSN (80:20) & 95.2\%  \\
			MLP (50) & 98.2\% \\ \midrule
			\textbf{ORGaNICs} (50:50) & 98.1\%  \\ 
			\textbf{ORGaNICs} (80:80) & 98.2\%  \\ 
			\textbf{ORGaNICs} (two layers) & 98.1\%  \\ 
			\bottomrule
		\end{tabular}
	\end{wraptable}
	\setlength{\intextsep}{6pt}
	We first show that we can train ORGaNICs on the MNIST handwritten digit dataset \cite{deng2012mnist} presented to the circuit as a static input. This setting corresponds to evolving the responses of the neurons dynamically until they reach a fixed point solution and using the steady-state firing rates of the principal neurons to predict the labels, akin to deep equilibrium models \cite{bai2019deep}. While the fixed point of the circuit is known when $\W_r = \I$ (given by Eq.~\ref{eq:rectified_model_ss}), we allow $\W_r$ to be learnable and parameterized it to have a maximum singular value of 1. This constraint allows us to find the fixed point responses of all the neurons without simulation, using a fixed point iteration scheme (Algorithm~\ref{algo:iterative_main}) that converges with great accuracy in a few (less than 5) steps, see Fig.~\ref{fig:iteration_combined} \& \ref{fig:eigenvals_mnist}. We provide an intuition for why this algorithm works with empirical evidence of fast convergence in Appendix~\ref{sec:iterative}. %Constraining the maximum singular value to 1 yields a simpler iterative scheme given by Algorithm~\ref{algo:iterative_main}.
	
	\begin{algorithm}[t]
		\caption{Iterative scheme for the fixed point when the maximum singular value of $\W_r$ is 1} \label{algo:iterative_main}
		\begin{algorithmic}[1]
			\State \textbf{Input:} ORGaNICs parameters, input $(\z)$, tolerance $(\epsilon )$, maximum iterations $( N )$
			\State \textbf{Output:} Approximation to the fixed point \( (\y_s, \av_s) \)
			\State \(
			\av \gets \bsigma^2 \odot \mathbf{b}_0^2 + \mathbf{W}\, \left(\mathbf{W}_r \left(\mathbf{b} \odot \mathbf{z}\right)\right)^2  
			\) \hfill // initial approximation for $a$ neurons
			\State 
			\(
			\y \gets \left(\mathbf{W}_r \left(\mathbf{b} \odot \mathbf{z}\right)\right) / \sqrt{\av}
			\) \hfill // initial approximation for $y$ neurons
			\State \( k \gets 0 \)
			\While{\( ||\mathbf{y} - \mathbf{b} \odot \mathbf{z} - \left(\mathbf{1} - \sqrt{\mathbf{a}}\right) \odot \left(\mathbf{W}_r \mathbf{y}\right) || > \epsilon \) and \( k < N \)}
			\State \( \y \gets \left(\mathbf{I} - \mathbf{W}_r + \mathbf{D}\left(\sqrt{\mathbf{a}}\right)\mathbf{W}_r \right)^{-1} \left(\mathbf{b} \odot \mathbf{z}\right) \) \hfill // $y$ neurons update
			\State \( \av \gets \mathbf{b}_0^2 \odot \bsigma^2 + \mathbf{W}\, \left(\mathbf{y}^2*\mathbf{a}\right) \) \hfill // $a$ neurons update
			\State \( k \gets k + 1 \)
			\EndWhile
			\State \textbf{return} \( (\y, \av) \)
		\end{algorithmic}
	\end{algorithm}
	
	We trained ORGaNICs on this task (details provided in Appendix~\ref{appendix:training_MNIST}) and compared its performance to SSN \cite{rubin2015stabilized} trained by dynamics-neutral growth \cite{soo2022training}. We found that ORGaNICs perform better than SSN with the same model size, and on par with an MLP (Table~\ref{tab:MNIST_results}). We analyzed the eigenvalues of the Jacobian matrix of the trained model and consistently found the largest real part to be negative (Fig.~\ref{fig:eigenvals_mnist}), indicating stability. Moreover, we found that stability was maintained during training (Fig.~\ref{fig:stability_during_training}).

	% We train ORGaNICs in a way similar\sm{this suggests that you use a strategy like the one that Lengyel used, but you didn't. You used vaniall BPTT. Rephrase.} to \cite{soo2022training} (details provided in Appendix~\ref{appendix:training_MNIST}) and perform better with the same model size. We further demonstrate the stability of the system by plotting the eigenvalue with the largest real part for different test inputs \ref{fig:eigenvals_mnist}.

	\subsection{Time varying input}
	We trained unconstrained ORGaNICs by na\"ive BPTT on a classification task of sequential MNIST (sMNIST), proposed by Le et al.~\cite{le2015simple}. This is a challenging task because it involves long-term dependencies and requires the architecture to maintain and integrate information over long timescales. Briefly, the task involves the presentation of pixels of MNIST images sequentially (one pixel for each timestep) in scanline order, and at the end of the input the model has to predict the digit that was presented. There is a more complicated version of this task, permuted sequential MNIST, in which the pixels of all images are permuted in some random order before being presented sequentially. We train ORGaNICs with different hidden layer sizes (number of $\y$ neurons) on these two tasks by discretizing the rectified ORGaNICs with arbitrary recurrence, Eq.~\ref{eq:rectified_model}, which has all the properties that we have derived for the main model. Since an unstable fixed point is undesirable in such a task, as it may lead to diverging trajectories, we prefer the rectified model (Appendix~\ref{appendix:rectified_model}) over the main model. We proved that the 2D rectified ORGaNICs (Eq.~\ref{eq:2d_rectified_organics_eqn}) does not exhibit an unstable fixed point for positive inputs, as it can also be seen in Fig~\ref{fig:phase_portraits}. The hidden states of the neurons are initialized with a uniform random distribution (for more details, see Appendix~\ref{appendix:training_sMNIST}). Additionally, we make the input gains $\bb$ and $\bb_0$ dynamical with their ODEs given by,
	\begin{equation}
	\begin{split}
	\btau_{b} \odot \dot{\bb} &= -\bb + f(\W_{bx} \x + \W_{by} \y + \W_{ba} \av) \\
	\btau_{b_0} \odot \dot{\bb_0} &= -\bb_0 + f(\W_{b_0 x} \x + \W_{b_0 y} \y + \W_{b_0 a} \av) 
	\end{split} \label{eq:rnn_gains_equations}
	\end{equation}
	We achieved slightly better performance than LSTMs on sMNIST with a smaller model size and comparable performance on permuted sMNIST, without hyperparameter optimization and without gradient clipping/scaling (Table~\ref{tab:sMNIST_results}). We found that the trajectories of $\y$ are bounded when it is trained on the sequential task (Fig.~\ref{fig:trajectories}), indicating stability. We also show that the training of ORGaNICs is stable and does not require gradient clipping when the intrinsic time constants of the neurons are fixed (Table~\ref{tab:sMNIST_results}).
	\begin{table}[t]
		\centering
		\caption{Test accuracy on sequential pixel-by-pixel MNIST and permuted MNIST}
		\label{tab:sMNIST_results}
		\begin{tabular}{@{}ccccc@{}}
			\toprule
			Model & sMNIST & psMNIST & \# units & \# params \\ \midrule
			LSTMs \cite{arjovsky2016unitary} & 97.3\% & 92.6\% & 128 & 68k \\ 
			AntisymmetricRNN \cite{chang2019antisymmetricrnn} & 98.0\% & 95.8\% & 128 & 10k \\
			coRNN \cite{rusch2020coupled} & 99.3\% & 96.6\% & 128 & 34k \\
			Lipschitz RNN \cite{erichson2020lipschitz} & 99.4\% & 96.3\% & 128 & 34k \\ \midrule
			\textbf{ORGaNICs} (fixed time constants) & 90.3\% & 80.3\% & 64 & 26k \\
			\textbf{ORGaNICs} (fixed time constants) & 94.8\% & 84.8\% & 128 & 100k \\ \midrule
			\textbf{ORGaNICs} & 97.7\% & 89.9\% & 64 & 26k \\ 
			\textbf{ORGaNICs} & 97.8\% & 90.7\% & 128 & 100k \\ 
			%\textbf{ORGaNICs} & 97.8\% & 91.4\% & 256 & 398k \\ 
			\bottomrule
		\end{tabular}
	\end{table}

	\section{Discussion}
	\textbf{Summary:} While extensive research has been aimed at identifying highly expressive RNN architectures that can model complex data, there has been little advancement in developing robust, biologically plausible recurrent neural circuits that are easy to train and perform comparably to their artificial counterparts. Regularization techniques such as batch, group, and layer normalization have been developed and are implemented as ad hoc add-ons making them biologically implausible. In this work, we bridge these gaps by leveraging the recently proposed ORGaNICs model which implements divisive normalization (DN) dynamically in a recurrent circuit. We establish the unconditional stability of an arbitrary dimensional ORGaNICs circuit with an identity recurrent weight matrix ($\W_r$), with all of the other parameters and inputs unconstrained, and provide empirical evidence of stability for ORGaNICs with arbitrary $\W_r$. Since ORGaNICs remain stable for all parameter values and inputs, we do not need to resort to techniques that are restrictive in parameter space, or that require designing unrealistic structures for weight matrices. ORGaNICs' intrinsic stability mitigates the issues of exploding and oscillating gradients, enabling the use of ``vanilla'' BPTT without the need for gradient clipping, which is instead required when training LSTMs.  Moreover, ORGaNICs effectively address the vanishing gradient problem often encountered when training RNNs. This is achieved by processing information across various timescales, resulting in a blend of lossy and non-lossy neurons, while preserving stability. The model's effectiveness in overcoming vanishing gradients is further evidenced by its competitive performance against architectures specifically designed to address this issue, such as LSTMs.

	\textbf{Dynamic normalization:} Normalization techniques, such as batch and layer normalization, are fundamental in modern ML architectures significantly enhancing the training and performance of CNNs. However, a principled approach to incorporating normalization into RNNs has remained elusive. While layer normalization is commonly applied to RNNs to stabilize training, it does not influence the underlying circuit dynamics since it is applied a-posteriori to the output activations, leaving the stability of RNNs unaffected. Furthermore, DN has been shown to generalize batch and layer normalization \cite{ren2016normalizing}, leading to improved performance \cite{miller2021divisive, ren2016normalizing, singh2020filter}. ORGaNICs, unlike RNNs with layer normalization, implement DN dynamically within the circuit, marking the first instance of this concept being applied and analyzed in ML. Our work demonstrates that embedding DN within a circuit naturally leads to stability, which is greatly advantageous for trainability. This stability, a consequence of dynamic DN, sets ORGaNICs apart from other RNNs by providing both output normalization and model robustness. As a result, ORGaNICs can be trained using BPTT, achieving performance on par with LSTMs. The key insight is that the dynamic application of DN not only enhances training efficiency but also improves model robustness. This illustrates how the incorporation of neurobiological principles can drive advances in ML.

	\textbf{Interpretability:} In the proof of stability, we establish a direct connection between ORGaNICs and systems of coupled damped harmonic oscillators, which have long been studied in mechanics and control theory. This analogy not only enables us to derive an interpretable energy function for ORGaNICs (Eq.~\ref{eq:lyapunov_energy}), providing a normative principle of what the circuit aims to accomplish, but also sheds light on the link between normalization and dynamical stability of neural circuits. For a relevant ML task, having an analytical expression for the energy function allows us to quantify the relative contributions of the individual neurons in the trained model, offering more interpretability than other RNN architectures. For instance, Eq.~\ref{eq:lyapunov_energy} shows that the ratio of time constants ($\tau_y / \tau_a$) for E-I neuron pairs determines how much weight a neuron assigns to divisive normalization relative to aligning its responses with the input drive $\mathbf{z}$. This insight provides a clear functional role for each neuron in the trained model. Moreover, since ORGaNICs are biologically plausible, we can understand how the various components of the dynamical system might be computed within a neural circuit \cite{heeger2020recurrent}, bridging the gap between theoretical models and biological implementation, and offering a means to generate and test hypotheses about neural computation in real biological systems (which we will be reporting elsewhere).

	\textbf{Limitations:} Although the stability property pertains to a continuous-time system of nonlinear differential equations, typical implementations for tasks with sequential data involve an Euler discretization of these equations for training purposes. This might lead to a stiff dynamical system, potentially causing numerical instabilities and explosive dynamics, highlighting the importance of carefully parameterizing time constants and choosing a small enough time step to maintain stable dynamics. The proof of unconditional stability is only tractable for the two-dimensional circuit and the high-dimensional circuit with $\W_r=\I$. Therefore, we can only conjecture the stability of ORGaNICs for arbitrary $\W_r$, based on these two limiting cases and on empirical evidence. In the current form, the weight matrices of the input gain modulators, $ \mathbf{W}_{by}$, $ \mathbf{W}_{ba}$, $ \mathbf{W}_{b_0 y}$, and $ \mathbf{W}_{b_0 a}$, are each $n \times n$. As a result, the number of parameters grows more rapidly with the hidden state size compared to other RNNs. To mitigate this, we plan to explore using compact and/or convolutional weights to prevent a significant increase in the number of parameters as the hidden state size expands.
	
	\textbf{Attention mechanisms in ORGaNICs:} ORGaNICs have a built-in mechanism for attention: modulating the input gain $\bb$ (e.g., Eq. \ref{eq:rnn_gains_equations}), coupled with DN. This attention mechanism aligns with experimental data on both increases in the gain of neural responses and improvements in behavioral performance \cite{beuth2015mechanistic, boynton2009framework, denison2021dynamic, herrmann2012feature, herrmann2010size, lee2009normalization, maunsell2015neuronal, ni2017spatially, ni2019neuronal, reynolds2009normalization, schwedhelm2016extended, smith2015shunting, zhang2016normalization}. Moreover, this mechanism performs a computation that is analogous to that of an attention head in ML systems (including transformers \cite{vaswani2017attention}) as both operate by changing the gain over time. In ORGaNICs, DN replaces the softmax operation typically used in an attention head.

	\textbf{Future work:} This study has explored only a single layer of ORGaNICs for the sequential tasks. Future work will examine how stacked layers with feedback connections, similar to those in the cortex, perform on benchmarks for sequential modeling and also on cognitive tasks with long-term dependencies. We have thus far shown that ORGaNICs can address the problem of long-term dependencies by learning intrinsic time constants. Future investigations will assess the performance of ORGaNICs for tasks with long-term dependencies by learning to modulate the responses of the $\av$ and $\bb$ neurons to control the effective time constant of the recurrent circuit (without changing the intrinsic time constants) \cite{heeger2019oscillatory}, i.e., implementing a working memory circuit capable of learning to maintain and manipulate information across various timescales.
	
	% \begin{ack}
	% The authors thank anonymous reviewers for their insightful suggestions. The authors also acknowledge valuable discussions with Flaviano Morone, Asit Pal, Mathias Casiulis, and Guanming Zhang. This work was supported by the National Institute of Health under award number R01EY035242. S.M. acknowledges the Simons Center for Computational Physical Chemistry for financial support. This work was supported in part through the NYU IT High Performance Computing resources, services, and staff expertise.
	% \end{ack}
	
	\begin{ack}
		The authors acknowledge valuable discussions with Flaviano Morone, Asit Pal, Mathias Casiulis, and Guanming Zhang. This work was supported by the National Institute of Health under award number R01EY035242. S.M. acknowledges the Simons Center for Computational Physical Chemistry for financial support. This work was supported in part through the NYU IT High-Performance Computing resources, services, and staff expertise.
	\end{ack}

	\bibliographystyle{unsrt}
	\bibliography{bibliography}

\begin{thebibliography}{10}

\bibitem{krizhevsky2012imagenet}
Alex Krizhevsky, Ilya Sutskever, and Geoffrey~E Hinton.
\newblock Imagenet classification with deep convolutional neural networks.
\newblock {\em Advances in neural information processing systems}, 25, 2012.

\bibitem{vaswani2017attention}
Ashish Vaswani, Noam Shazeer, Niki Parmar, Jakob Uszkoreit, Llion Jones, Aidan~N Gomez, {\L}ukasz Kaiser, and Illia Polosukhin.
\newblock Attention is all you need.
\newblock {\em Advances in neural information processing systems}, 30, 2017.

\bibitem{hochreiter1997long}
Sepp Hochreiter and J{\"u}rgen Schmidhuber.
\newblock Long short-term memory.
\newblock {\em Neural computation}, 9(8):1735--1780, 1997.

\bibitem{wilson1972excitatory}
Hugh~R Wilson and Jack~D Cowan.
\newblock Excitatory and inhibitory interactions in localized populations of model neurons.
\newblock {\em Biophysical journal}, 12(1):1--24, 1972.

\bibitem{rubin2015stabilized}
Daniel~B Rubin, Stephen~D Van~Hooser, and Kenneth~D Miller.
\newblock The stabilized supralinear network: a unifying circuit motif underlying multi-input integration in sensory cortex.
\newblock {\em Neuron}, 85(2):402--417, 2015.

\bibitem{cowan2016wilson}
Jack~D Cowan, Jeremy Neuman, and Wim van Drongelen.
\newblock Wilson--cowan equations for neocortical dynamics.
\newblock {\em The Journal of Mathematical Neuroscience}, 6:1--24, 2016.

\bibitem{ahmadian2013analysis}
Yashar Ahmadian, Daniel~B Rubin, and Kenneth~D Miller.
\newblock Analysis of the stabilized supralinear network.
\newblock {\em Neural computation}, 25(8):1994--2037, 2013.

\bibitem{kotary2021end}
James Kotary, Ferdinando Fioretto, Pascal Van~Hentenryck, and Bryan Wilder.
\newblock End-to-end constrained optimization learning: A survey.
\newblock {\em arXiv preprint arXiv:2103.16378}, 2021.

\bibitem{donti2021dc3}
Priya~L Donti, David Rolnick, and J~Zico Kolter.
\newblock Dc3: A learning method for optimization with hard constraints.
\newblock {\em arXiv preprint arXiv:2104.12225}, 2021.

\bibitem{haber2017stable}
Eldad Haber and Lars Ruthotto.
\newblock Stable architectures for deep neural networks.
\newblock {\em Inverse problems}, 34(1):014004, 2017.

\bibitem{carandini1994summation26}
Matteo Carandini and David~J Heeger.
\newblock Summation and division by neurons in primate visual cortex.
\newblock {\em Science}, 264(5163):1333--1336, 1994.

\bibitem{albrecht1991motion29}
Duane~G Albrecht and Wilson~S Geisler.
\newblock Motion selectivity and the contrast-response function of simple cells in the visual cortex.
\newblock {\em Visual neuroscience}, 7(6):531--546, 1991.

\bibitem{heeger1992normalization}
David~J Heeger.
\newblock Normalization of cell responses in cat striate cortex.
\newblock {\em Visual neuroscience}, 9(2):181--197, 1992.

\bibitem{heeger1993modeling32}
David~J Heeger.
\newblock Modeling simple-cell direction selectivity with normalized, half-squared, linear operators.
\newblock {\em Journal of neurophysiology}, 70(5):1885--1898, 1993.

\bibitem{simoncelli1998model}
Eero~P Simoncelli and David~J Heeger.
\newblock A model of neuronal responses in visual area mt.
\newblock {\em Vision research}, 38(5):743--761, 1998.

\bibitem{reynolds1999competitive}
John~H Reynolds, Leonardo Chelazzi, and Robert Desimone.
\newblock Competitive mechanisms subserve attention in macaque areas v2 and v4.
\newblock {\em Journal of Neuroscience}, 19(5):1736--1753, 1999.

\bibitem{schwartz2000natural}
Odelia Schwartz and Eero Simoncelli.
\newblock Natural sound statistics and divisive normalization in the auditory system.
\newblock {\em Advances in neural information processing systems}, 13, 2000.

\bibitem{zoccolan2005multiple}
Davide Zoccolan, David~D Cox, and James~J DiCarlo.
\newblock Multiple object response normalization in monkey inferotemporal cortex.
\newblock {\em Journal of Neuroscience}, 25(36):8150--8164, 2005.

\bibitem{boynton2009framework}
Geoffrey~M Boynton.
\newblock A framework for describing the effects of attention on visual responses.
\newblock {\em Vision research}, 49(10):1129--1143, 2009.

\bibitem{lee2009normalization}
Joonyeol Lee and John~HR Maunsell.
\newblock A normalization model of attentional modulation of single unit responses.
\newblock {\em PloS one}, 4(2):e4651, 2009.

\bibitem{bays2014noise}
Paul~M Bays.
\newblock Noise in neural populations accounts for errors in working memory.
\newblock {\em Journal of Neuroscience}, 34(10):3632--3645, 2014.

\bibitem{ma2014changing}
Wei~Ji Ma, Masud Husain, and Paul~M Bays.
\newblock Changing concepts of working memory.
\newblock {\em Nature neuroscience}, 17(3):347--356, 2014.

\bibitem{zenger2003response}
Barbara Zenger-Landolt and David~J Heeger.
\newblock Response suppression in v1 agrees with psychophysics of surround masking.
\newblock {\em Journal of Neuroscience}, 23(17):6884--6893, 2003.

\bibitem{foley1994human}
John~M Foley.
\newblock Human luminance pattern-vision mechanisms: masking experiments require a new model.
\newblock {\em JOSA A}, 11(6):1710--1719, 1994.

\bibitem{carandini2012normalization33}
Matteo Carandini and David~J Heeger.
\newblock Normalization as a canonical neural computation.
\newblock {\em Nature reviews neuroscience}, 13(1):51--62, 2012.

\bibitem{brouwer2011cross}
Gijs~Joost Brouwer and David~J Heeger.
\newblock Cross-orientation suppression in human visual cortex.
\newblock {\em Journal of neurophysiology}, 106(5):2108--2119, 2011.

\bibitem{cavanaugh2002nature}
James~R Cavanaugh, Wyeth Bair, and J~Anthony Movshon.
\newblock Nature and interaction of signals from the receptive field center and surround in macaque v1 neurons.
\newblock {\em Journal of neurophysiology}, 88(5):2530--2546, 2002.

\bibitem{xing2000center}
Jing Xing and David~J Heeger.
\newblock Center-surround interactions in foveal and peripheral vision.
\newblock {\em Vision research}, 40(22):3065--3072, 2000.

\bibitem{petrov2005dynamics}
Alexander~A Petrov, Barbara~Anne Dosher, and Zhong-Lin Lu.
\newblock The dynamics of perceptual learning: an incremental reweighting model.
\newblock {\em Psychological review}, 112(4):715, 2005.

\bibitem{wainwright2002natural}
Martin~J Wainwright, Odelia Schwartz, and Eero~P Simoncelli.
\newblock Natural image statistics and divisive normalization.
\newblock 2002.

\bibitem{westrick2016pattern}
Zachary~M Westrick, David~J Heeger, and Michael~S Landy.
\newblock Pattern adaptation and normalization reweighting.
\newblock {\em Journal of Neuroscience}, 36(38):9805--9816, 2016.

\bibitem{reynolds2009normalization}
John~H Reynolds and David~J Heeger.
\newblock The normalization model of attention.
\newblock {\em Neuron}, 61(2):168--185, 2009.

\bibitem{heeger1996computational}
David~J Heeger, Eero~P Simoncelli, and J~Anthony Movshon.
\newblock Computational models of cortical visual processing.
\newblock {\em Proceedings of the National Academy of Sciences}, 93(2):623--627, 1996.

\bibitem{lyu2009nonlinear}
Siwei Lyu and Eero~P Simoncelli.
\newblock Nonlinear extraction of independent components of natural images using radial gaussianization.
\newblock {\em Neural computation}, 21(6):1485--1519, 2009.

\bibitem{malo2024cortical}
Jes{\'u}s Malo, Jos{\'e}~Juan Esteve-Taboada, and Marcelo Bertalm{\'\i}o.
\newblock Cortical divisive normalization from wilson--cowan neural dynamics.
\newblock {\em Journal of Nonlinear Science}, 34(2):1--36, 2024.

\bibitem{ioffe2015batch}
Sergey Ioffe and Christian Szegedy.
\newblock Batch normalization: Accelerating deep network training by reducing internal covariate shift.
\newblock In {\em International conference on machine learning}, pages 448--456. pmlr, 2015.

\bibitem{ba2016layer}
Jimmy~Lei Ba, Jamie~Ryan Kiros, and Geoffrey~E Hinton.
\newblock Layer normalization.
\newblock {\em arXiv preprint arXiv:1607.06450}, 2016.

\bibitem{huang2023normalization}
Lei Huang, Jie Qin, Yi~Zhou, Fan Zhu, Li~Liu, and Ling Shao.
\newblock Normalization techniques in training dnns: Methodology, analysis and application.
\newblock {\em IEEE transactions on pattern analysis and machine intelligence}, 45(8):10173--10196, 2023.

\bibitem{ren2016normalizing}
Mengye Ren, Renjie Liao, Raquel Urtasun, Fabian~H Sinz, and Richard~S Zemel.
\newblock Normalizing the normalizers: Comparing and extending network normalization schemes.
\newblock {\em arXiv preprint arXiv:1611.04520}, 2016.

\bibitem{wu2018group}
Yuxin Wu and Kaiming He.
\newblock Group normalization.
\newblock In {\em Proceedings of the European conference on computer vision (ECCV)}, pages 3--19, 2018.

\bibitem{miller2021divisive}
Michelle Miller, SueYeon Chung, and Kenneth~D Miller.
\newblock Divisive feature normalization improves image recognition performance in alexnet.
\newblock In {\em International Conference on Learning Representations}, 2021.

\bibitem{singh2020filter}
Saurabh Singh and Shankar Krishnan.
\newblock Filter response normalization layer: Eliminating batch dependence in the training of deep neural networks.
\newblock In {\em Proceedings of the IEEE/CVF conference on computer vision and pattern recognition}, pages 11237--11246, 2020.

\bibitem{carandini1997linearity}
Matteo Carandini, David~J Heeger, and J~Anthony Movshon.
\newblock Linearity and normalization in simple cells of the macaque primary visual cortex.
\newblock {\em Journal of Neuroscience}, 17(21):8621--8644, 1997.

\bibitem{ozeki2009inhibitory}
Hirofumi Ozeki, Ian~M Finn, Evan~S Schaffer, Kenneth~D Miller, and David Ferster.
\newblock Inhibitory stabilization of the cortical network underlies visual surround suppression.
\newblock {\em Neuron}, 62(4):578--592, 2009.

\bibitem{carandini2002synaptic}
Matteo Carandini, David~J Heeger, and Walter Senn.
\newblock A synaptic explanation of suppression in visual cortex.
\newblock {\em Journal of Neuroscience}, 22(22):10053--10065, 2002.

\bibitem{brosch2014interaction}
Tobias Brosch and Heiko Neumann.
\newblock Interaction of feedforward and feedback streams in visual cortex in a firing-rate model of columnar computations.
\newblock {\em Neural Networks}, 54:11--16, 2014.

\bibitem{heeger2019oscillatory}
David~J Heeger and Wayne~E Mackey.
\newblock Oscillatory recurrent gated neural integrator circuits (organics), a unifying theoretical framework for neural dynamics.
\newblock {\em Proceedings of the National Academy of Sciences}, 116(45):22783--22794, 2019.

\bibitem{heeger2020recurrent}
David~J Heeger and Klavdia~O Zemlianova.
\newblock A recurrent circuit implements normalization, simulating the dynamics of v1 activity.
\newblock {\em Proceedings of the National Academy of Sciences}, 117(36):22494--22505, 2020.

\bibitem{duong2024adaptive}
Lyndon Duong, Eero Simoncelli, Dmitri Chklovskii, and David Lipshutz.
\newblock Adaptive whitening with fast gain modulation and slow synaptic plasticity.
\newblock {\em Advances in Neural Information Processing Systems}, 36, 2024.

\bibitem{chung2014empirical}
Junyoung Chung, Caglar Gulcehre, KyungHyun Cho, and Yoshua Bengio.
\newblock Empirical evaluation of gated recurrent neural networks on sequence modeling.
\newblock {\em arXiv preprint arXiv:1412.3555}, 2014.

\bibitem{sutskever2014sequence}
Ilya Sutskever, Oriol Vinyals, and Quoc~V Le.
\newblock Sequence to sequence learning with neural networks.
\newblock {\em Advances in neural information processing systems}, 27, 2014.

\bibitem{cho2014learning}
Kyunghyun Cho, Bart Van~Merri{\"e}nboer, Caglar Gulcehre, Dzmitry Bahdanau, Fethi Bougares, Holger Schwenk, and Yoshua Bengio.
\newblock Learning phrase representations using rnn encoder-decoder for statistical machine translation.
\newblock {\em arXiv preprint arXiv:1406.1078}, 2014.

\bibitem{graves2013speech}
Alex Graves, Abdel-rahman Mohamed, and Geoffrey Hinton.
\newblock Speech recognition with deep recurrent neural networks.
\newblock In {\em 2013 IEEE international conference on acoustics, speech and signal processing}, pages 6645--6649. Ieee, 2013.

\bibitem{graves2013generating}
Alex Graves.
\newblock Generating sequences with recurrent neural networks.
\newblock {\em arXiv preprint arXiv:1308.0850}, 2013.

\bibitem{soo2022training}
Wayne Soo and M{\'a}t{\'e} Lengyel.
\newblock Training stochastic stabilized supralinear networks by dynamics-neutral growth.
\newblock {\em Advances in Neural Information Processing Systems}, 35:29278--29291, 2022.

\bibitem{song2016training}
H~Francis Song, Guangyu~R Yang, and Xiao-Jing Wang.
\newblock Training excitatory-inhibitory recurrent neural networks for cognitive tasks: a simple and flexible framework.
\newblock {\em PLoS computational biology}, 12(2):e1004792, 2016.

\bibitem{soo2024training}
Wayne Soo, Vishwa Goudar, and Xiao-Jing Wang.
\newblock Training biologically plausible recurrent neural networks on cognitive tasks with long-term dependencies.
\newblock {\em Advances in Neural Information Processing Systems}, 36, 2024.

\bibitem{zhang2014comprehensive}
Huaguang Zhang, Zhanshan Wang, and Derong Liu.
\newblock A comprehensive review of stability analysis of continuous-time recurrent neural networks.
\newblock {\em IEEE Transactions on Neural Networks and Learning Systems}, 25(7):1229--1262, 2014.

\bibitem{chang2019antisymmetricrnn}
Bo~Chang, Minmin Chen, Eldad Haber, and Ed~H Chi.
\newblock Antisymmetricrnn: A dynamical system view on recurrent neural networks.
\newblock {\em arXiv preprint arXiv:1902.09689}, 2019.

\bibitem{erichson2020lipschitz}
N~Benjamin Erichson, Omri Azencot, Alejandro Queiruga, Liam Hodgkinson, and Michael~W Mahoney.
\newblock Lipschitz recurrent neural networks.
\newblock {\em arXiv preprint arXiv:2006.12070}, 2020.

\bibitem{rusch2020coupled}
T~Konstantin Rusch and Siddhartha Mishra.
\newblock Coupled oscillatory recurrent neural network (cornn): An accurate and (gradient) stable architecture for learning long time dependencies.
\newblock {\em arXiv preprint arXiv:2010.00951}, 2020.

\bibitem{rusch2021unicornn}
T~Konstantin Rusch and Siddhartha Mishra.
\newblock Unicornn: A recurrent model for learning very long time dependencies.
\newblock In {\em International Conference on Machine Learning}, pages 9168--9178. PMLR, 2021.

\bibitem{liu1992interneuronal}
Zheng Liu, James~P Gaska, Lowell~D Jacobson, and Daniel~A Pollen.
\newblock Interneuronal interaction between members of quadrature phase and anti-phase pairs in the cat's visual cortex.
\newblock {\em Vision research}, 32(7):1193--1198, 1992.

\bibitem{khalil2002control}
Hassan~K Khalil.
\newblock {\em Control of nonlinear systems}.
\newblock Prentice Hall, New York, NY, 2002.

\bibitem{silvester2000determinants}
John~R Silvester.
\newblock Determinants of block matrices.
\newblock {\em The Mathematical Gazette}, 84(501):460--467, 2000.

\bibitem{lancaster2002lambda}
Peter Lancaster.
\newblock {\em Lambda-matrices and vibrating systems}.
\newblock Courier Corporation, 2002.

\bibitem{tisseur2001quadratic}
Fran{\c{c}}oise Tisseur and Karl Meerbergen.
\newblock The quadratic eigenvalue problem.
\newblock {\em SIAM review}, 43(2):235--286, 2001.

\bibitem{lancaster2013stability}
Peter Lancaster.
\newblock Stability of linear gyroscopic systems: a review.
\newblock {\em Linear Algebra and its Applications}, 439(3):686--706, 2013.

\bibitem{kirillov2021nonconservative}
Oleg~N Kirillov.
\newblock {\em Nonconservative stability problems of modern physics}, volume~14.
\newblock Walter de Gruyter GmbH \& Co KG, 2021.

\bibitem{kliem2009indefinite}
Wolfhard Kliem and Christian Pommer.
\newblock Indefinite damping in mechanical systems and gyroscopic stabilization.
\newblock {\em Zeitschrift f{\"u}r angewandte Mathematik und Physik}, 60:785--795, 2009.

\bibitem{berman1994nonnegative}
Abraham Berman and Robert~J Plemmons.
\newblock {\em Nonnegative matrices in the mathematical sciences}.
\newblock SIAM, 1994.

\bibitem{deng2012mnist}
Li~Deng.
\newblock The mnist database of handwritten digit images for machine learning research [best of the web].
\newblock {\em IEEE signal processing magazine}, 29(6):141--142, 2012.

\bibitem{bai2019deep}
Shaojie Bai, J~Zico Kolter, and Vladlen Koltun.
\newblock Deep equilibrium models.
\newblock {\em Advances in neural information processing systems}, 32, 2019.

\bibitem{le2015simple}
Quoc~V Le, Navdeep Jaitly, and Geoffrey~E Hinton.
\newblock A simple way to initialize recurrent networks of rectified linear units.
\newblock {\em arXiv preprint arXiv:1504.00941}, 2015.

\bibitem{arjovsky2016unitary}
Martin Arjovsky, Amar Shah, and Yoshua Bengio.
\newblock Unitary evolution recurrent neural networks.
\newblock In {\em International conference on machine learning}, pages 1120--1128. PMLR, 2016.

\bibitem{beuth2015mechanistic}
Frederik Beuth and Fred~H Hamker.
\newblock A mechanistic cortical microcircuit of attention for amplification, normalization and suppression.
\newblock {\em Vision research}, 116:241--257, 2015.

\bibitem{denison2021dynamic}
Rachel~N Denison, Marisa Carrasco, and David~J Heeger.
\newblock A dynamic normalization model of temporal attention.
\newblock {\em Nature Human Behaviour}, 5(12):1674--1685, 2021.

\bibitem{herrmann2012feature}
Katrin Herrmann, David~J Heeger, and Marisa Carrasco.
\newblock Feature-based attention enhances performance by increasing response gain.
\newblock {\em Vision research}, 74:10--20, 2012.

\bibitem{herrmann2010size}
Katrin Herrmann, Leila Montaser-Kouhsari, Marisa Carrasco, and David~J Heeger.
\newblock When size matters: attention affects performance by contrast or response gain.
\newblock {\em Nature neuroscience}, 13(12):1554--1559, 2010.

\bibitem{maunsell2015neuronal}
John~HR Maunsell.
\newblock Neuronal mechanisms of visual attention.
\newblock {\em Annual review of vision science}, 1:373--391, 2015.

\bibitem{ni2017spatially}
Amy~M Ni and John~HR Maunsell.
\newblock Spatially tuned normalization explains attention modulation variance within neurons.
\newblock {\em Journal of neurophysiology}, 118(3):1903--1913, 2017.

\bibitem{ni2019neuronal}
Amy~M Ni and John~HR Maunsell.
\newblock Neuronal effects of spatial and feature attention differ due to normalization.
\newblock {\em Journal of Neuroscience}, 39(28):5493--5505, 2019.

\bibitem{schwedhelm2016extended}
Philipp Schwedhelm, B~Suresh Krishna, and Stefan Treue.
\newblock An extended normalization model of attention accounts for feature-based attentional enhancement of both response and coherence gain.
\newblock {\em PLoS computational biology}, 12(12):e1005225, 2016.

\bibitem{smith2015shunting}
Philip~L Smith, David~K Sewell, and Simon~D Lilburn.
\newblock From shunting inhibition to dynamic normalization: Attentional selection and decision-making in brief visual displays.
\newblock {\em Vision Research}, 116:219--240, 2015.

\bibitem{zhang2016normalization}
Xilin Zhang, Shruti Japee, Zaid Safiullah, Nicole Mlynaryk, and Leslie~G Ungerleider.
\newblock A normalization framework for emotional attention.
\newblock {\em PLoS biology}, 14(11):e1002578, 2016.

\bibitem{carandini2004amplification}
Matteo Carandini.
\newblock Amplification of trial-to-trial response variability by neurons in visual cortex.
\newblock {\em PLoS biology}, 2(9):e264, 2004.

\bibitem{anderson2000contribution}
Jeffrey~S Anderson, Ilan Lampl, Deda~C Gillespie, and David Ferster.
\newblock The contribution of noise to contrast invariance of orientation tuning in cat visual cortex.
\newblock {\em Science}, 290(5498):1968--1972, 2000.

\bibitem{priebe2008inhibition}
Nicholas~J Priebe and David Ferster.
\newblock Inhibition, spike threshold, and stimulus selectivity in primary visual cortex.
\newblock {\em Neuron}, 57(4):482--497, 2008.

\bibitem{bernstein2009matrix}
Dennis~S Bernstein.
\newblock {\em Matrix mathematics: theory, facts, and formulas}.
\newblock Princeton university press, 2009.

\bibitem{zhang2006schur}
Fuzhen Zhang.
\newblock {\em The Schur complement and its applications}, volume~4.
\newblock Springer Science \& Business Media, 2006.

\bibitem{horn2012matrix}
Roger~A Horn and Charles~R Johnson.
\newblock {\em Matrix analysis}.
\newblock Cambridge university press, 2012.

\bibitem{paszke2017automatic}
Adam Paszke, Sam Gross, Soumith Chintala, Gregory Chanan, Edward Yang, Zachary DeVito, Zeming Lin, Alban Desmaison, Luca Antiga, and Adam Lerer.
\newblock Automatic differentiation in pytorch.
\newblock 2017.

\bibitem{kingma2014adam}
Diederik~P Kingma and Jimmy Ba.
\newblock Adam: A method for stochastic optimization.
\newblock {\em arXiv preprint arXiv:1412.6980}, 2014.

\bibitem{he2015delving}
Kaiming He, Xiangyu Zhang, Shaoqing Ren, and Jian Sun.
\newblock Delving deep into rectifiers: Surpassing human-level performance on imagenet classification.
\newblock In {\em Proceedings of the IEEE international conference on computer vision}, pages 1026--1034, 2015.

\end{thebibliography}

	%%%%%%%%%%%%%%%%%%%%%%%%%%%%%%%%%%%%%%%%%%%%%%%%%%%%%%%%%%%%%%%%%%%%%%%%%%%%%%%
	%%%%%%%%%%%%%%%%%%%%%%%%%%%%%%%%%%%%%%%%%%%%%%%%%%%%%%%%%%%%%%%%%%%%%%%%%%%%%%%
	% APPENDIX
	%%%%%%%%%%%%%%%%%%%%%%%%%%%%%%%%%%%%%%%%%%%%%%%%%%%%%%%%%%%%%%%%%%%%%%%%%%%%%%%
	%%%%%%%%%%%%%%%%%%%%%%%%%%%%%%%%%%%%%%%%%%%%%%%%%%%%%%%%%%%%%%%%%%%%%%%%%%%%%%%
	\newpage
	\appendix
	\section{Derivation of ORGaNICs} 
	\label{appendix_derivation}
	
	Here, we derive a generalized 2-neuron types (excitatory and inhibitory) ORGaNICs model for a high-dimensional input. The system presented in Eq.~\ref{eq:full} is a special case of this generalized model where $p=2$ and $\av^+=\sqrt{\lfloor \av \rfloor}$. Assuming $\mathbf{W}$ is the normalization weight matrix, and $\mathbf{z}$ is the input drive, we can write the normalization equations for principal neurons with complementary receptive fields as,
	\begin{equation}
	\mathbf{y}^+_s = \frac{\lfloor\mathbf{z}\rfloor^p}{\bsigma^p + \mathbf{W} \left( \lfloor\mathbf{z}\rfloor^p + \lfloor-\mathbf{z}\rfloor^p\right)}; \quad \mathbf{y}^-_s = \frac{\lfloor -\mathbf{z}\rfloor^p}{\bsigma^p + \mathbf{W} \left( \lfloor\mathbf{z}\rfloor^p + \lfloor-\mathbf{z}\rfloor^p\right)} \label{eq:norm_eqn}
	\end{equation}
	Note that typically the exponent of the input $p \sim 2$ for cortical neurons. $\lfloor\mathbf{z}\rfloor^p$ and $\lfloor-\mathbf{z}\rfloor^p$ represent the contribution of neurons with complementary receptive fields to the normalization pool. Mathematically, we have, $\lfloor\mathbf{z}\rfloor^p + \lfloor-\mathbf{z}\rfloor^p = |\z|^p$. 
	
	Here, we derive, for a general $p$, the dynamical equations that have the fixed point defined by the normalization equation above. First, it is important to distinguish between the membrane potentials and the firing rates of neurons. The coarse (low-pass filtered) membrane potential of a given type of neuron is denoted by the vector, $\mathbf{y}, \mathbf{a}$, with the corresponding firing rates of $\mathbf{y}^+ (\mathbf{y}^-), \mathbf{a}^+$. The instantaneous firing rates of the neurons are obtained from the corresponding coarse membrane potentials by applying rectification, denoted by $\lfloor. \rfloor$, and a power law (sub/supra-linear) activation for different types of neurons \cite{carandini2004amplification, anderson2000contribution, priebe2008inhibition}. Therefore, for a set of membrane potentials $\x$ the instantaneous firing rates are $\mathbf{x}^+= \lfloor \mathbf{x} \rfloor^\alpha$. Specifically for principal neurons, we have, $\y^+ = \lfloor \y \rfloor^p$ and $\y^- = \lfloor -\y \rfloor^p$. Combining the firing of principal neurons with the complementary receptive fields, $\y^+$ and $\y^-$, Eq.~\ref{eq:norm_eqn} can be alternatively written as,
	\begin{equation}
	|\y_s|^p = \y_s^+ + \y_s^- = \frac{\left|\z\right|^p}{\bsigma^p + \mathbf{W} \left|\z\right|^p} \label{eq:norm_eq_mem_pot}
	\end{equation}
	
	Now for the principal neuron $y_j$ receiving an input drive $z_j$, we can rewrite the normalization equation for each neuron as,
	\begin{equation}
	|y_j^s|^p = \frac{|z_j|^p}{\sigma_j^p + \sum\limits_k W_{jk} |z_k|^p }
	\end{equation}
	In the ORGaNICs paradigm \cite{heeger2019oscillatory}, the steady-state activity of the principal neurons (single equation for complementary receptive fields) is a weighted sum of input drive and recurrent drive,
	\begin{equation}
	\tau_{y_j} \frac{\mathrm{d}y_j}{\mathrm{d}t} = -y_j + \underset{\text{Weighted input drive}}{b_j z_j} +  \underset{\text{Weighted recurrent drive}}{\left(1 - a_j^+\right)\sum\limits_k {w_r}_{jk}\left((y^+_k)^{1/p} - (y^-_k)^{1/p}\right)} \label{eq:main_y}
	\end{equation}
	Here $w_r$ are the weights of the recurrent weight matrix $\W_r$ encoding the recurrent/lateral connections between the principal neurons $\mathbf{y}$; $y^+_k$ is the firing rate of the principal neuron $k$, given by $y^+_k = \lfloor y_k \rfloor^p$,  and $y^-_k=\lfloor -y_k \rfloor^p$ is the firing rate of the complementary principal neuron $k$. $b_j$ is the gain to the input drive $z_j$ which simulates attention (realized via gain modulation). $(1 - a_j^+)$ is the gain to the recurrent drive which controls the recurrent amplification.
	
	Now, we find the dynamics of the inhibitory neurons $a_j$ with firing rates $a_j^+$, that yield stable dynamics with the fixed point given by Eq.~\ref{eq:norm_eqn}. First, we assume that the recurrent weight matrix $\mathbf{W}_r=\mathbf{I}$. Also, note that the following identity holds: $(y^+_k)^{1/p} - (y^-_k)^{1/p}= \lfloor y_k \rfloor - \lfloor -y_k \rfloor = y_k$. Therefore, Eq.\ref{eq:main_y} can be simplified to,
	\begin{equation}
	\tau_{y_j} \frac{\mathrm{d}y_j}{\mathrm{d}t} = -y_j + {b_j z_j} +  \left(1 - a_j^+\right) y_j \label{eq:main_y1}
	\end{equation}
	At steady-state, the fixed-points $(y^s_j, a_j^s)$, satisfy the following relationship, 
	\begin{equation}
	{a_j^s}^+ y^s_j = b_j z_j
	\end{equation}
	Taking modulus and raising both sides to power $p$, we get,
	\begin{equation}
	\left|{a_j^s}^+ y^s_j\right|^p = \left| b_j z_j \right|^p
	\end{equation}
	or in vector form,
	\begin{equation}
	\begin{split}
	\left|\mathbf{a}_s^+ \odot \mathbf{y}_s\right|^p &= \left| \bb \odot \mathbf{z}\right|^p \\
	\left(\mathbf{a}_s^+\right)^p \odot \left| \mathbf{y}_s\right|^p &= \bb ^p \odot | \mathbf{z}| ^p
	\label{eq:y_ss}
	\end{split}
	\end{equation}
	From Eq.~\ref{eq:norm_eq_mem_pot}, we know that
	\begin{equation}
	\begin{split}
	\left|\mathbf{z}\right|^p &= \bsigma^p \odot |\y_s|^p + |\y_s|^p \odot \left(\mathbf{W}\left|\mathbf{z}\right|^p\right)
	\end{split}
	\end{equation}
	Since we can write the element-wise product between two vectors $\mathbf{x}_1$ and $\mathbf{x}_2$, $\mathbf{x}_1 \odot \mathbf{x}_2 = \mathbf{x}_2 \odot \mathbf{x}_1 = \mathbf{D}(\mathbf{x}_1)\mathbf{x}_2$. Where $\mathbf{D}(\mathbf{x}_1)$ is a diagonal matrix with elements of the vector $\mathbf{x}_1$ on the diagonal. Therefore, using this fact we can rewrite the equation above as,
	\begin{equation}
	\begin{split}
	\left|\mathbf{z}\right|^p &= \bsigma^p \odot |\y_s|^p + \mathbf{D}\left(|\y_s|^p\right)  \left(\mathbf{W}\left|\mathbf{z}\right|^p\right) \\
	\left|\mathbf{z}\right|^p &= \bsigma^p \odot |\y_s|^p +\mathbf{D}\left(|\y_s|^p\right)  \mathbf{W} \left|\mathbf{z}\right|^p \\
	\left[\mathbf{I}-\mathbf{D}\left(|\y_s|^p\right)\mathbf{W}\right] \left|\mathbf{z}\right|^p &= \bsigma^p \odot |\y_s|^p \\
	\left|\mathbf{z}\right|^p &=  \left[\mathbf{I}-\mathbf{D}\left(|\y_s|^p\right)\mathbf{W}\right]^{-1} \left(\bsigma^p \odot |\y_s|^p\right)
	\end{split}
	\end{equation}
	Substituting the expression for $\left|\mathbf{z}\right|^p$ into Eq.~\ref{eq:y_ss}, we get,
	\begin{equation}
	\left(\mathbf{a}_s^+\right)^p \odot|\y_s|^p = \bb ^p \odot \left(\left[\mathbf{I}-\mathbf{D}\left(|\y_s|^p\right)\mathbf{W}\right]^{-1} \left(\bsigma^p \odot |\y_s|^p\right)\right) \label{eq:steady-state1}
	\end{equation}
	We simplify this equation as,
	\begin{equation}
	\begin{split}
	\left[\mathbf{I}-\mathbf{D}\left(|\y_s|^p\right)\mathbf{W}\right] \left(\frac{\left(\mathbf{a}_s^+\right)^p \odot|\y_s|^p}{\bb ^p} \right)  &=  \bsigma^p \odot |\y_s|^p \\
	\mathbf{D}\left(1/|\y_s|^p\right) \left[\mathbf{I}-\mathbf{D}\left(|\y_s|^p\right)\mathbf{W}\right] \left(\frac{\left(\mathbf{a}_s^+\right)^p \odot|\y_s|^p}{\bb ^p} \right)  &= \mathbf{D}\left(1/|\y_s|^p\right) \left( \bsigma^p \odot |\y_s|^p \right) \\
	\frac{\left(\mathbf{a}_s^+\right)^p }{\bb ^p} -\W \left(\frac{\left(\mathbf{a}_s^+\right)^p \odot|\y_s|^p}{\bb ^p} \right) &= \bsigma^p
	\label{eq:a_ss1}
	\end{split}
	\end{equation}
	Now we assume that all of the entries of $\bb$ are equal to a constant $b_0$, therefore, we have,
	\begin{equation}
	\begin{split}
	\left(\mathbf{a}_s^+\right)^p  - \W \left(\left(\mathbf{a}_s^+\right)^p \odot|\y_s|^p\right) &= b_0^p \bsigma^p \\
	\left(\mathbf{a}_s^+\right)^p &= b_0^p \bsigma^p + \W \left(\left(\mathbf{a}_s^+\right)^p \odot |\y_s|^p\right)
	\end{split}
	\end{equation}
	Element-wise multiplying both sides by $\mathbf{a}_s/\left(\mathbf{a}_s^+\right)^p$ on the left, we get,
	\begin{equation}
	\begin{split}
	\mathbf{0} = -\av_s + \frac{\mathbf{a}_s}{\left(\mathbf{a}_s^+\right)^p} \odot \left[b_0^p \bsigma^p + \W \left(\left(\mathbf{a}_s^+\right)^p \odot |\y_s|^p \right) \right] \label{eq:a_ss10}
	\end{split}
	\end{equation}
	This equation is true at the steady-state, but it is also in a form similar to that of $\y$, such that we have a weighted input drive and a weighted recurrent drive, therefore, we propose the following dynamical equation for $\av$,
	\begin{equation}
	\begin{split}
	\btau_a \odot \dot{\av} = -\av + \frac{\mathbf{a}}{\left(\mathbf{a}^+\right)^p} \odot \left[b_0^p \bsigma^p + \W \left(\left(\mathbf{a}^+\right)^p \odot |\y|^p \right) \right]
	\end{split}
	\end{equation}
	This equation naturally follows Eq.~\ref{eq:a_ss10} at steady-state and thus also follows the normalization equation Eq.~\ref{eq:norm_eq_mem_pot}. Note that the equation above is true for any choice of $\av^+$, but cortical neurons have been experimentally observed to have an exponent close to $p=2$, i.e., $\y^+ = \lfloor \y\rfloor^2$ and $\y^- = \lfloor -\y\rfloor^2$. Therefore, we get a particularly simple form of the equations when $p=2$ and $\av^+ = \sqrt{\lfloor \av \rfloor}$. In this case, the equation for $\av$ is given by,
	\begin{equation}
	\begin{split}
	\btau_a \odot \dot{\av} = -\av +  \left[b_0^2 \bsigma^2 + \W \left(\left(\mathbf{a}^+\right)^2 \odot \y^2\right) \right]
	\end{split}
	\end{equation}
	We reintroduce the recurrent weight matrix $\W_r$ and to simulate the effect of attention by gain modulation, we define different gains for $\y$ and $\av$ neurons. Additionally, replacing $\y^2 \rightarrow \y^+ + \y^-$ and $\y \rightarrow \sqrt{\y^+} - \sqrt{\y^-}$ yields the dynamical system that we analyze,
	\begin{equation}
	\begin{split}
	\btau_{y} \odot \dot{\y} &= -\y +  \bb \odot \z + \left(\mathbf{1} - \av^{+}\right) \odot \left(\W_r \left(\sqrt{\y^+} - \sqrt{\y^-}\right)\right) \\
	\btau_{a} \odot \dot{\av}  &= -\av + \bb_0^2 \odot \bsigma^2 + \W \, \left(\left(\y^+ + \y^-\right) \odot \av^{+2}\right)
	\end{split}
	\end{equation}
	Finally, in the most general form, i.e., any choice of $p$ and nonlinearity for $\av^+$, the dynamical system of ORGaNICs is given by,
	\begin{equation}
	\begin{split}
	\btau_{y} \odot \dot{\y} &= -\y +  \bb \odot \z + \left(\mathbf{1} - \av^{+}\right) \odot \left(\W_r \left(\left(\y^+\right)^{1/p} - \left(\y^-\right)^{1/p}\right)\right) \\
	\btau_{a} \odot \dot{\av}  &= -\av + \frac{\mathbf{a}}{\left(\mathbf{a}^+\right)^p} \odot \left[\bb_0^p \odot \bsigma^p + \W \left(\left(\y^+ + \y^-\right) \odot \left(\mathbf{a}^+\right)^p \right) \right]
	\end{split}
	\end{equation}
	
	\section{Stability theorem}
	\label{stability-theorem-proof}
	
	% \begin{theorem}
	% Consider a system of linear differential equations with constant coefficients of the form,
	% \begin{equation}
	%     \I \ddot{\x} + \B \dot{\x} + \K \x = \mathbf{0} \label{eq:dynm_system}
	% \end{equation}
	% where $\B \in \R^{n \times n}$ and $\K \in \R^{n \times n}$ is a positive diagonal matrix, therefore, $\K \succ 0$. The dynamical system is globally asymptotically stable if $\B$ is Lyapunov diagonally stable.
	% \end{theorem}
	\begin{theorem}
		For a system of linear differential equations with constant coefficients of the form,
		\begin{equation}
		\I \ddot{\x} + \B \dot{\x} + \K \x = \mathbf{0} \label{eq:dynm_system}
		\end{equation}
		where $\B \in \R^{n \times n}$ and $\K \in \R^{n \times n}$ is a positive diagonal matrix (hence $\K \succ 0$), the dynamical system is globally asymptotically stable if $\B$ is Lyapunov diagonally stable. 
	\end{theorem}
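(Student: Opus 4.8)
The plan is to prove global asymptotic stability by an explicit Lyapunov construction that exploits the diagonal stability certificate of $\B$, and then close the argument with LaSalle's invariance principle. First I would rewrite the second-order system in phase variables $(\x,\vv)$ with $\vv \equiv \dot{\x}$, obtaining the first-order system $\dot{\x} = \vv$ and $\dot{\vv} = -\B\vv - \K\x$. Since $\B$ is Lyapunov diagonally stable, there exists a positive definite diagonal matrix $\lam$ with $\lam\B + \B^\top\lam \succ 0$; this $\lam$ is the object around which the entire proof is organized.

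Next I would propose the $\lam$-weighted energy
\[ V(\x,\vv) = \tfrac{1}{2}\,\vv^\top \lam \vv + \tfrac{1}{2}\,\x^\top \lam \K \x. \]
Because $\lam$ and $\K$ are both positive diagonal matrices, $\lam$ and $\lam\K$ are positive definite, so $V$ is positive definite and radially unbounded, which is exactly what a global statement requires. Differentiating along trajectories and substituting the dynamics gives
\[ \dot{V} = -\vv^\top \lam \B \vv - \vv^\top \lam \K \x + \x^\top \lam \K \vv. \]
The crucial observation is that $\lam$ and $\K$ are diagonal, hence $\lam\K$ is symmetric, so the two cross terms are equal scalars and cancel. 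What survives is $\dot{V} = -\tfrac{1}{2}\,\vv^\top(\lam\B + \B^\top\lam)\vv \le 0$, which is strictly negative whenever $\vv \neq \mathbf{0}$ by the diagonal stability of $\B$.

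The step I expect to require the most care is that $\dot{V}$ is only negative \emph{semi}definite: it vanishes on the whole subspace $\{\vv = \mathbf{0}\}$ and not merely at the origin, so the classical Lyapunov theorem does not immediately deliver asymptotic stability. To bridge this gap I would invoke LaSalle's invariance principle and identify the largest invariant set contained in $\{\dot{V}=0\} = \{\vv = \mathbf{0}\}$. On any trajectory confined to this set we have $\vv \equiv \mathbf{0}$, hence $\dot{\vv} \equiv \mathbf{0}$, and the second equation then forces $\K\x = \mathbf{0}$; since $\K$ is a positive diagonal matrix it is invertible, so $\x = \mathbf{0}$. Thus the only invariant set is the origin, and LaSalle yields global asymptotic stability together with radial unboundedness of $V$. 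It is worth emphasizing that the diagonality of $\K$ is used twice and is essential: once to make $\lam\K$ symmetric so that the potential-energy cross terms cancel, and once to guarantee invertibility in the LaSalle step. A merely symmetric positive definite $\K$ would break the cancellation unless $\lam$ and $\K$ commuted, which is precisely why the hypothesis restricts $\K$ to be diagonal.
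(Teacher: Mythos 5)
Your proposal is correct, and the first half coincides with the paper's argument: your Lyapunov function $V = \tfrac{1}{2}\vv^\top\lam\vv + \tfrac{1}{2}\x^\top\lam\K\x$ is exactly the paper's block-diagonal candidate $\Pd = \mathrm{diag}(\lam\K,\lam)$, and both arrive at $\dot{V} = -\tfrac{1}{2}\vv^\top(\lam\B+\B^\top\lam)\vv \preceq 0$, which only certifies Lyapunov (not asymptotic) stability because it vanishes on all of $\{\vv=\mathbf{0}\}$. Where you diverge is in how you upgrade to asymptotic stability. You invoke LaSalle's invariance principle: radial unboundedness of $V$ traps every trajectory in a compact sublevel set, and the only invariant subset of $\{\vv=\mathbf{0}\}$ is the origin since $\vv\equiv\mathbf{0}$ forces $\K\x=\mathbf{0}$ and $\K$ is invertible. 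This is clean, avoids any quantitative estimates, and for a linear system convergence of all trajectories to the origin is equivalent to all eigenvalues of $\J$ having negative real part, so the conclusion is the same. The paper instead constructs a \emph{strict} Lyapunov function by perturbing to $\Pd = \left[\begin{smallmatrix}\lam\K & \epsilon\I\\ \epsilon\I & \lam\end{smallmatrix}\right]$ and carrying out Schur-complement estimates to exhibit an explicit interval $0<\epsilon<\min\{\min_i(t_i\sqrt{k_i}),\,\alpha/2,\,\min_i(2k_i\alpha/(\beta+4k_i))\}$ on which both $\Pd\succ 0$ and $-(\Pd\J+\J^\top\Pd)\succ 0$. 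Your route is shorter and more elementary; the paper's buys a quadratic Lyapunov function whose derivative is strictly negative away from the origin, which is marginally stronger as a certificate and is the object reused in Appendix~I to define the circuit's energy (there the $\epsilon$ cross term is ultimately dropped, recovering precisely your $V$). Your closing remark about the two distinct roles of the diagonality of $\K$ — symmetry of $\lam\K$ for the cross-term cancellation, invertibility for the invariance argument — is accurate and worth keeping.
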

	
	\begin{proof}
		The stability of the system is defined by solving the following associated quadratic eigenvalue problem,
		\begin{equation}
		\mathbb{L}(\lambda) = \mathrm{det}(\lambda^2\I + \B \lambda+\K) \label{eq:quad_eig_prob}
		\end{equation}
		The spectrum of $\mathbb{L}(\lambda)$, i.e., $\{\lambda \in \mathbb{C} : \mathrm{det}(\mathbb{L}(\lambda)) = 0\}$ are also known as the eigenvalues of the system. The system defined by Eq.~\ref{eq:dynm_system} is globally asymptotically stable if all of the eigenvalues have negative real parts. We take the direct Lyapunov approach to prove the stability of the linear system. We write Eq.~\ref{eq:dynm_system} in the matrix form as $\dot{\z} = \J \z$, where,
		\begin{equation}
		\z = \left[\begin{matrix}
		\x \\
		\dot{\x}
		\end{matrix}\right] \quad \text{and} \quad
		\J = \left[\begin{matrix}
		\mathbf{0} & \I \\
		-\K & -\B
		\end{matrix}\right]
		\end{equation}
		We will first prove the Lyapunov stability ($\mathrm{Re}(\lambda_\mathbf{J}) \leq 0$) of this system to find the appropriate block diagonal matrices and then we will prove global asymptotic stability ($\mathrm{Re}(\lambda_\mathbf{J}) < 0$). To prove Lyapunov stability (Appendix~\ref{sec:lyap-criteria}), we propose a Lyapunov function $V(\z) = \z^\top \Pd \z$, where $\Pd$ is a block positive definite matrix defined as follows,
		\begin{equation}
		\Pd = \left[\begin{matrix}
		\A & \mathbf{0} \\
		\mathbf{0} & \lam
		\end{matrix}\right]
		\end{equation}
		where $\lam \in \R^{n\times n}$ is a positive diagonal matrix such that $\lam \B + \B^\top \lam \succ 0$ (notation for positive definite matrix) and $\A \in \R^{n\times n}$ a flexible symmetric positive definite matrix that we will find using the second Lyapunov criteria. Note that such a matrix $\lam$ exits since $\B$ is defined to be Lyapunov diagonally stable. It can be easily seen that $\Pd \succ 0$ using the first criteria in Section~\ref{sec:schur} since $\A \succ 0$ and $\lam \succ 0$. Additionally, since $\lam \succ 0$, it is invertible. 
		
		Now, for Lyapunov stability, we need $\dot{V}(\z) = \z^\top \left(\Pd \J + \J^\top \Pd \right) \z \leq 0$. Therefore, we find $\A$ such that $\Q = -\left(\Pd \J + \J^\top \Pd \right)$ is positive semi-definite.
		\begin{equation}
		\begin{split}
		\Q &=  -\left(\Pd \J + \J^\top \Pd \right) \\
		&= - \left[\begin{matrix}
		\A & \mathbf{0} \\
		\mathbf{0} & \lam
		\end{matrix}\right]
		\left[\begin{matrix}
		\mathbf{0} & \I \\
		-\K & -\B
		\end{matrix}\right] -
		\left[\begin{matrix}
		\mathbf{0} & -\K \\
		\I & -\B^\top
		\end{matrix}\right]
		\left[\begin{matrix}
		\A & \mathbf{0} \\
		\mathbf{0} & \lam
		\end{matrix}\right] \\
		&= \left[\begin{matrix}
		\mathbf{0} & \K \lam - \A \\
		\lam \K - \A & \lam \B + \B^\top \lam
		\end{matrix}\right]
		\end{split}
		\end{equation}
		We want to define $\A \succ 0$, such that $\Q \succeq 0$. Using the second criteria from Section~\ref{sec:schur}, we need $\lam \B + \B^\top \lam \succ 0$ and $-(\K \lam - \A) (\lam \B + \B^\top \lam)^{-1} (\lam \K - \A) \succeq 0$. The first condition is satisfied by the definition of $\lam$. For the second condition to be satisfied, an obvious candidate for $\A$ is $\lam \K$. Note that both $\lam$ and $\K$ are positive definite and diagonal, therefore they commute ($\K \lam = \lam \K$) and $\A$ is symmetric and positive definite. When $\A = \lam \K$, the LHS of the second condition becomes $\mathbf{0} \succeq 0$. Therefore, the system is Lyapunov stable.
		
		Now, we prove the global asymptotic stability of the system by again using the direct Lyapunov approach. We propose the Lyapunov function of the same form as before, i.e., $V(\z) = \z^\top \Pd \z$, where $\Pd$ is a positive definite matrix. But for global asymptotic stability, we need a more stringent condition on the Lyapunov function, $\dot{V}(\z) = \z^\top \left(\Pd \J + \J^\top \Pd \right) \z < 0$. Drawing inspiration from the previous exercise, we consider the following form of the matrix $\Pd$,
		\begin{equation}
		\Pd = \left[\begin{matrix}
		\lam \K & \epsilon \I \\
		\epsilon \I & \lam
		\end{matrix}\right]
		\end{equation}
		Here, $\epsilon > 0$ is a scalar whose magnitude is to be determined based on the Lyapunov criteria for asymptotic stability. First, we need $\Pd \succ 0$. Applying the first criteria from Section~\ref{sec:schur}, we want $\epsilon$ to satisfy, $\lam \K - \epsilon^2 \lam^{-1} \succ 0$. Second, for $\dot{V}(\z) < 0$, we want $\Q = -\left(\Pd \J + \J^\top \Pd \right)$ to be positive definite. $\Q$ is given by,
		\begin{equation}
		\begin{split}
		\Q &=  - \left[\begin{matrix}
		\lam \K & \epsilon \I \\
		\epsilon \I & \lam
		\end{matrix}\right]
		\left[\begin{matrix}
		\mathbf{0} & \I \\
		-\K & -\B
		\end{matrix}\right] -
		\left[\begin{matrix}
		\mathbf{0} & -\K \\
		\I & -\B^\top
		\end{matrix}\right]
		\left[\begin{matrix}
		\lam \K & \epsilon \I \\
		\epsilon \I & \lam
		\end{matrix}\right] \\
		&= \left[\begin{matrix}
		2 \epsilon \K & \epsilon \B \\
		\epsilon \B^\top & \lam \B + \B^\top \lam - 2 \epsilon \I
		\end{matrix}\right]
		\end{split}
		\end{equation}
		Again, we apply the first criteria from Section~\ref{sec:schur}. $\Q \succ 0$ if and only if $\lam \B + \B^\top \lam - 2 \epsilon \I \succ 0$ and $2 \epsilon \K - \epsilon^2 \B \left(\lam \B + \B^\top \lam - 2 \epsilon \I\right)^{-1} \B^\top \succ 0$. To simplify notation we replace $\lam \B + \B^\top \lam$ with a positive definite matrix $\M$. Therefore, we have to prove that there exists an $\epsilon>0$ which satisfies the following conditions,
		\begin{itemize}
			\item $\lam \K - \epsilon^2 \lam^{-1} \succ 0$
			\item $\M - 2 \epsilon \I \succ 0$
			\item $2 \K - \epsilon \B \left(\M - 2 \epsilon \I\right)^{-1} \B^\top \succ 0$
		\end{itemize}
		Assuming $t_i$ and $k_i$ to be the diagonal values of the positive diagonal matrices $\lam$ and $\K$, we get the following two conditions, $\epsilon < \min{(t_i \sqrt{k_i})}$ and $\epsilon < \alpha /2$, where $\alpha$ is the smallest eigenvalue of $\M$, or $\alpha = \min{(\lambda_\M)}$. 
		
		Now, for the third condition, we will use a number of facts about positive definite matrices which are all listed in \cite{bernstein2009matrix}. We first consider the following matrix inequality, $\M \succeq \alpha \I$. This notation is equivalent to saying that $\M - \alpha \I$ is positive semi-definite or $\M - \alpha \I \succeq 0$. Therefore, we have,
		\begin{equation}
		\M - 2 \epsilon \I \succeq (\alpha - 2 \epsilon) \I
		\end{equation}
		Assuming, $\epsilon$ is small enough such that the matrices on LHS and RHS are positive definite, we have,
		\begin{equation}
		\frac{1}{\alpha - 2 \epsilon} \I  \succeq \left(\M - 2 \epsilon \I\right)^{-1}
		\end{equation}
		Since $\B$ is nonsingular, it is full rank, therefore, we have,
		\begin{equation}
		\frac{1}{\alpha - 2 \epsilon} \B \B^\top  \succeq \B \left(\M - 2 \epsilon \I\right)^{-1} \B^\top
		\end{equation}
		Multiplying both sides by $\epsilon$, we get,
		\begin{equation}
		\frac{\epsilon}{\alpha - 2 \epsilon} \B \B^\top  \succeq \epsilon \B \left(\M - 2 \epsilon \I\right)^{-1} \B^\top
		\end{equation}
		Notice that $\B \B^\top$ is a positive definite matrix. Let $\beta$ be the maximum eigenvalue of $\B \B^\top$, or $\beta = \max{(\lambda_{\B \B^\top})}$. Therefore, we can add an upper-bound matrix to the inequality as follows,
		\begin{equation}
		\frac{\epsilon \beta}{\alpha - 2 \epsilon} \I \succeq \frac{\epsilon}{\alpha - 2 \epsilon} \B \B^\top  \succeq \epsilon \B \left(\M - 2 \epsilon \I\right)^{-1} \B^\top
		\end{equation}
		Now, we find $\epsilon$ such that,
		\begin{equation}
		2 \K \succ \frac{\epsilon \beta}{\alpha - 2 \epsilon} \I
		\end{equation}
		The range of values for which the above inequality is true is,
		\begin{equation}
		\epsilon < \min{\left(\frac{2 k_i \alpha}{\beta + 4 k_i}\right)}
		\end{equation}
		If $\epsilon$ satisfies the inequality above, we have,
		\begin{equation}
		2 \K \succ \frac{\epsilon \beta}{\alpha - 2 \epsilon} \I \succeq \frac{\epsilon}{\alpha - 2 \epsilon} \B \B^\top  \succeq \epsilon \B \left(\M - 2 \epsilon \I\right)^{-1} \B^\top
		\end{equation}
		Therefore, we have, $2 \K \succ \epsilon \B \left(\M - 2 \epsilon \I\right)^{-1} \B^\top$ and the third condition required for $\epsilon$ is satisfied. 
		
		This implies that there exists a range of $\epsilon$, given by,
		\begin{equation}
		0 < \epsilon < \min{\left\{\min{(t_i \sqrt{k_i})}, \, \frac{\alpha}{2}, \,  \min{\left(\frac{2 k_i \alpha}{\beta + 4 k_i}\right)}\right\}} \label{eq:epsilon_arbitrarily_small}
		\end{equation}
		for which 
		\begin{equation}
		\Pd = \left[\begin{matrix}
		\lam \K & \epsilon \I \\
		\epsilon \I & \lam
		\end{matrix}\right]
		\end{equation}
		is a valid Lyapunov function for asymptotic stability. Therefore, the dynamical system is globally asymptotically stable.
		
	\end{proof}
	
	\subsection{Positive definite block matrices (Schur complement)}
	\label{sec:schur}
	For a symmetric block matrix of the form,
	\begin{equation}
	\Pd = \left[\begin{matrix}
	\A & \B \\
	\B^\top & \C
	\end{matrix}\right]
	\end{equation}
	with $\A = \A^\top$ and $\C = \C^\top$. If $\C$ is invertible the following two properties hold, \cite{zhang2006schur},
	\begin{itemize}
		\item $\Pd \succ 0$ if and only if $\C \succ 0$ and $\A - \B \C^{-1} \B^\top \succ 0$.
		\item If $\C \succ 0$, then $\Pd \succeq 0$ if and only if $\A - \B \C^{-1} \B^\top \succeq 0$.
	\end{itemize}
	
	\subsection{Lyapunov stability criteria}
	\label{sec:lyap-criteria}
	Consider a non-linear autonomous dynamical system defined as $\dot{\x} = \mathbf{f}(\x)$, with a point of equilibrium at $\x=\mathbf{0}$. Where $\x \in \mathbf{\mathcal{D}} \subseteq \R^n $ is the system state vector and $\mathbf{f}(\x): \mathbf{\mathcal{D}} \rightarrow \R^n$ is a continuous vector field on $\mathbf{\mathcal{D}}$ (contains origin). The dynamical system is called Lyapunov stable if there exists a real scalar function $V(\x): \R^n \rightarrow \R$, also known as the Lyapunov function, such that it satisfies the following conditions,
	\begin{itemize}
		\item $V(\x) = 0$, if and only if $\x=\mathbf{0}$.
		\item $V(\x) > 0$, if and only if $\x\neq \mathbf{0}$.
		\item $\dot{V}(\x) \leq 0, \, \forall \,  \x \neq \mathbf{0}$. Note that for asymptotic stability, we require the strict inequality $\dot{V}(\x) < 0$.
	\end{itemize}
	
	For a linear dynamical system of the form $\dot{\x} = \J \x$, where $\J \in \R^{n\times n}$, with a point of equilibrium at $\x=\mathbf{0}$. Consider a Lyapunov function $V(\x)$ of the form $\x^\top \Pd \x$, such that $\Pd \succ 0$. By the definition of a positive definite matrix, it satisfies the first two conditions, namely, $V(\x) = \x^\top \Pd \x = 0$ when $\x = \mathbf{0}$ and $V(\x) = \x^\top \Pd \x > 0$ when $x \neq \mathbf{0}$. For the third condition, consider $\dot{V}(\x)$,
	\begin{equation}
	\begin{split}
	\dot{V}(\x) &= \frac{\mathrm{d}}{\mathrm{d}t} V(\x) \\
	&= \frac{\mathrm{d}}{\mathrm{d}t} \x^\top \Pd \x \\
	&= \x^\top \Pd \dot{\x} + \dot{\x}^\top \Pd \x \\
	&= \x^\top \left(\Pd \J + \J^\top \Pd \right) \x
	\end{split}
	\end{equation}
	For stability, We need $\dot{V}(\x) = \x^\top \left(\Pd \J + \J^\top \Pd \right) \x \leq 0$. This is satisfied when the matrix $\Pd \J + \J^\top \Pd  \preceq 0$. In summary, a linear dynamical system $\dot{\x} = \J \x$ is Lyapunov stable if there exists a positive definite matrix $\mathbf{P}$, such that $\Pd \J + \J^\top \Pd$ is negative semi-definite.
	
	\section{Analytical eigenvalue for the fully normalized circuit}
	\label{appendix:analytical_eig}
	Here we show that when all of the normalization weights in the system are equal, to value $\alpha$, and the various parameters are scalars, i.e., $\btau_y = \tau_y \mathbf{1}$, $\btau_a = \tau_a \mathbf{1}$, $\bb_0 = b_0 \mathbf{1}$ and $\bsigma = \sigma \mathbf{1}$, we can derive a closed-form analytical expression for all of the eigenvalues. Considering these assumptions, we can break the determinant in Eq.~\ref{eq:determinant_organics} into a diagonal and non-diagonal part as follows,
	\begin{equation}
	\begin{split}
	\mathrm{det}(\mathbf{J} - \lambda \mathbf{I}) = \mathrm{det} \bigg( \lambda^2 \I +  \lambda \bigg[\frac{\I}{\tau_a}  + \frac{\mathbf{D}\left(\sqrt{\mathbf{a}_s}\right)}{\tau_y}\bigg] + \frac{\mathbf{D}\left(\sqrt{\mathbf{a}_s}\right)}{\tau_y \tau_a} - \lambda \frac{\W \mathbf{D}\left({\mathbf{y}_s}^2\right)}{\tau_a} \bigg) \label{eq:determinant_solution1}
	\end{split}
	\end{equation}
	Consider the non-diagonal part of the matrix in the determinant, $(\lambda/\tau_a) \W \mathbf{D}\left({\mathbf{y}_s^2}\right)$. Since $\W$ is a matrix with all entries equal to a positive constant,  $\alpha$, it is rank 1. Therefore, it can be written as the following outer product,
	\begin{equation}
	\mathbf{W} = \alpha
	\left[\begin{matrix}
	1 \\
	1 \\
	\vdots \\
	1
	\end{matrix}\right]
	\left[\begin{matrix}
	1 & 1 & \hdots & 1
	\end{matrix}\right]
	\end{equation}
	Therefore, the non-diagonal part of the matrix can be written as,
	\begin{equation}
	\frac{\lambda}{\tau_a} \mathbf{W}\, \mathbf{D}\left(\frac{\bb^2 \odot \mathbf{z}^2}{\sigma^2 b_0^2 \mathbf{1}+ \mathbf{W} \left(\bb^2 \odot\mathbf{z}^2\right)} \right) = \frac{\lambda \alpha}{\tau_a} \mathbf{u} \mathbf{v}^\top
	\end{equation}
	where $\mathbf{u} = [1, 1, ..., 1]^\top$ and $\mathbf{v} = \left(\bb^2 \odot \mathbf{z}^2\right) /\left(\sigma^2 b_0^2 \mathbf{1}+ \mathbf{W} \left(\bb^2 \odot\mathbf{z}^2\right)\right)$. We use the matrix determinant lemma which states that,
	\begin{equation}
	\mathrm{det}(\mathbf{A} - \gamma \mathbf{u} \mathbf{v}^\top) = (1 - \gamma \mathbf{v}^\top \mathbf{A}^{-1} \mathbf{u}) \, \mathrm{det}(\mathbf{A}) \label{eq:det_mat_lemma}
	\end{equation}
	The matrix $\A$ is given by,
	\begin{equation}
	\begin{split}
	\A &= \lambda^2 \I +  \lambda \bigg[\frac{\I}{\tau_a}  + \frac{\mathbf{D}\left(\sqrt{\mathbf{a}_s}\right)}{\tau_y}\bigg] + \frac{\mathbf{D}\left(\sqrt{\mathbf{a}_s}\right)}{\tau_y \tau_a}\\
	&=\lambda^2 \I +  \lambda \left[\frac{\I}{\tau_a}  + \frac{\mathbf{D}\left(\sqrt{\sigma^2 b_0^2 \mathbf{1}+ \mathbf{W} \left(\bb^2 \odot\mathbf{z}^2\right)}\right)}{\tau_y}\right] + \frac{\mathbf{D}\left(\sqrt{\sigma^2 b_0^2 \mathbf{1}+ \mathbf{W} \left(\bb^2 \odot\mathbf{z}^2\right)}\right)}{\tau_y \tau_a} \\
	&= \lambda^2 \I +  \lambda \left[\frac{\I}{\tau_a}  + \frac{\mathbf{D}\left(\sqrt{\sigma^2 b_0^2 \mathbf{1} + \alpha ||\bb \odot\mathbf{z}||^2}\right)}{\tau_y}\right] + \frac{\mathbf{D}\left(\sqrt{\sigma^2 b_0^2 \mathbf{1}+ \alpha || \bb \odot\mathbf{z}||^2}\right)}{\tau_y \tau_a} \\
	&=\left(\lambda^2 + \lambda \left(\frac{1}{\tau_a} + \frac{\sqrt{\sigma^2 b_0^2 + \alpha || \bb \odot\mathbf{z}||^2}}{\tau_y}\right) + \frac{\sqrt{\sigma^2 b_0^2 + \alpha ||\bb \odot\mathbf{z}||^2}}{\tau_y \tau_a}\right) \I 
	\end{split}
	\end{equation}
	Here, $||x||$, represents the Euclidean norm of $x$. Therefore, $\A=\delta \I$, where $\delta$ is a quadratic scalar polynomial in $\lambda$. Now using Eq.~\ref{eq:det_mat_lemma}, we can write
	\begin{equation}
	\begin{split}
	\mathrm{det}(\mathbf{J} - \lambda \mathbf{I}) &= \left(1 - \frac{\lambda \alpha}{\tau_a} \mathbf{v}^\top \left(\frac{1}{\delta}\I \right) \mathbf{u}\right) \delta^n \\
	&= \left(\delta - \frac{\lambda }{\tau_a} \frac{\alpha ||\bb \odot\mathbf{z}||^2}{\sigma^2 b_0^2+\alpha ||\bb \odot\mathbf{z}||^2}\right) \delta^{n-1}
	\end{split}
	\end{equation}
	Solving for the eigenvalues, $\mathrm{det}(\mathbf{J} - \lambda \mathbf{I})=0$, we get $2(n-1)$ repeated solutions by equation $\delta^{n-1}=0$, where each solution is found by solving $\delta=0$,
	\begin{equation}
	\lambda^2 + \lambda \left(\frac{1}{\tau_a} + \frac{\sqrt{\sigma^2 b_0^2 + \alpha ||\bb \odot\mathbf{z}||^2}}{\tau_y}\right) + \frac{\sqrt{\sigma^2 b_0^2 + \alpha ||\bb \odot\mathbf{z}||^2}}{\tau_y \tau_a} = 0
	\end{equation}
	This gives us the following strictly negative eigenvalues,
	\begin{equation}
	\lambda = -\frac{1}{\tau_a} \quad \& \quad \lambda=-\frac{\sqrt{\sigma^2 b_0^2 + \alpha ||\bb \odot\mathbf{z}||^2}}{\tau_y}
	\end{equation}
	The potentially complex eigenvalues are given by solving for zeroes of the first part of the factorized determinant,
	\begin{equation}
	\lambda^2 + \lambda \left(\frac{\sigma^2 b_0^2}{\tau_a(\sigma^2 b_0^2+\alpha || \bb \odot\mathbf{z}||^2)}  + \frac{\sqrt{\sigma^2 b_0^2 + \alpha || \bb \odot\mathbf{z}||^2}}{\tau_y} \right) + \frac{\sqrt{\sigma^2 b_0^2 + \alpha ||\bb \odot\mathbf{z}||^2}}{\tau_y \tau_a} = 0
	\end{equation}
	The eigenvalues found by solving this equation have negative real parts (as expected) for all choices of parameters since the coefficient of $\lambda$ and the constant term of the quadratic equation are positive for all choices of parameters and inputs. Therefore, this solution does admit complex roots (thereby oscillations) for certain choices of the parameters which can be found by solving this quadratic equation.

	\section{Convergent theorem}
	\label{theorem:convergent}
	\begin{theorem}
		A matrix $\A$ of the form,
		\begin{equation}
		\A = \mathbf{D}(\mathbf{t}) \, \mathbf{W} \, \mathbf{D}\left(\frac{\mathbf{u}}{\mathbf{v} + \mathbf{W} \mathbf{u}}\right)
		\end{equation}
		is convergent, i.e., its spectral radius is less than one, if \(\mathbf{W} \in {\mathbb{R}}^{n \times n}\) and \(\mathbf{t}, \mathbf{u}, \mathbf{v} \in \mathbb{R}^n\) with the additional constraints $0 < t_i < 1$, $u_i \geq 0$, $v_i > 0$ and $w_{ij}\geq 0$ for all $i, j$.
	\end{theorem}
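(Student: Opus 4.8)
The plan is to exploit the fact that $\A$ is entrywise nonnegative and to bound its spectral radius using the Collatz--Wielandt (Perron--Frobenius) inequality together with a judiciously chosen positive test vector. First I would observe that, under the stated hypotheses, $\A$ has only nonnegative entries: $t_i > 0$, $w_{ij} \geq 0$, and the diagonal scaling $u_j/(v_j + (\mathbf{W}\mathbf{u})_j)$ is well defined and nonnegative because its numerator is nonnegative and its denominator $v_j + (\mathbf{W}\mathbf{u})_j \geq v_j > 0$ is strictly positive. For any nonnegative matrix $\A$ and any strictly positive vector $\mathbf{x}$ one has the bound $\rho(\A) \leq \max_i (\A\mathbf{x})_i / x_i$; hence it suffices to produce a single positive $\mathbf{x}$ for which $\A\mathbf{x} < \mathbf{x}$ holds strictly in every coordinate.

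The key step is the choice $\mathbf{x} = \mathbf{v} + \mathbf{W}\mathbf{u}$, which is positive by the remark above. With this choice the computation telescopes: since $\mathbf{D}\!\left(\mathbf{u}/(\mathbf{v}+\mathbf{W}\mathbf{u})\right)\mathbf{x} = \mathbf{u}$ by construction, applying $\mathbf{W}$ and then $\mathbf{D}(\mathbf{t})$ gives $\A\mathbf{x} = \mathbf{t} \odot (\mathbf{W}\mathbf{u})$, i.e.\ $(\A\mathbf{x})_i = t_i (\mathbf{W}\mathbf{u})_i$. Comparing coordinatewise with $x_i = v_i + (\mathbf{W}\mathbf{u})_i$, and using $0 < t_i < 1$ together with $(\mathbf{W}\mathbf{u})_i \geq 0$, I obtain $(\A\mathbf{x})_i = t_i (\mathbf{W}\mathbf{u})_i \leq (\mathbf{W}\mathbf{u})_i < v_i + (\mathbf{W}\mathbf{u})_i = x_i$. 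The final inequality is strict because $v_i > 0$, so it survives even in the degenerate case $(\mathbf{W}\mathbf{u})_i = 0$; thus $\A\mathbf{x} < \mathbf{x}$ in every coordinate.

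To conclude, set $c = \max_i (\A\mathbf{x})_i / x_i = \max_i t_i (\mathbf{W}\mathbf{u})_i / (v_i + (\mathbf{W}\mathbf{u})_i)$. Each ratio is strictly below $1$, and since there are finitely many of them their maximum satisfies $c < 1$. The spectral bound then gives $\rho(\A) \leq c < 1$, proving convergence. If a self-contained justification of the bound is wanted, I would note that nonnegativity of $\A$ lets me iterate $\A\mathbf{x} \leq c\mathbf{x}$ into $\A^k \mathbf{x} \leq c^k \mathbf{x}$ entrywise, and since $\mathbf{x} > 0$ this controls $\|\A^k\|$ up to a fixed constant, whence $\rho(\A) = \lim_k \|\A^k\|^{1/k} \leq c$. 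I do not expect a serious obstacle: the entire argument hinges on guessing the test vector $\mathbf{x} = \mathbf{v} + \mathbf{W}\mathbf{u}$, after which everything reduces to the elementary cancellation $\mathbf{D}\!\left(\mathbf{u}/(\mathbf{v}+\mathbf{W}\mathbf{u})\right)(\mathbf{v}+\mathbf{W}\mathbf{u}) = \mathbf{u}$ and the trivial estimate $t_i < 1$; the only point requiring mild care is keeping the inequalities strict, which the positivity of $\mathbf{v}$ guarantees.
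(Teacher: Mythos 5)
Your proposal is correct and follows essentially the same route as the paper: both arguments hinge on the weighted row-sum bound $\rho(\A) \leq \max_i (\A\x)_i/x_i$ for nonnegative matrices with the same test vector $\x = \mathbf{v} + \mathbf{W}\mathbf{u}$, which makes the diagonal factor cancel. The only cosmetic difference is that the paper first discards $\mathbf{D}(\mathbf{t})$ via entrywise monotonicity of the spectral radius and then applies the bound to $\mathbf{W}\,\mathbf{D}\left(\mathbf{u}/(\mathbf{v}+\mathbf{W}\mathbf{u})\right)$, whereas you keep the $t_i$ factors and apply the bound to $\A$ directly.
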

	
	\begin{proof}
		Assuming the constraints mentioned in the theorem, we first notice that,
		\begin{equation}
		\mathbf{W} \, \mathbf{D}\left(\frac{\mathbf{u}}{\mathbf{v} + \mathbf{W} \mathbf{u}}\right)
		\end{equation}
		is a nonnegative matrix. Further, because $\mathbf{D}(\mathbf{t})$ is a positive diagonal matrix with all entries less than 1, we notice that the following is true element-wise, 
		\begin{equation}
		\mathbf{D}(\mathbf{t}) \, \mathbf{W} \, \mathbf{D}\left(\frac{\mathbf{u}}{\mathbf{v} + \mathbf{W} \mathbf{u}}\right) < \mathbf{W} \, \mathbf{D}\left(\frac{\mathbf{u}}{\mathbf{v} + \mathbf{W} \mathbf{u}}\right)
		\end{equation}
		We denote the spectral radius of $\A$, $\max\{|\lambda|:\lambda=\sigma(\A)\}$, by $\rho(\A)$, where $\sigma(\A)$ is the spectrum of $\A$ and make use of the following inequality for the spectral radius of nonnegative matrices.,
		\begin{theorem}{(Theorem~8.1.18 from ~\cite{horn2012matrix})}
			Let $\mathbf{X}$ and $\mathbf{Y}$ be nonnegative matrices with spectral radius $\rho(\mathbf{X})$ and $\rho(\mathbf{Y})$, respectively. If $\mathbf{X} \leq \mathbf{Y}$ ($x_{ij} \leq y_{ij}, \forall \; i, j$), then $\rho(\mathbf{X}) \leq \rho(\mathbf{Y})$.
		\end{theorem}
		Therefore, we have,
		\begin{equation}
		\rho\left(\A\right) \leq \rho \left(\mathbf{W} \, \mathbf{D}\left(\frac{\mathbf{u}}{\mathbf{v} + \mathbf{W} \mathbf{u}}\right)\right)
		\end{equation}
		
		Now the elements of the matrix are given by,
		\begin{equation}
		\begin{split}
		\mathbf{W} \, \mathbf{D}\left(\frac{\mathbf{u}}{\mathbf{v} + \mathbf{W} \mathbf{u}}\right) = \left[\begin{matrix}
		w_{11} & w_{12} & \hdots \\
		w_{21} & w_{22} & \hdots \\
		\vdots & \vdots & \ddots \\
		\end{matrix}\right]  \left[\begin{matrix}
		\frac{u_1}{v_1 +\sum_j w_{1j} u_j } & 0 & \hdots \\
		0 & \frac{u_2}{v_2 +\sum_j w_{2j} u_j } & \hdots \\
		\vdots & \vdots & \ddots \\
		\end{matrix}\right] 
		\end{split}
		\end{equation}
		Let $s_{ij}$ be the $i, j$ element of this matrix. Upon multiplication of the matrices above, we find that,
		\begin{equation}
		s_{ij} = \frac{w_{ij} u_j}{v_j +\sum_k w_{jk} u_k} 
		\end{equation}
		Now we use the following theorem that provides an upper bound for the spectral radius based on the entries of the matrix.
		\begin{theorem}{(Theorem~8.1.26 from ~\cite{horn2012matrix})}
			\label{theorem:spectral_radius}
			Let $\mathbf{S}$ be a nonnegative matrix. Then for any positive vector $\mathbf{x} \in \R^n$ with entries $x_j$, we have,
			\begin{equation}
			\rho(\mathbf{S}) \leq \max_{1\leq i \leq n} \frac{1}{x_i} \sum_{j=1}^n s_{ij} x_j
			\end{equation}
		\end{theorem}
		
		We pick $x_j = v_j + \sum_k w_{jk} u_k$, which is a positive vector, and apply the theorem to the matrix. Substituting $s_{ij}$ and $x_j$, we get the following bound for the spectral radius,
		\begin{equation}
		\rho \left(\mathbf{W} \, \mathbf{D}\left(\frac{\mathbf{u}}{\mathbf{v} + \mathbf{W} \mathbf{u}}\right)\right) \leq \max_{1\leq i \leq n} \frac{1}{v_i + \sum_k w_{ik} u_k} \sum_{j=1}^n  \frac{w_{ij} u_j}{v_j +\sum_k w_{jk} u_k} \left(v_j + \sum_k w_{jk} u_k\right) \label{eq:spectral-bound}
		\end{equation}
		Upon simplification, we  get,
		\begin{equation}
		\begin{split}
		\rho \left(\mathbf{W} \, \mathbf{D}\left(\frac{\mathbf{u}}{\mathbf{v} + \mathbf{W} \mathbf{u}}\right)\right) \leq \max_{1\leq i \leq n} \frac{ \sum_j w_{ij} u_j}{v_i + \sum_k w_{ik} u_k}  = \max_{1\leq i \leq n} \frac{ \sum_k w_{ik} u_k}{v_i + \sum_k w_{ik} u_k}   < 1    
		\end{split}
		\end{equation}
		The inequality is true, because, for all $i$, the denominator is larger than the numerator because of the extra positive term $v_i$. Therefore, 
		\begin{equation}
		\rho\left(\A\right) \leq \rho \left(\mathbf{W} \, \mathbf{D}\left(\frac{\mathbf{u}}{\mathbf{v} + \mathbf{W} \mathbf{u}}\right)\right) < 1
		\end{equation}
		and $\A$ is a convergent matrix.
	\end{proof}

	\section{Linear stability analysis of the two-dimensional model}
	\label{sec:wr-not-identity-2d}
	Here we consider the dynamical stability of the 2D model containing one neuron each of $y$ and $a$ when the recurrent scalar, $w_r$ can take any positive value. The dynamical system to be analyzed is given by,
	\begin{equation}
	\begin{split}
	\tau_{y} \dot{y} &= -y +  b  z + \left(1 - \sqrt{\lfloor a\rfloor}\right)w_r y \\
	\tau_{a} \dot{a} &= -a + b_0^2 \sigma^2 + w y^2 \lfloor a\rfloor 
	\end{split}
	\label{eq:2d_recurrent}
	\end{equation}
	with a positive real constraint on the following set of parameters, $\tau_{y}, \tau_{a}, b, b_0, \sigma, w$. We first notice the symmetry in the dynamical system about $y=0$. Replacing, $\hat{y} \rightarrow -y$, we get the mirrored dynamical system,
	\begin{equation}
	\begin{split}
	\tau_{\hat{y}} \dot{\hat{y}} &= -\hat{y} + b (-z) + \left(1 - \sqrt{\lfloor a\rfloor}\right)w_r \hat{y} \\
	\tau_{a} \dot{a} &= -a + b_0^2 \sigma^2 + w \hat{y}^2 \lfloor a\rfloor 
	\end{split}
	\label{eq:mirrored}
	\end{equation}
	These equations are the same as Eq.~\ref{eq:2d_recurrent}, up to a sign change in $z$. Therefore, we can derive analogous conditions for stability for $z<0$ once the conditions for $z>0$ are known. 
	
	The steady-state of Eq.~\ref{eq:2d_recurrent}, $(y_s, a_s)$ satisfies the following equations,
	\begin{align}
	y_s &=  b  z + \left(1 - \sqrt{\lfloor a_s\rfloor}\right)w_r y_s \\ 
	a_s &= b_0^2 \sigma^2 + w y_s ^2 \lfloor a_s\rfloor 
	\end{align}
	Since the RHS of the second equation is always positive, if a root exists, we have $a_s>0$. Therefore, we can remove the rectification around $a_s$ and look for positive solutions for $a_s$. Upon rearranging the terms, we get,
	\begin{align}
	(1 - w_r + w_r \sqrt{a_s}) \, y_s &=  b z \label{eq:y_fixed_point}  \\
	(1 - w y_s ^2) \, a_s &= b_0^2 \sigma^2 \label{eq:a_fixed_point}
	\end{align}
	Substituting $y_s$ from Eq.~\ref{eq:y_fixed_point} in Eq.~\ref{eq:a_fixed_point}, we get the following quartic equation in $m = \sqrt{a_s}$,
	\begin{equation}
	\begin{split}
	{w_r}^2 m^4 + 2(1-w_r) w_r m^3 + & \left((1-w_r)^2 - w b^2 z^2 - b_0^2 \sigma^2 {w_r}^2 \right) m^2 \\& - 2 (1 - w_r) w_r b_0^2 \sigma^2 m  - (1 - w_r)^2 b_0^2 \sigma^2 = 0 
	\end{split}
	\label{eq:quartic_poly}
	\end{equation}
	For a valid fixed point $(y_s, a_s)$, i.e., $y_s \in \R$ and $a_s \in \R^+_*$, a necessary condition is that $a_s$ must be positive and real, which in turn implies $m$ must be positive and real. 
	\begin{theorem}
		\label{theorem:fixed_point_theorem}
		A fixed point, $(y_s, a_s)$, is valid if and only if it satisfies $\sqrt{a_s} > 0$.
	\end{theorem}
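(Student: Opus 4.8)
The plan is to prove the two implications separately, using the quartic \eqref{eq:quartic_poly} in $m=\sqrt{a_s}$ to parametrize all candidate fixed points and writing $P(m)$ for its left-hand side. The forward direction is immediate: if $(y_s,a_s)$ is valid then, by definition, $a_s\in\R^+_*$, so $m=\sqrt{a_s}$ is real and strictly positive, giving $\sqrt{a_s}>0$.

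For the converse, suppose $m=\sqrt{a_s}>0$ is a real positive root of $P$. Setting $a_s=m^2>0$ already secures $a_s\in\R^+_*$, so the only remaining thing to check is that the associated $y_s$ is a finite real number. Solving \eqref{eq:y_fixed_point} gives $y_s=bz/d$ with $d:=1-w_r+w_r m$, so validity reduces to showing $d\neq 0$ at every positive root of $P$.

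The main obstacle is exactly this non-vanishing of $d$. When $0<w_r\leq 1$ it is automatic, since $d=(1-w_r)+w_r m>0$ with both summands nonnegative and $w_r m>0$. The delicate case is $w_r>1$, where $d$ vanishes at the single point $m_0=(w_r-1)/w_r\in(0,1)$. Here I would use the grouped identity $P(m)=(d^2-wb^2z^2)\,m^2-b_0^2\sigma^2 d^2$, which is just \eqref{eq:a_fixed_point} after substituting $y_s=bz/d$, $a_s=m^2$ and clearing the denominator. Evaluating at $d=0$ collapses it to $P(m_0)=-wb^2z^2\,m_0^2$, which for $z\neq 0$ is strictly negative because $w,b>0$ and $m_0>0$. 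Hence $m_0$ cannot be a root, so $d\neq 0$ at every positive root, and $y_s=bz/d$ is a well-defined real number.

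Combining the two directions, every positive real root of \eqref{eq:quartic_poly} produces a genuine fixed point $(bz/d,\,m^2)$ satisfying both \eqref{eq:y_fixed_point} and \eqref{eq:a_fixed_point} (the latter being equivalent to $P(m)=0$ once $d\neq 0$ justifies the clearing of denominators), while every valid fixed point forces $\sqrt{a_s}>0$; this is the claimed equivalence. The payoff is that it lets us identify valid fixed points with the positive real roots of the quartic, which is the foundation for the root-counting carried out in Theorems~\ref{theorem:recurrence_contractive} and~\ref{theorem:recurrence_expansive}.
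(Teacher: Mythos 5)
Your proof is correct and reaches the same conclusion by a genuinely different route. The paper factors the argument through the intermediate threshold $b_0\sigma$: Lemma~\ref{lemma:validity_of_fixed_point} shows validity is equivalent to $\sqrt{a_s}>b_0\sigma$, and Lemma~\ref{lemma:b0_sigma_condition} shows that for roots of the quartic \eqref{eq:quartic_poly} the conditions $m>0$ and $m>b_0\sigma$ coincide; both lemmas lean on the representation $m=b_0\sigma/\sqrt{1-wy_s^2}$ extracted from \eqref{eq:a_fixed_point}. You instead isolate the only way the converse could fail --- the vanishing of $d=1-w_r+w_r m$, which would leave $y_s=bz/d$ undefined --- and kill it by evaluating the quartic at the unique zero $m_0=1-1/w_r$ of $d$, using the grouped form $P(m)=(d^2-wb^2z^2)m^2-b_0^2\sigma^2 d^2$ (which I checked expands to \eqref{eq:quartic_poly}). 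Your route buys something the paper's does not: the backward direction of Lemma~\ref{lemma:b0_sigma_condition} tacitly assumes that a root of the quartic already comes with a real $y_s$ satisfying \eqref{eq:a_fixed_point}, which is close to assuming what is being proved; your construction of $y_s=bz/d$ as an explicitly real number, followed by division of $P(m)=0$ by $d^2$ to recover \eqref{eq:a_fixed_point}, avoids that circularity. The paper's route, in exchange, delivers the stronger quantitative fact $m>b_0\sigma$, which is reused later in the proof of Theorem~\ref{theorem:recurrence_expansive}. One caveat: your exclusion of $m_0$ as a root requires $z\neq 0$ (at $z=0$ and $w_r>1$ the quartic degenerates to $d^2(m^2-b_0^2\sigma^2)$ and $m_0$ is a genuine positive root that need not correspond to a real $y_s$ when $m_0<b_0\sigma$). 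You flag this, and since all downstream uses of the theorem concern $z>0$ or $z<0$, it is harmless --- indeed the paper's own argument is no better at $z=0$ --- but the restriction should be stated in the theorem's hypotheses if your proof is adopted.
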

	\begin{proof}
		Due to Lemma~\ref{lemma:validity_of_fixed_point}, we have: a fixed point, $(y_s, a_s)$, is valid if and only if it satisfies $m > b_0 \sigma$. Further due to Lemma~\ref{lemma:b0_sigma_condition}, we have that the conditions $m>0$ and $m>b_0 \sigma$ are equivalent. Combining these two we get the statement of the theorem.
	\end{proof}
	
	\begin{lemma}
		\label{lemma:validity_of_fixed_point}
		A fixed point, $(y_s, a_s)$, is valid if and only if it satisfies $\sqrt{a_s} > b_0 \sigma$.
	\end{lemma}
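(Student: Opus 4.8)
The plan is to argue directly from the two scalar steady-state equations \eqref{eq:y_fixed_point} and \eqref{eq:a_fixed_point}, exploiting the sign structure of \eqref{eq:a_fixed_point} to decide exactly when $y_s$ is real and $a_s$ is strictly positive. By the reflection symmetry captured in \eqref{eq:mirrored} I will assume $z>0$ without loss of generality, so that $bz>0$ throughout; the case $z<0$ then follows by the substitution $\hat{y}\to -y$. Recall also that the right-hand side of the $a$-equation is manifestly positive, so $a_s>0$ is automatic and the rectification $\lfloor a_s\rfloor$ may be dropped, which is what licenses working with the unrectified forms \eqref{eq:y_fixed_point}--\eqref{eq:a_fixed_point}.

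For the forward direction I would assume the fixed point is valid, i.e.\ $y_s\in\R$ and $a_s\in\R^+_*$, so that $m=\sqrt{a_s}>0$ is a positive real number. Rearranging \eqref{eq:a_fixed_point} as $a_s = b_0^2\sigma^2 + w\,y_s^2\,a_s$ and using $w>0$, $y_s^2\ge 0$, $a_s>0$ shows the recurrent term is nonnegative, whence $a_s\ge b_0^2\sigma^2$ and therefore $m\ge b_0\sigma$. To upgrade this to a strict inequality I would rule out the boundary case $y_s=0$: by \eqref{eq:y_fixed_point}, $y_s=0$ forces $bz=0$, which is impossible since $b>0$ and $z>0$. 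Hence $y_s\neq 0$, the recurrent term is strictly positive, and $m>b_0\sigma$.

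For the converse I would start from the hypothesis $m=\sqrt{a_s}>b_0\sigma$. Because this is an inequality between real numbers, it already presupposes that $a_s=m^2>b_0^2\sigma^2>0$ is positive and real, so any complex or nonpositive candidate root of the quartic \eqref{eq:quartic_poly} is excluded at the outset and $a_s\in\R^+_*$ holds. Solving \eqref{eq:a_fixed_point} for $y_s^2$ then gives $y_s^2=(a_s-b_0^2\sigma^2)/(w\,a_s)$, a quotient of strictly positive quantities, so $y_s^2>0$ and $y_s$ is real (indeed nonzero). Thus $(y_s,a_s)$ is valid, which closes the equivalence.

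The only genuinely delicate step is the strictness in the forward direction, which rests entirely on excluding $y_s=0$ and hence on the standing assumption $z\neq 0$; this is precisely why the sharp threshold is $m>b_0\sigma$ rather than $m\ge b_0\sigma$. No heavier machinery is needed beyond this short sign analysis of \eqref{eq:a_fixed_point}, and the companion Lemma~\ref{lemma:b0_sigma_condition} will subsequently replace the threshold $m>b_0\sigma$ by the simpler $m>0$, which is what feeds into Theorem~\ref{theorem:fixed_point_theorem}.
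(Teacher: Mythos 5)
Your proof is correct and takes essentially the same route as the paper's: both directions hinge on rearranging Eq.~\ref{eq:a_fixed_point} to relate $a_s$ and $y_s^2$ and reading off the sign of the recurrent term. You are in fact slightly more careful than the paper in the forward direction, where you explicitly rule out $y_s=0$ (via $bz\neq 0$, using Eq.~\ref{eq:y_fixed_point}) to justify the \emph{strict} inequality $\sqrt{a_s}>b_0\sigma$, a point the paper's one-line computation $m=b_0\sigma/\sqrt{1-wy_s^2}$ leaves implicit.
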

	\begin{proof}
		$\implies$ Given, $m = \sqrt{a_s} > b_0 \sigma$, we have $a_s > b_0^2 \sigma^2$, therefore, $a_s \in \R^+_*$; and $y_s = \sqrt{m^2 - b_0^2 \sigma^2}/(m \sqrt{w})$, therefore, $y_s \in \R$. $\impliedby$ 
		Given $a_s \in \R^+_*$ and $y_s \in \R$, implies $m \in \R^+_*$. Now, Eq.~\ref{eq:a_fixed_point} posits that, $m = b_0 \sigma /\sqrt{1 - w y_s^2}$, therefore, $m>b_0 \sigma$.
	\end{proof}
	
	\begin{lemma}
		\label{lemma:b0_sigma_condition}
		For a positive real root, $m = \sqrt{a_s}$, to the quartic equation (Eq.~\ref{eq:quartic_poly}), the condition $m>0$ is equivalent to the condition $m>b_0 \sigma$. Further, no fixed point satisfies $0< m <b_0 \sigma$.
	\end{lemma}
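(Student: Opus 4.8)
The plan is to avoid manipulating the expanded quartic in Eq.~\ref{eq:quartic_poly} directly and instead exploit the fact that this quartic was produced by clearing the denominator $(1 - w_r + w_r m)^2$ from the fixed-point relations, so it still admits a structured factored form. Writing $P(m)$ for the left-hand side of Eq.~\ref{eq:quartic_poly}, I would first verify by direct expansion the identity
\[
P(m) = (1 - w_r + w_r m)^2\,(m^2 - b_0^2 \sigma^2) - w\,b^2 z^2\, m^2 .
\]
This is exactly the expression obtained by substituting $y_s = bz/(1 - w_r + w_r m)$ from Eq.~\ref{eq:y_fixed_point} into Eq.~\ref{eq:a_fixed_point} and multiplying through by $(1 - w_r + w_r m)^2$; matching the coefficients of $m^4,\dots,m^0$ against Eq.~\ref{eq:quartic_poly} is routine bookkeeping.

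With the factored form in hand, the sign argument is immediate. For $m \in (0, b_0\sigma)$ I would observe that $(1 - w_r + w_r m)^2 \ge 0$ while $m^2 - b_0^2\sigma^2 < 0$, so the first term is $\le 0$; and since $w, b, \sigma > 0$, $m > 0$, and $z \ne 0$ (the regime of Theorems~\ref{theorem:recurrence_contractive} and~\ref{theorem:recurrence_expansive}), the second term $-w b^2 z^2 m^2$ is strictly negative. Hence $P(m) < 0$ throughout the interval, so $P$ has no root there. This establishes that no positive real root of the quartic lies in $(0, b_0\sigma)$.

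To finish the equivalence in the first part, I would note that $P(b_0\sigma) = -w b^2 z^2 b_0^2 \sigma^2 < 0$, so $m = b_0\sigma$ is itself not a root; combined with the previous paragraph, every positive real root satisfies $m > b_0\sigma$, while the reverse implication $m > b_0\sigma \Rightarrow m > 0$ is trivial. For the second part, every fixed point yields a root $m = \sqrt{a_s}$ of the quartic (this is precisely the substitution used to derive it), so the absence of roots in $(0, b_0\sigma)$ forbids any fixed point with $0 < m < b_0\sigma$; equivalently, this can be read off directly from Eq.~\ref{eq:a_fixed_point}, which forces $a_s = b_0^2\sigma^2/(1 - w y_s^2) \ge b_0^2\sigma^2$ whenever $a_s > 0$.

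The main subtlety — rather than any hard computation — is the degenerate case $z = 0$, which is why recognizing the factored form is worthwhile. There $P(m)$ reduces to $(1 - w_r + w_r m)^2(m^2 - b_0^2\sigma^2)$, which acquires a double root at $m = (w_r-1)/w_r$ that can fall inside $(0, b_0\sigma)$ when $w_r > 1$. This root is spurious: it corresponds to no real $y_s$ and hence to no valid fixed point. Restricting to $z \ne 0$, as in the enclosing theorems, eliminates it and makes the strict negativity of $-w b^2 z^2 m^2$ the crux of the argument.
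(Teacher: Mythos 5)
Your proof is correct, and it takes a different route from the paper's. The paper argues backwards through the fixed-point relation: a positive root $m$ of the quartic satisfies Eq.~\ref{eq:a_fixed_point}, hence $m = b_0\sigma/\sqrt{1-wy_s^2}$, and $m>0$ forces $0 < 1 - wy_s^2 < 1$, giving $m > b_0\sigma$ in one line. That argument implicitly assumes the root corresponds to a genuine fixed point with a finite, nonzero $y_s$ — which holds precisely because $z \neq 0$ in the enclosing theorems, but the paper does not flag this. You instead work directly with the polynomial: you recover the factored form $P(m) = (1-w_r+w_r m)^2(m^2 - b_0^2\sigma^2) - wb^2z^2m^2$ (which does match Eq.~\ref{eq:quartic_poly} coefficient by coefficient) and show $P < 0$ on all of $(0, b_0\sigma]$, so the quartic simply has no roots there. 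This costs you the bookkeeping of verifying the factorization, but it buys two things: the conclusion is a statement purely about roots of the quartic, which is how the lemma is actually phrased, and it makes explicit where the argument genuinely needs $z \neq 0$ — your observation that at $z=0$ with $w_r>1$ the quartic acquires a spurious double root at $m=(w_r-1)/w_r$ (where $1-w_r+w_rm=0$, so no finite $y_s$ exists) is a real subtlety that the paper's proof glosses over. Both proofs are valid in the regime the lemma is used.
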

	\begin{proof}
		$\implies$ Given $m > b_0 \sigma$, we have $m > 0$ because $b_0 > 0$ and $\sigma>0$. 
		$\impliedby$ Since $m$ is a root of the quartic, it also satisfies Eq.~\ref{eq:a_fixed_point}. Therefore, we have $m = b_0 \sigma / \sqrt{1 - wy_s^2}$. Since we are given $m>0$, this implies that $0 < 1 - w y_s^2 < 1$, or $m > b_0 \sigma$.
		Therefore, $m>0$ is equivalent to $m>b_0 \sigma$. This also implies that there exists no fixed point that satisfies $0< m <b_0 \sigma$.
	\end{proof}

	Further, the Jacobian matrix at the fixed point of the dynamical system, in terms of the parameters and the fixed point is given by,
	\begin{equation}
	\mathbf{J} = \left[\begin{matrix}
	\frac{w_r - 1 - w_r \sqrt{a_s}}{\tau_y } & -\frac{w_r y_s}{2 \sqrt{a_s}\tau_y } \\[6pt]
	\frac{2 w a_s  y_s }{\tau_a}  & \frac{-1 + w y_s ^2}{\tau_a}
	\end{matrix}\right] 
	\end{equation}
	From the linear stability theory, we know that a fixed point $(y_s, a_s)$ is asymptotically stable when the real part of the eigenvalues of $\mathbf{J}$ are less than $0$, i.e., $\mathrm{Re}(\lambda_{\mathbf{J}})<0$. For a 2D system, this is equivalent to the conditions: $\Tr(\mathbf{J}) < 0$ and $\det(\mathbf{J}) > 0$. The trace of the Jacobian matrix is given by,
	\begin{equation}
	\begin{split}
	\Tr(\J) &= \frac{w_r - 1 - w_r \sqrt{a_s}}{\tau_y } + \frac{-1 + w y_s ^2}{\tau_a} \\
	&= -\left( \frac{1 - w_r + w_r \sqrt{a_s}}{\tau_y } + \frac{1 - w y_s ^2}{\tau_a} \right) \\
	&= - \left(\frac{1 - w_r + w_r \sqrt{a_s}}{\tau_y } + \frac{b_0^2 \sigma^2}{a_s\tau_a} \right)
	\end{split}
	\label{eq:trace_condition}
	\end{equation}
	The determinant of the Jacobian matrix is given by,
	\begin{equation}
	\begin{split}
	\det(\J) &= \left(\frac{w_r - 1 - w_r \sqrt{a_s}}{\tau_y }\right) \left(\frac{-1 + w y_s ^2}{\tau_a}\right) + \left(\frac{w_r y_s}{2 \sqrt{a_s}\tau_y }\right) \left(\frac{2 w a_s y_s}{\tau_a}\right)  \\ 
	&= \frac{(1 - w_r)(1 - w y_s ^2)}{\tau_a \tau_y} + \frac{w_r \sqrt{a_s}}{\tau_a \tau_y} \\
	&= \frac{1}{\tau_y \tau_a}\left(\frac{(1 - w_r) b_0^2 \sigma^2}{a_s} + w_r \sqrt{a_s}\right)
	\end{split}
	\label{eq:det_condition}
	\end{equation}
	Now we consider the different conditions on $w_r$ and the input drive $z$ for stability and state the various cases as theorems along with their proofs,
	
	\subsection{Contractive constraint on recurrence ($0 < w_r \leq 1$)}
	\begin{itemize}
		\item \textit{$z>0:$ There exists a unique fixed point with $y_s>0$ and $a_s>0$ and it is asymptotically stable.}
		\begin{proof}
			The existence and uniqueness of the fixed point, along with its stability properties, are established by Lemma~\ref{lemma:contractive}. Additionally, the positivity of $a_s$ (i.e., $a_s > 0$) ensures that the expression $1 - w_r + w_r \sqrt{a_s} > 0$. Consequently, given that $z > 0$, it follows from Eq.~\ref{eq:y_fixed_point} that $y_s > 0$.
		\end{proof}
		\item \textit{$z<0:$ There exists a unique fixed point with $y_s<0$ and $a_s>0$ and it is asymptotically stable.}
		\begin{proof}
			Since this condition becomes equivalent to $z>0$, up to a sign change in $y$ (Eq.~\ref{eq:mirrored}), it is straightforward to see why this is true.
		\end{proof}
	\end{itemize}

	\begin{lemma}
		\label{lemma:contractive}
		Given $0 < w_r \leq 1$ and $z \in \R$, there exists a unique fixed point and it is asymptotically stable.
	\end{lemma}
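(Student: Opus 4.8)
The plan is to decouple the two requirements of the lemma—asymptotic stability on the one hand and existence/uniqueness of the fixed point on the other—and to dispatch stability almost for free, leaving the root-counting of the quartic as the only real work. First I would note that for any valid fixed point the rectification in Eq.~\ref{eq:2d_recurrent} is inactive, since $a_s>0$, so the smooth Jacobian applies and the $2\times 2$ stability test reduces to $\Tr(\J)<0$ together with $\det(\J)>0$. Inspecting Eq.~\ref{eq:trace_condition} and Eq.~\ref{eq:det_condition}, when $0<w_r\le 1$ every summand in $-\Tr(\J)$ and in $\det(\J)$ is manifestly nonnegative: $1-w_r\ge 0$, $w_r\sqrt{a_s}>0$, and $b_0^2\sigma^2/a_s>0$. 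Hence $\Tr(\J)<0$ and $\det(\J)>0$ hold unconditionally for this parameter range, so asymptotic stability of a valid fixed point is automatic and the entire content of the lemma collapses to showing that exactly one valid fixed point exists.

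For existence and uniqueness I would work with the quartic Eq.~\ref{eq:quartic_poly} in $m=\sqrt{a_s}$ and apply Descartes' rule of signs. With $0<w_r<1$ the coefficient signs of $(c_4,c_3,c_2,c_1,c_0)$ are $(+,+,?,-,-)$, where only $c_2=(1-w_r)^2-wb^2z^2-b_0^2\sigma^2 w_r^2$ has indeterminate sign. A short case check shows that regardless of whether $c_2$ is positive, negative, or zero, the sequence of nonzero coefficients exhibits exactly one sign change, and since the count cannot drop below one by an even number, Descartes' rule forces exactly one positive real root $m>0$. By Lemma~\ref{lemma:b0_sigma_condition} every positive root automatically satisfies $m>b_0\sigma$, and by Theorem~\ref{theorem:fixed_point_theorem} (equivalently Lemma~\ref{lemma:validity_of_fixed_point}) this root corresponds to a valid fixed point.

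It remains to convert the unique $m$ into a unique pair $(y_s,a_s)$ and to handle the boundary $w_r=1$. Setting $a_s=m^2$ and solving Eq.~\ref{eq:y_fixed_point} gives $y_s=bz/(1-w_r+w_r m)$, whose denominator is strictly positive because $1-w_r\ge 0$ and $w_r m>0$, so $m$ determines $(y_s,a_s)$ uniquely. At $w_r=1$ the quartic degenerates to $m^2\big(m^2-(wb^2z^2+b_0^2\sigma^2)\big)=0$, whose only positive root is $m=\sqrt{wb^2z^2+b_0^2\sigma^2}$, recovering the identity-recurrence fixed point of Eq.~\ref{eq:fixed_points_various}, and the stability argument above applies verbatim. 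I expect the main obstacle to be the $w_r<1$ root count: the sign of $c_2$ genuinely depends on the input and parameters, so the argument must verify that the single sign change is robust to this indeterminacy rather than assuming a fixed coefficient pattern. Once that is pinned down, combining the automatically negative trace and positive determinant with the unique valid root completes the proof.
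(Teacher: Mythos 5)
Your proposal is correct and follows essentially the same route as the paper: stability of any valid fixed point is read off from the sign structure of $\Tr(\J)$ (Eq.~\ref{eq:trace_condition}) and $\det(\J)$ (Eq.~\ref{eq:det_condition}) under $0<w_r\leq 1$, and existence/uniqueness is obtained by applying Descartes' rule of signs to the quartic Eq.~\ref{eq:quartic_poly}, checking that the single sign change is robust to the indeterminate sign of the $m^2$ coefficient. Your explicit treatment of the degenerate quartic at $w_r=1$ and the recovery of $y_s$ from the unique root $m$ are slightly more detailed than the paper's but amount to the same argument.
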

	\begin{proof}
		We observe that given $0 < w_r \leq 1$, if there exists a valid fixed point, which satisfies $a_s>0$, $\Tr(\J)$ in Eq.~\ref{eq:trace_condition} is less than 0 (since $1 - w_r + w_r \sqrt{a_s}>0$ for any $\sqrt{a_s}>0$), and $\det(\J)$ in Eq.~\ref{eq:det_condition} is greater than $0$ for all combinations of parameters and $z \in \R$. Therefore, we need to find the constraints on the parameters that allow for at least one fixed point. The fixed point, $m=\sqrt{a_s}$, satisfies the quartic polynomial in Eq~\ref{eq:quartic_poly}. Due to Theorem~\ref{theorem:fixed_point_theorem}, for a valid fixed point, we need to find positive real roots that satisfy $m > 0$.
		
		The sequence of signs of the coefficients of the polynomial when $0 \leq w_r < 1$ is given by $(+, +, \pm, -, -)$. We use \textit{Descartes' Rule of Signs}, which states the following: The number of positive real roots of a polynomial \( p(x) \) is either equal to the number of sign changes (omitting the zero coefficients) between consecutive non-zero coefficients of \( p(x) \), or it is less than this by a multiple of 2. Since there is exactly one sign change from left to right in the sequence, regardless of the sign of the coefficient of $m^2$, we know that the equation above has exactly one real positive root for $m$. This further implies that $a_s$ has exactly one positive root and the corresponding fixed point $(y_s, a_s)$, is locally dynamically stable for all of the combinations of parameters and $z \in \R$. 
		
		Note that the result also holds for $w_r = 1$. There is exactly one positive fixed point $(y_s, a_s)$ and it is given by, i.e., $y_s = b z / \sqrt{b_0^2 \sigma^2 + w b^2 z ^2}$ and $a_s = b_0^2 \sigma^2 + w b^2 z ^2$. 
	\end{proof}

	\subsection{Expansive constraint on recurrence ($w_r > 1$)}
	Now we consider the case when $w_r>1$. We first find an alternative form of the determinant of the Jacobian (Eq.\ref{eq:det_condition}) that is more amenable to the case, $w_r>1$,
	\begin{equation}
	\begin{split}
	\det(\J) &= \frac{1}{\tau_y \tau_a}\left((1 - w_r)(1 - w y_s ^2) + w_r \sqrt{a_s}\right) \\
	&= \frac{1}{\tau_y \tau_a}\left(1 - w_r + w_r \sqrt{a_s} + (w_r-1) w y_s ^2\right)
	\end{split}
	\label{eq:det_condition_2}
	\end{equation}
	Rewriting the trace,
	\begin{equation}
	\Tr(\J) = - \left(\frac{1 - w_r + w_r \sqrt{a_s}}{\tau_y } + \frac{b_0^2 \sigma^2}{a_s\tau_a} \right)  \label{eq:trace_condition_2}
	\end{equation}
	
	The properties of stability can be summarized in the following two cases,
	
	\begin{itemize}
		\item \textit{$z>0:$ There exists exactly one fixed point with $y_s>0$ and $a_s>0$ and it is asymptotically stable. Further, if $b_0 \sigma > 1 - 1/w_r$, there exist no additional fixed points. But if $b_0 \sigma < 1 - 1/w_r$, then there exist either two or no fixed points with $y_s<0$ and $a_s>0$ and they may or may not be stable.}
		\begin{proof}
			Since the fixed point satisfies Eq.~\ref{eq:y_fixed_point}, there are two possibilities, either $1 - w_r + w_r \sqrt{a_s} > 0$ or $1 - w_r + w_r \sqrt{a_s} < 0$. We consider them separately,
			
			\begin{itemize}
				\item $1 - w_r + w_r \sqrt{a_s} > 0:$ If $z>0$, we have $y_s>0$ because $y_s$ satisfies Eq.~\ref{eq:y_fixed_point}. Further due to Lemma~\ref{lemma:0}, this fixed point is unique and asymptotically stable for all combinations of parameters.
				\item $1 - w_r + w_r \sqrt{a_s} < 0:$ If $z>0$, we have $y_s<0$ because $y_s$ satisfies Eq.~\ref{eq:y_fixed_point}. Further, if we are given $b_0 \sigma > 1 - 1/w_r$, no root exists. We prove this by contradiction, assuming a root exists that satisfies $\sqrt{a_s} < 1 - 1/w_r$, since $b_0 \sigma > 1 - 1/w_r$, this root satisfies $0 < \sqrt{a_s} < b_0 \sigma$, but due to Lemma~\ref{lemma:b0_sigma_condition} no such root exists, therefore we have a contradiction.
				
				Now if we have $b_0 \sigma < 1 - 1/w_r$, the root for $m=\sqrt{a_s}$ must satisfy $b_0 \sigma < m < 1-1/w_r$. Since either one or three roots satisfy $m>0$ (equivalently $m > b_0 \sigma$) in Eq.~\ref{eq:quartic_poly} and exactly one root satisfies $m > 1-1/w_r$ (Lemma~\ref{lemma:0}), we conclude that the number of roots for $m$ in the interval $(b_0 \sigma, 1 - 1/w_r)$ are either two or none.
			\end{itemize}
			
		\end{proof}
		\item \textit{$z<0:$ There exists exactly one fixed point with $y_s<0$ and $a_s>0$ and it is asymptotically stable. Further, if $b_0 \sigma > 1 - 1/w_r$, there exist no additional fixed points. But if $b_0 \sigma < 1 - 1/w_r$, then there exist either two or no fixed points with $y_s>0$ and $a_s>0$ and they may or may not be stable.}
		\begin{proof}
			Since this condition becomes equivalent to $z>0$, up to a sign change in $y$ (Eq.~\ref{eq:mirrored}), it is straightforward to see why this is true.
		\end{proof}
	\end{itemize}
	
	\begin{lemma}
		\label{lemma:0}
		Given $w_r>1$, there exists a unique fixed point, $(y_s, a_s)$, that satisfies $1-w_r+w_r\sqrt{a_s} > 0$ and it is asymptotically stable.
	\end{lemma}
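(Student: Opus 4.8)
The plan is to reduce the lemma to a one-dimensional root-counting problem for the quartic of Eq.~\ref{eq:quartic_poly} and then read off stability directly from the trace and determinant formulas. The first step is to observe that the quartic defining $m=\sqrt{a_s}$ factors conveniently as $P(m) = (1-w_r+w_r m)^2\,(m^2 - b_0^2\sigma^2) - w\,b^2 z^2\, m^2$, which one checks expands to Eq.~\ref{eq:quartic_poly}. Each valid root $m>0$ determines a fixed point via $a_s = m^2$ and $y_s = bz/(1-w_r+w_r m)$, and the defining condition of the lemma, $1-w_r+w_r\sqrt{a_s}>0$, is precisely $m > m_c$ with $m_c := 1-1/w_r \in (0,1)$ since $w_r>1$. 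Thus it suffices to show that $P$ has exactly one root in $(m_c,\infty)$ obeying the validity condition $m>b_0\sigma$ (Lemma~\ref{lemma:validity_of_fixed_point}), and that the associated fixed point is stable.

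For existence and uniqueness I would first note that on $(m_c, b_0\sigma)$ (if nonempty) we have $1-w_r+w_r m>0$ but $m^2-b_0^2\sigma^2<0$, so both terms of $P$ are negative and $P<0$ there; hence every root in $(m_c,\infty)$ in fact lies in $(\max(m_c,b_0\sigma),\infty)$, where validity is automatic. On this interval both $1-w_r+w_r m$ and $m^2-b_0^2\sigma^2$ are strictly positive, so $P(m)=0$ is equivalent, after taking positive square roots and dividing by $m$, to $\phi(m) = \sqrt{w}\,|bz|$, where
\begin{equation}
\phi(m) = (1-w_r+w_r m)\sqrt{1 - b_0^2\sigma^2/m^2}.
\end{equation}
The key step is that $\phi$ is a product of two strictly positive, strictly increasing functions on $(\max(m_c,b_0\sigma),\infty)$ — the linear factor $1-w_r+w_r m$ and $\sqrt{1-b_0^2\sigma^2/m^2}$ — so $\phi$ is itself strictly increasing, running continuously from $0$ at the left endpoint to $+\infty$. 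Since $z\neq 0$ makes $\sqrt{w}\,|bz|$ a fixed positive number, there is exactly one $m$ with $\phi(m)=\sqrt{w}\,|bz|$, which yields the unique valid fixed point with $m>m_c$.

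Stability then follows with no extra work. Using Eq.~\ref{eq:trace_condition_2}, the hypothesis $1-w_r+w_r\sqrt{a_s}>0$ makes both summands of $\Tr(\J)$ positive, so $\Tr(\J)<0$. Using the rearranged determinant Eq.~\ref{eq:det_condition_2}, $\det(\J) = \frac{1}{\tau_y\tau_a}\big(1-w_r+w_r\sqrt{a_s}+(w_r-1)\,w y_s^2\big)$, and since $w_r>1$ the second term is nonnegative while the first is positive by hypothesis, so $\det(\J)>0$. For a $2\times2$ system $\Tr(\J)<0$ together with $\det(\J)>0$ is equivalent to $\mathrm{Re}(\lambda_\J)<0$, giving asymptotic stability.

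The main obstacle is the uniqueness claim: a naive application of Descartes' rule of signs only bounds the \emph{total} number of positive roots (it gives $1$ or $3$) and does not isolate those above the threshold $m_c$, so the monotonicity argument for $\phi$ is what pins down a single root in $(m_c,\infty)$. The one technical subtlety to handle carefully is the domain of the square root — one must first rule out roots in $(m_c,b_0\sigma)$ by the sign argument before passing to the $\phi$ formulation, and assume $z\neq 0$ so that the target value $\sqrt{w}\,|bz|$ is strictly positive.
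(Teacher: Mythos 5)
Your proposal is correct, and the existence--uniqueness part takes a genuinely different route from the paper. The paper works entirely with the coefficients of the quartic (Eq.~\ref{eq:quartic_poly}): it first applies Descartes' rule of signs to get $1$ or $3$ positive roots, then performs the shift $m \to \hat{m} + (1-1/w_r)$ and applies Descartes a second time to the shifted polynomial, whose sign pattern $(+,+,\pm,-,-)$ shows there is exactly one root above the threshold $m_c = 1-1/w_r$. You instead exploit the factored form $P(m) = (1-w_r+w_r m)^2(m^2-b_0^2\sigma^2) - wb^2z^2m^2$ (which does expand to Eq.~\ref{eq:quartic_poly}), rule out roots in $(m_c, b_0\sigma]$ by a sign argument, and then rewrite the root condition as $\phi(m)=\sqrt{w}\,|bz|$ for a function $\phi$ that increases strictly from $0$ to $\infty$ on the relevant interval. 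Both arguments are sound. The paper's version is shorter and reuses machinery already deployed for Lemma~\ref{lemma:contractive}, and the ``$1$ or $3$ total positive roots'' count it produces is needed again in the surrounding theorem (to conclude there are $0$ or $2$ additional roots on the other branch); your version does not deliver that count, but it is more transparent about \emph{why} exactly one root lies above $m_c$, avoids the coefficient-sign bookkeeping after the shift, and makes the role of the threshold geometrically evident. Your remark that $z\neq 0$ is needed is apt --- the paper's shifted polynomial also degenerates at $z=0$ (its last two coefficients vanish), and the hypothesis $z\neq 0$ is implicit throughout Appendix~\ref{sec:wr-not-identity-2d} since the enclosing statements are phrased for $z>0$ or $z<0$. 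The stability step ($\Tr(\J)<0$ from Eq.~\ref{eq:trace_condition_2} and $\det(\J)>0$ from Eq.~\ref{eq:det_condition_2}) is identical to the paper's.
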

	\begin{proof}
		The fixed point, $m = \sqrt{a_s}$, satisfies the quartic polynomial in Eq~\ref{eq:quartic_poly}. Due to Theorem~\ref{theorem:fixed_point_theorem}, for a valid fixed point, we need to find positive real roots that satisfy $m > 0$.
		
		The sequence of signs of the coefficients of the polynomial is given by $(+,-,\pm,+,-)$. Regardless of the sign of the coefficient of $m^2$, using \textit{Descartes' Rule of Signs}, we find that since there are $3$ sign changes, there are either $1$ or $3$ positive real roots for $m$. Since, we are given $w_r>1$ and $1-w_r+w_r\sqrt{a_s} > 0$, these roots are valid only if they satisfy $1-w_r+w_r m > 0$, or, $m>1-1/w_r$. We make a transformation $m \rightarrow \hat{m} + (1 - 1/w_r)$ which gives us,
		\begin{equation}
		\begin{split}
		{w_r}^2 \hat{m}^4 - 2(1-w_r) w_r \hat{m}^3 &+ \left((1-w_r)^2 - w b^2 z^2 - b_0^2 \sigma^2 {w_r}^2 \right) \hat{m}^2 \\& - 2 w b^2 z^2 (1 - 1/w_r) \hat{m} - w b^2 z^2 (1 - 1/w_r)^2 = 0 
		\end{split}
		\end{equation}
		The number of valid roots is given by the number of positive real roots of this polynomial. The signs of the coefficients are given by, $(+,+,\pm,-,-)$. There is exactly one sign change, hence there is a unique fixed point that satisfies $1-w_r+w_r \sqrt{a_s} > 0$.
		
		Now we prove that this fixed point is asymptotically stable. It can be easily seen that $\Tr(\J)<0$ in Eq.~\ref{eq:trace_condition_2} and $\det(\J)>0$ in Eq.~\ref{eq:det_condition_2} when $1-w_r+w_r \sqrt{a_s}>0$ and $w_r>1$. Therefore, this unique fixed point is asymptotically stable.
	\end{proof}
	
	\section{Analysis of the Rectified model}
	\label{appendix:rectified_model}
	Here, we present the stability results for the model with only the positive part of the complementary receptive fields present. The model is given by the following dynamical equations,
	\begin{equation}
	\begin{split}
	\btau_{y} \odot \dot{\y} &= -\y +  \bb \odot \z + \left(\mathbf{1} - \av^{+}\right) \odot \left\lfloor \W_r \y \right \rfloor \\
	\btau_{a} \odot \dot{\av}  &= -\av + \bb_0^2 \odot \bsigma^2 + \W \, \left(\y^+ \odot \av^{+2}\right)
	\end{split}
	\label{eq:rectified_model}
	\end{equation}
	We will follow the same procedure for stability analysis as we did for the main model and state the key steps in the various derivations for stability.
	
	\subsection{Stability of the high-dimensional system}
	We analyze the stability of the system when $\W_r=\I$. Upon substituting $\y^+\rightarrow\lfloor \y\rfloor^2$ and $\av^+\rightarrow \sqrt{\lfloor \av \rfloor}$, we get the following dynamical system,
	\begin{equation}
	\begin{split}
	\btau_{y} \odot \dot{\y} &= -\y +  \bb \odot \z + \left(\mathbf{1} - \sqrt{\lfloor \av\rfloor}\right) \odot \lfloor \y \rfloor \\
	\btau_{a} \odot \dot{\av}  &= -\av + \bb_0^2 \odot \bsigma^2 + \W \, \left(\lfloor \y \rfloor^2 \odot \lfloor \av \rfloor \right)
	\end{split}
	\label{eq:rectified_model_simple}
	\end{equation}
	
	The fixed point is given by,
	\begin{equation}
	\begin{split}
	\y_s &= \frac{\lfloor \bb \odot \mathbf{z} \rfloor}{\sqrt{\bb_0^2 \odot \bsigma^2 + \mathbf{W} \lfloor \bb \odot \mathbf{z} \rfloor^2}} - \lfloor - \bb \odot \mathbf{z} \rfloor \\
	\mathbf{a}_s &= {\bb_0^2 \odot \bsigma^2 + \mathbf{W} \lfloor \bb \odot \mathbf{z} \rfloor^2}
	\end{split} \label{eq:rectified_model_ss}
	\end{equation}
	Also, $\lfloor \y_s \rfloor$ is given by,
	\begin{equation}
	\lfloor \y_s \rfloor = \frac{\lfloor \bb \odot \mathbf{z} \rfloor}{\sqrt{\bb_0^2 \odot \bsigma^2 + \mathbf{W} \lfloor \bb \odot \mathbf{z} \rfloor^2}} 
	\end{equation}
	The Jacobian matrix, $\J$, about this fixed point is given by,
	\begin{equation}
	\mathbf{J} = \left[\begin{matrix}
	-\mathbf{D}\left(\frac{\sqrt{\mathbf{a}_s} \odot \mathbf{1}_{\z \geq 0} + \mathbf{1}_{\z < 0}}{\btau_y}\right) & -\mathbf{D}\left(\frac{\lfloor \mathbf{y}_s \rfloor}{\mathbf{2} \odot \sqrt{\mathbf{a}_s} \odot \btau_y}\right) \\[6pt]
	\mathbf{D}\left(\frac{\mathbf{2}}{\btau_a}\right) \mathbf{W}\, \mathbf{D}\left(\mathbf{a}_s \odot \lfloor \mathbf{y}_s\rfloor \right)  & \mathbf{D}\left(\frac{\mathbf{1}}{\btau_a}\right) \left(-\mathbf{I} + \mathbf{W} \,\mathbf{D}\left(\lfloor \mathbf{y}_s\rfloor^2 \right)\right) 
	\end{matrix}\right]
	\end{equation}
	Here $\mathbf{1}_{\z \geq 0}$ is an indicator function that returns a vector the size of $\z$ with $1$ at locations where $z_i \geq 0$ and $0$ elsewhere. Similarly, $\mathbf{1}_{\z < 0}$ returns a vector the size of $\z$ with $1$ at locations where $z_i<0$ and $0$ elsewhere. Further $\J - \lambda \I$ is given by,
	\begin{equation}
	\begin{split}
	\mathbf{J} &= \left[\begin{matrix}
	\mathbf{A}_{11} & \mathbf{A}_{12} \\
	\mathbf{A}_{21} & \mathbf{A}_{22}
	\end{matrix}\right] \\&= 
	\left[\begin{matrix}
	-\mathbf{D}\left(\frac{\sqrt{\mathbf{a}_s} \odot \mathbf{1}_{\z \geq 0} + \mathbf{1}_{\z < 0}}{\btau_y}\right) - \lambda \I & -\mathbf{D}\left(\frac{\lfloor \mathbf{y}_s \rfloor}{\mathbf{2} \odot \sqrt{\mathbf{a}_s} \odot \btau_y}\right) \\[6pt]
	\mathbf{D}\left(\frac{\mathbf{2}}{\btau_a}\right) \mathbf{W}\, \mathbf{D}\left(\mathbf{a}_s \odot \lfloor \mathbf{y}_s\rfloor \right)  & \mathbf{D}\left(\frac{\mathbf{1}}{\btau_a}\right) \left(-\mathbf{I} + \mathbf{W} \,\mathbf{D}\left(\lfloor \mathbf{y}_s\rfloor^2 \right)\right) - \lambda \I 
	\end{matrix}\right]
	\end{split}
	\end{equation}
	Since $\mathbf{A}_{11}$ and $\mathbf{A}_{12}$ commute, we can write, $\mathrm{det}(\mathbf{J} - \lambda \mathbf{I}) = \mathrm{det}(\A_{22}\A_{11} - \A_{21}\A_{12})$. Upon simplification, we get,
	\begin{equation}
	\begin{split}
	\mathrm{det}(\mathbf{J} - \lambda \mathbf{I}) =& \mathrm{det} \bigg( \lambda^2 \I +  \lambda \bigg[\mathbf{D}\left(\frac{\mathbf{1}}{\btau_a}\right)  + \mathbf{D}\left(\frac{\sqrt{\mathbf{a}_s} \odot \mathbf{1}_{\z \geq 0} + \mathbf{1}_{\z < 0}}{\btau_y}\right) \\& - \mathbf{D}\left( \frac{\mathbf{1}}{\btau_a}\right)\W \mathbf{D}\left({\lfloor \mathbf{y}_s\rfloor^2}\right)\bigg] + \mathbf{D}\left(\frac{\sqrt{\mathbf{a}_s} \odot \mathbf{1}_{\z \geq 0} + \mathbf{1}_{\z < 0}}{\btau_y \odot \btau_a}\right)\bigg)
	\end{split}
	\end{equation}
	Note that, here we used the fact that $\y_s$ has the same sign as $\z$, as seen from Eq.~\ref{eq:rectified_model_ss}. The dynamical system is stable if all eigenvalues of this characteristic polynomial have negative real parts. Just like the main model, we map this to a quadratic eigenvalue problem of the form $\mathcal{L}(\lambda) = \mathrm{det}(\lambda^2 \I + \lambda \mathbf{B} + \mathbf{K}) = 0$. Now, we see if the conditions of Theorem~\ref{theorem:main_text} are met. The stiffness matrix is given by,
	\begin{equation}
	\K = \mathbf{D}\left(\frac{\sqrt{\mathbf{a}_s} \odot \mathbf{1}_{\z \geq 0} + \mathbf{1}_{\z < 0}}{\btau_y \odot \btau_a}\right)
	= \mathbf{D}\left(\frac{\sqrt{{\bb_0^2 \odot \bsigma^2 + \mathbf{W} \lfloor \bb \odot \mathbf{z} \rfloor^2}} \odot \mathbf{1}_{\z \geq 0} + \mathbf{1}_{\z < 0}}{\btau_y \odot \btau_a}\right)
	\end{equation}
	Clearly, $\K$ is a positive diagonal matrix for all choices of parameters and input. Now we check if the $\B$ is a Lyapunov diagonally stable matrix, which is implied when $\B$ admits a \textit{regular convergent splitting}, i.e., it has a representation of the form $\B = \M - \Nm$, where $\M^{-1}$ and $\Nm$ have all nonnegative entries and $\M^{-1} \Nm$ has a spectral radius smaller than 1. 
	\begin{equation}
	\B = \underbrace{\mathbf{D}\left(\frac{\mathbf{1}}{\btau_a}\right)  + \mathbf{D}\left(\frac{\sqrt{\mathbf{a}_s} \odot \mathbf{1}_{\z \geq 0} + \mathbf{1}_{\z < 0}}{\btau_y}\right)}_{\M} - \underbrace{\mathbf{D}\left( \frac{\mathbf{1}}{\btau_a}\right)\W \mathbf{D}\left({\lfloor \mathbf{y}_s\rfloor^2}\right)}_{\Nm}
	\end{equation}
	Since $\M$ is a positive diagonal matrix, all the entries of $\M^{-1}$ are nonnegative. Also, since all the matrices involved in the definition of $\Nm$ are nonnegative, therefore their product, $\Nm$, is nonnegative. We are only left to prove that $\M^{-1} \Nm = \Ss$ has a spectral radius smaller than 1. Consider the matrix $\Ss$,
	\begin{equation}
	\Ss = \D \left(\frac{\mathbf{1}}{\mathbf{1}+(\btau_a /\btau_y)\odot \left(\sqrt{\mathbf{a}_s} \odot \mathbf{1}_{\z \geq 0} + \mathbf{1}_{\z < 0}\right)} \right) \W \mathbf{D}\left(\frac{\lfloor \bb \odot \mathbf{z} \rfloor^2}{{\bb_0^2 \odot \bsigma^2 + \mathbf{W} \lfloor \bb \odot \mathbf{z} \rfloor^2}} \right)
	\end{equation}
	Theorem~\ref{theorem:convergent} puts a bound of $1$ on the spectral radius of this matrix. Define $\mathbf{t} \rightarrow \mathbf{1} / \left({\mathbf{1}+(\btau_a /\btau_y)\odot \left(\sqrt{\mathbf{a}_s} \odot \mathbf{1}_{\z \geq 0} + \mathbf{1}_{\z < 0}\right)} \right)$, $\mathbf{u} \rightarrow {\lfloor \bb \odot \mathbf{z} \rfloor^2}$ and $\mathbf{v} \rightarrow \bb_0^2 \odot \bsigma^2$, we notice that they follow the constraints of the theorem, therefore, $\Ss$ is convergent. This implies that $\B$ has a \textit{convergent regular splitting}, therefore, the linearized dynamical system is unconditionally globally asymptotically stable (and nonlinear dynamical system is locally asymptotically stable) across all the values of parameters and inputs.

	\subsection{Analytical eigenvalue for fully normalized circuit}
	Following a procedure similar to that in Appendix~\ref{appendix:analytical_eig}, when all of the normalization weights in the system are equal, to value $\alpha$, and the various parameters are scalars, i.e., $\btau_y = \tau_y \mathbf{1}$, $\btau_a = \tau_a \mathbf{1}$, $\bb_0 = b_0 \mathbf{1}$ and $\bsigma = \sigma \mathbf{1}$, we can write the analytical expressions for eigenvalues. The characteristic polynomial is given by,
	\begin{equation}
	\begin{split}
	\mathrm{det}(\mathbf{J} - \lambda \mathbf{I}) &= \left(1 - \frac{\lambda \alpha}{\tau_a} \mathbf{v}^\top \D\left(\frac{\mathbf{1}}{\delta_1 \mathbf{1}_{\z \geq 0} + \delta_2\mathbf{1}_{\z < 0}}\right) \mathbf{u}\right) \delta_1^{n_1} \delta_2^{n_2} \label{eq:analytical_eigenvalue_rectified}
	\end{split}
	\end{equation}
	where, $\mathbf{u} = [1, 1, ..., 1]^\top$, $\mathbf{v} = \lfloor\bb \odot \mathbf{z}\rfloor^2 /\left(\sigma^2 b_0^2 \mathbf{1}+ \mathbf{W} \lfloor\bb \odot\mathbf{z}\rfloor^2\right)$; $n_1$ and $n_2$ are the number of nonnegative and negative values, respectively, in the input drive $\z$; ${\delta_1}$ and ${\delta_2}$ are given by,
	\begin{equation}
	\begin{split}
	\delta_1 &= \lambda^2 + \lambda \left(\frac{1}{\tau_a} + \frac{\sqrt{\sigma^2 b_0^2 + \alpha ||\lfloor\bb \odot \mathbf{z}\rfloor||^2}}{\tau_y}\right) + \frac{\sqrt{\sigma^2 b_0^2 + \alpha ||\lfloor \bb \odot\mathbf{z}\rfloor||^2}}{\tau_y \tau_a} \\
	\delta_2 &= \lambda^2 + \lambda \left(\frac{1}{\tau_a} + \frac{1}{\tau_y}\right) + \frac{1}{\tau_y \tau_a}
	\end{split}
	\end{equation}
	Simplification of Eq.~\ref{eq:analytical_eigenvalue_rectified} gives us,
	\begin{equation}
	\begin{split}
	\mathrm{det}(\mathbf{J} - \lambda \mathbf{I}) &= \left(\delta_1- \frac{\lambda }{\tau_a} \frac{\alpha ||\lfloor \bb \odot\mathbf{z}\rfloor||^2}{\sigma^2 b_0^2+\alpha ||\lfloor \bb \odot\mathbf{z}\rfloor||^2}\right) \delta_1^{n_1 - 1} \delta_2^{n_2}
	\end{split}
	\end{equation}
	Since the characteristic polynomial is a product of quadratic polynomials, we can solve them analytically. The strictly negative eigenvalues are given by,
	\begin{equation}
	\lambda = -\frac{1}{\tau_a}; \quad \lambda = -\frac{1}{\tau_y} \quad \& \quad \lambda=-\frac{\sqrt{\sigma^2 b_0^2 + \alpha ||\lfloor \bb \odot\mathbf{z}\rfloor||^2} }{\tau_y}
	\end{equation}
	The potentially complex eigenvalues are given by the solution to the following quadratic equation,
	\begin{equation}
	\lambda^2 + \lambda \left(\frac{\sigma^2 b_0^2}{\tau_a(\sigma^2 b_0^2+\alpha ||\lfloor \bb \odot\mathbf{z}\rfloor||^2)} + \frac{\sqrt{\sigma^2 b_0^2 + \alpha ||\lfloor \bb \odot\mathbf{z}\rfloor||^2}}{\tau_y} \right) + \frac{\sqrt{\sigma^2 b_0^2 + \alpha ||\lfloor \bb \odot\mathbf{z} \rfloor||^2}}{\tau_y \tau_a} = 0
	\end{equation}
	
	\subsection{Linear stability analysis of the two-dimensional model}
	The dynamical system to consider is,
	\begin{equation}
	\begin{split}
	\tau_{y} \dot{y} &= -y +  b  z + \left(1 - \sqrt{\lfloor a\rfloor}\right)\lfloor w_r  y\rfloor \\
	\tau_{a} \dot{a} &= -a + b_0^2 \sigma^2 + w \lfloor y\rfloor^2 \lfloor a\rfloor 
	\end{split}
	\label{eq:2d_rectified_organics_eqn}
	\end{equation}
	Since a valid fixed point must have $a_s>0$, the fixed point $(y_s, a_s)$ satisfies,
	\begin{align}
	(1 - w_r + w_r \sqrt{a_s}) \, \lfloor y_s\rfloor - \lfloor -y_s \rfloor &=  b z \label{eq:y_fixed_point_rectified}  \\
	(1 - w \lfloor y_s\rfloor ^2) \, a_s &= b_0^2 \sigma^2 \label{eq:a_fixed_point_rectified}
	\end{align}
	We divide this into two cases,
	\begin{itemize}
		\item $y_s > 0:$ The fixed point is given by the equations,
		\begin{align}
		(1 - w_r + w_r \sqrt{a_s}) \, y_s &=  b z \label{eq:y_fixed_point_rectified1}  \\
		(1 - w y_s ^2) \, a_s &= b_0^2 \sigma^2 \label{eq:a_fixed_point_rectified1}
		\end{align}
		A fixed point, $(y_s, a_s)$, is valid only if it satisfies $y_s \in \R^+_*$ and $a_s \in \R^+_*$. The Jacobian matrix is given by,
		\begin{equation}
		\mathbf{J} = \left[\begin{matrix}
		\frac{w_r - 1 - w_r \sqrt{a_s}}{\tau_y } & -\frac{w_r y_s}{2 \sqrt{a_s}\tau_y } \\[6pt]
		\frac{2 w a_s  y_s }{\tau_a}  & \frac{-1 + w y_s ^2}{\tau_a}
		\end{matrix}\right] 
		\end{equation}
		Note that this is equivalent to the main model,  with the additional constraint of $y_s > 0$, whose stability analysis is presented in Appendix~\ref{sec:wr-not-identity-2d}.

		\item $y_s<0:$ Eq.~\ref{eq:y_fixed_point_rectified} \& \ref{eq:a_fixed_point_rectified} yield us a unique fixed point $y_s = b z$ and $a_s = b_0^2 \sigma^2$. Since $y_s < 0$ and $y_s = b z$, this is only possible when $z<0$. The Jacobian matrix about $(y_s, a_s)$ is given by,
		\begin{equation}
		\mathbf{J} = \left[\begin{matrix}
		-\frac{1}{\tau_y } & 0 \\[6pt]
		0  & -\frac{1}{\tau_a}
		\end{matrix}\right] 
		\end{equation}
		Since the eigenvalues of $\J$ are $\lambda_\J = -1/\tau_y, -1/\tau_a$ are real and both negative, this fixed point is always stable.
	\end{itemize}
	
	Combining the cases above and results already established in Appendix~\ref{sec:wr-not-identity-2d}, we characterize the stability of the system as follows,
	
	\subsubsection{Contractive constraint on recurrence ($0 < w_r \leq 1$)}
	\begin{itemize}
		\item \textit{$z>0:$ There exists a unique fixed point with $y_s>0$ and $a_s>0$ and it is asymptotically stable.}
		\item \textit{$z<0:$ There exists a unique fixed point with $y_s<0$ and $a_s>0$, given by, $(bz,b_0^2 \sigma^2)$, and it is asymptotically stable.}
	\end{itemize}
	
	\subsubsection{Expansive constraint on recurrence ($w_r > 1$)}
	\begin{itemize}
		\item \textit{$z>0:$ There exists a unique fixed point with $y_s>0$ and $a_s>0$ and it is asymptotically stable.}
		\item \textit{$z<0:$ There exists exactly one fixed point with $y_s<0$ and $a_s>0$ given by, $(bz,b_0^2 \sigma^2)$, and it is asymptotically stable. Further, if $b_0 \sigma > 1 - 1/w_r$, there exist no additional fixed points. But if $b_0 \sigma < 1 - 1/w_r$, then there exist either two or no fixed points with $y_s>0$ and $a_s>0$ and they may or may not be stable.}
	\end{itemize}
	
	\section{Iterative algorithm}
	\label{sec:iterative}
	In this section, we present an iterative approach to finding the fixed point for ORGaNICs with an arbitrary recurrent weight matrix. We show that this algorithm converges in a few steps (2-10) with great accuracy. We consider the system given by Eq.~\ref{eq:wr_not_identity},
	\begin{equation}
	\begin{split}
	\btau_{y} \odot \dot{\y} &= - \y  +  \bb \odot \z + \left(\mathbf{1} - \sqrt{\lfloor\av\rfloor}\right) \odot \left(\W_r \y\right) \\
	\btau_{a} \odot \dot{\av}  &= -\av + \bb_0^2 \odot \bsigma^2 + \W \, \left(\y^2 \odot\lfloor \av \rfloor\right)
	\end{split}
	\end{equation}
	The fixed point of this system ($\mathbf{y}_s$ and $\mathbf{a}_s$) is found by solving the following simultaneous equations,
	\begin{equation}
	\mathbf{y}_{s} =  \mathbf{b} \odot \mathbf{z} + \left(\mathbf{1} - \sqrt{\mathbf{a}_{s}}\right) \odot \left(\mathbf{W}_r \mathbf{y}_{s}\right) 
	\label{eq:fixed_point_iterative_y}
	\end{equation}
	\begin{equation}
	\mathbf{a}_{s} =  \mathbf{b}_0^2 \odot \bsigma^2  + \mathbf{W}\, \left(\mathbf{y}_s^2 \odot \mathbf{a}_{s}\right) \label{eq:fixed_point_iterative_a}
	\end{equation}
	These equations do not admit a closed-form analytical solution when $\mathbf{W}_r \neq \I$. We first find a good approximation for the initialization of $\mathbf{y}_s$ and $\mathbf{a}_s$ and then define the iterative algorithm. The equation for $\mathbf{y}_s$ can be written in terms of $\mathbf{a}_s$ as,
	\begin{equation}
	\mathbf{y}_{s} =  \left(\mathbf{I} + \left(\mathbf{D}\left(\sqrt{\mathbf{a}_{s}}\right) - \mathbf{I}\right)\mathbf{W}_r  \right)^{-1} \left(\mathbf{b} \odot \mathbf{z}\right) 
	\label{eq:y_s_ito_a_s}
	\end{equation}
	Now applying the Woodbury matrix identity, which states that
	\begin{equation}
	\left(\mathbf{A} + \mathbf{U}\mathbf{C}\mathbf{V}\right)^{-1} = \mathbf{A}^{-1} - \mathbf{A}^{-1} \mathbf{U}\left(\mathbf{C}^{-1} + \mathbf{V}\mathbf{A}^{-1}\mathbf{U}\right)^{-1} \mathbf{V} \mathbf{A}^{-1},
	\end{equation}
	to the inverse in Eq.~\ref{eq:y_s_ito_a_s} with $\mathbf{A} = \mathbf{I}$, $\mathbf{U} = \mathbf{I}$, $\mathbf{C} = \mathbf{D}\left(\sqrt{\mathbf{a}_{s}}\right) - \mathbf{I}$ and $\mathbf{V} = \mathbf{W}_r$, we get,
	\begin{equation}
	\begin{split}
	\mathbf{y}_{s} &=  \left(\mathbf{I} - \left( \left(\mathbf{D}\left(\sqrt{\mathbf{a}_{s}}\right) - \mathbf{I}\right)^{-1}+ \mathbf{W}_r\right)^{-1} \mathbf{W}_r \right) \left(\mathbf{b} \odot \mathbf{z}\right)  \\
	\end{split}
	\end{equation}
	
	We approximate the above equation by assuming that $\mathbf{W}_r$ is a symmetric matrix with the eigendecomposition given by $\mathbf{Q} \boldsymbol{\Lambda} \mathbf{Q}^\top$, with $\mathbf{Q}^\top \mathbf{Q} = \mathbf{Q} \mathbf{Q}^\top = \I$ and $\boldsymbol{\Lambda} = \D\left(\boldsymbol{\lambda}\right)$ is a diagonal matrix with the eigenvalues as its diagonal entries. This gives us the following approximation, 
	\begin{equation}
	\begin{split}
	\mathbf{y}_{s} &\approx  \left(\mathbf{I} - \left( \mathbf{Q}\left(\left(\mathbf{D}\left(\sqrt{\mathbf{a}_{s}}\right) - \mathbf{I}\right)^{-1} + \boldsymbol{\Lambda}\right) \mathbf{Q}^\top \right)^{-1} \mathbf{Q} \boldsymbol{\Lambda} \mathbf{Q}^\top \right) \left(\mathbf{b} \odot \mathbf{z}\right)  \\
	&=  \left(\mathbf{I} - \mathbf{Q}\mathbf{D}\left(\boldsymbol{\lambda} + \frac{\mathbf{1}}{\sqrt{\mathbf{a}_{s}} - \mathbf{1}}\right)^{-1} \mathbf{Q}^\top  \mathbf{Q} \boldsymbol{\Lambda} \mathbf{Q}^\top \right) \left(\mathbf{b} \odot \mathbf{z}\right)  \\
	&=  \left(\mathbf{I} - \mathbf{Q}\mathbf{D}\left(\frac{\boldsymbol{\lambda} * \sqrt{\mathbf{a}_{s}} - \boldsymbol{\lambda}}{\mathbf{1} - \boldsymbol{\lambda} + \boldsymbol{\lambda} * \sqrt{\mathbf{a}_{s}}}
	\right) \mathbf{Q}^\top \right) \left(\mathbf{b} \odot \mathbf{z}\right)  \\
	&=  \left(\mathbf{I} - \mathbf{Q} \left(\mathbf{I} - \mathbf{D}\left(\frac{\mathbf{1}}{\mathbf{1} - \boldsymbol{\lambda} + \boldsymbol{\lambda} * \sqrt{\mathbf{a}_{s}}}
	\right)\right) \mathbf{Q}^\top \right) \left(\mathbf{b} \odot \mathbf{z}\right)  \\
	&= \mathbf{Q} \mathbf{D}\left(\frac{\mathbf{1}}{\mathbf{1} - \boldsymbol{\lambda} + \boldsymbol{\lambda} * \sqrt{\mathbf{a}_{s}}}
	\right) \mathbf{Q}^\top  \left(\mathbf{b} \odot \mathbf{z}\right)  \\
	&= \mathbf{Q} \mathbf{D}\left(\frac{\mathbf{1}}{\boldsymbol{\lambda} - \boldsymbol{\lambda}^2 + \boldsymbol{\lambda}^2 * \sqrt{\mathbf{a}_{s}}}
	\right) \boldsymbol{\Lambda}\mathbf{Q}^\top  \left(\mathbf{b} \odot \mathbf{z}\right) 
	\end{split}
	\end{equation}
	We approximate the eigenvalues, $\boldsymbol{\lambda}$,  by the maximum eigenvalue of the $\mathbf{W}_r$ and assume the entries of $\sqrt{\av_s}$ are identical. This gives us the following initial guess for $\mathbf{y}_s$.
	\begin{equation}
	\begin{split}
	\mathbf{y}_{s}^0 & = \mathbf{D}\left(\frac{\mathbf{1}}{\lambda_m - \lambda_m^2 + \lambda_m^2 \sqrt{\mathbf{a}_{s}^0}}
	\right) \mathbf{Q}  \boldsymbol{\Lambda} \mathbf{Q}^\top  \left(\mathbf{b} \odot \mathbf{z}\right) \\
	&= \frac{\mathbf{W}_r \left(\mathbf{b} \odot \mathbf{z}\right) }{\lambda_m - \lambda_m^2 + \lambda_m^2  \sqrt{\mathbf{a}_{s}^0}} 
	\end{split}\label{eq:initial_guess_for_y}
	\end{equation}
	For the initial guess of $\mathbf{a}_{s}^0$, we use Eq.~\ref{eq:fixed_point_iterative_a} and plug in the following on the RHS $\av_s \rightarrow \mathbf{b}_0^2 \odot \bsigma^2$ and the corresponding $\y_s$ found by using Eq.~\ref{eq:initial_guess_for_y}. This gives us the following,
	\begin{equation}
	\begin{split}
	\mathbf{a}_{s}^0 = \bsigma^2 \odot \mathbf{b}_0^2 + \mathbf{W}\, \left(\left(\frac{\mathbf{W}_r \left(\mathbf{b} \odot \mathbf{z}\right) }{\lambda_m - \lambda_m^2 + \lambda_m^2  \sqrt{\mathbf{b}_0^2 \odot \bsigma^2}} \right)^2 \odot \left(\mathbf{b}_0^2 \odot \bsigma^2\right)\right)
	\end{split} \label{eq:initial_guess_for_a}
	\end{equation}
	
	Next we update the $\mathbf{y}_{s}$ and $\mathbf{a}_{s}$ by performing the following iterations derived using Eq.~\ref{eq:fixed_point_iterative_y} $\&$ \ref{eq:fixed_point_iterative_a}. For instance, $(\mathbf{y}_{s}^1, \mathbf{a}_{s}^1)$ are given by,
	\begin{equation}
	\begin{split}
	\mathbf{y}_{s}^1 &= \left(\mathbf{I} - \mathbf{W}_r + \mathbf{D}\left(\sqrt{\mathbf{a}_{s}^0}\right)\mathbf{W}_r \right)^{-1} \left(\mathbf{b} \odot \mathbf{z}\right) \\
	\mathbf{a}_{s}^1 &= \mathbf{b}_0^2 \odot \bsigma^2 + \mathbf{W}\, \left({\left(\mathbf{y}_{s}^1\right)}^2*\mathbf{a}_{s}^0\right)
	\end{split}
	\end{equation}
	This procedure is summarized in Algorithm~\ref{algo:iterative_sup}. Substituting $\lambda_m = 1$ in Eq.~\ref{eq:initial_guess_for_y} \& \ref{eq:initial_guess_for_a} yields simpler initial conditions and gives us Algorithm~\ref{algo:iterative_main}. Even though we had assumed that $\mathbf{W}_r$ should be symmetric, in practice we find that this algorithm leads to fast convergence even for non-symmetric matrices. The fast convergence is owed to the fact that we have a good initial approximation of the solution. We also find that this iteration scheme works only for recurrent weight matrices with a maximum singular value of 1.

	\begin{algorithm}[th]
		\caption{Iterative scheme for finding the fixed point} \label{algo:iterative_sup}
		\begin{algorithmic}[1]
			\State \textbf{Input:} ORGaNICs parameters and input (\(\z\)), Tolerance \( \epsilon \), maximum iterations \( N \)
			\State \textbf{Output:} Approximation to the fixed point \( (\y_s, \av_s) \)
			\State \(
			\av \gets \bsigma^2 \odot \mathbf{b}_0^2 + \mathbf{W}\, \left(\left(\frac{\mathbf{W}_r \left(\mathbf{b} \odot \mathbf{z}\right) }{\lambda_m - \lambda_m^2 + \lambda_m^2  \sqrt{\mathbf{b}_0^2 \odot \bsigma^2}} \right)^2 \odot \left(\mathbf{b}_0^2 \odot \bsigma^2\right)\right)  
			\) \hfill // initial approximation for $\av$
			\State 
			\(
			\y \gets \frac{\mathbf{W}_r \left(\mathbf{b} \odot \mathbf{z}\right) }{\lambda_m - \lambda_m^2 + \lambda_m^2  \sqrt{\av}} 
			\) \hfill // initial approximation for $\y$
			\State \( k \gets 0 \)
			\While{\( ||\mathbf{y} - \mathbf{b} \odot \mathbf{z} - \left(\mathbf{1} - \sqrt{\mathbf{a}}\right) \odot \left(\mathbf{W}_r \mathbf{y}\right) || > \epsilon \) and \( k < N \)}
			\State \( \y \gets \left(\mathbf{I} - \mathbf{W}_r + \mathbf{D}\left(\sqrt{\mathbf{a}}\right)\mathbf{W}_r \right)^{-1} \left(\mathbf{b} \odot \mathbf{z}\right) \) \hfill // $\y$ update
			\State \( \av \gets \mathbf{b}_0^2 \odot \bsigma^2 + \mathbf{W}\, \left(\mathbf{y}^2*\mathbf{a}\right) \) \hfill // $\av$ update
			\State \( k \gets k + 1 \) 
			\EndWhile
			\State \textbf{return} \( (\y, \av) \)
		\end{algorithmic}
	\end{algorithm}

	\section{Energy of ORGaNICs}
	\label{sec:energy}
	Here, we find the \textit{energy} (Lyapunov function) that is minimized by the dynamics of the ORGaNICs in the vicinity of the normalization fixed point. We consider the dynamical system with $\W_r = \I$, which is given by Eq.~\ref{eq:full-simplified}. Upon linearizing about the fixed point we get the following linear dynamical system,
	\begin{equation}
	\left[\begin{matrix}
	\dot{\y}\\
	\dot{\av}
	\end{matrix}\right] = \left[\begin{matrix}
	-\mathbf{D}\left(\frac{\sqrt{\mathbf{a}_s}}{\btau_y}\right) & -\mathbf{D}\left(\frac{\mathbf{y}_s}{\mathbf{2} \odot \sqrt{\mathbf{a}_s} \odot \btau_y}\right) \\[6pt]
	\mathbf{D}\left(\frac{\mathbf{2}}{\btau_a}\right) \mathbf{W}\, \mathbf{D}\left(\mathbf{a}_s \odot \mathbf{y}_s \right)  & \mathbf{D}\left(\frac{\mathbf{1}}{\btau_a}\right) \left(-\mathbf{I} + \mathbf{W} \,\mathbf{D}\left(\mathbf{y}_s^2 \right)\right) 
	\end{matrix}\right] \left[\begin{matrix}
	\y - \y_s\\
	\av - \av_s
	\end{matrix}\right] \label{eq:energy_organics}
	\end{equation}
	
	This system is dynamically equivalent (admits the same eigenvalues) to a system of coupled harmonic oscillators with the following equations,
	\begin{equation}
	\ddot{\x} + \left[\mathbf{D}\left(\frac{\mathbf{1}}{\btau_a}\right)  + \mathbf{D}\left(\frac{\sqrt{\mathbf{a}_s}}{\btau_y}\right) - \mathbf{D}\left( \frac{\mathbf{1}}{\btau_a}\right)\W \mathbf{D}\left({\mathbf{y}_s^2}\right)\right] \dot{\x} + \mathbf{D}\left(\frac{\sqrt{\mathbf{a}_s}}{\btau_y \odot \btau_a}\right) \x = \mathbf{0}.
	\end{equation}
	We can rewrite the linear system in terms of the position, $\x$, and the velocity, $\vv$,
	\begin{equation}
	\left[\begin{matrix}
	\dot{\x}\\
	\dot{\vv}
	\end{matrix}\right] = \left[\begin{matrix}
	\mathbf{0} & \I \\[6pt]
	-\mathbf{D}\left(\frac{\sqrt{\mathbf{a}_s}}{\btau_y \odot \btau_a}\right)  & - \left[\mathbf{D}\left(\frac{\mathbf{1}}{\btau_a}\right)  + \mathbf{D}\left(\frac{\sqrt{\mathbf{a}_s}}{\btau_y}\right) - \mathbf{D}\left( \frac{\mathbf{1}}{\btau_a}\right)\W \mathbf{D}\left({\mathbf{y}_s^2}\right)\right] 
	\end{matrix}\right] \left[\begin{matrix}
	\x\\
	\vv
	\end{matrix}\right]  \label{eq:energy_damped}
	\end{equation}
	Since this system is of the form $\I \ddot{\x} + \B \dot{\x} + \K \x = \mathbf{0}$, Eq.~\ref{eq:dynm_system}, the \textit{energy} of this dynamical system is given by $V(\z) = \z^\top \Pd \z$, or,
	\begin{equation}
	V(\x, \vv) = \left[\begin{matrix}
	\x^\top & \vv^\top 
	\end{matrix}\right] \left[\begin{matrix}
	\lam \K & \epsilon \I \\
	\epsilon \I & \lam
	\end{matrix}\right] \left[\begin{matrix}
	\x \\
	\vv
	\end{matrix}\right] = \x^\top \left(\lam \K\right) \x + \vv^\top \lam \vv + 2 \epsilon \, \x^\top \vv
	\end{equation}
	Here, $\lam$ is any positive diagonal matrix such that $\lam \B + \B^\top \lam \succ 0$ and $\K$ is also a positive diagonal matrix given by $\mathbf{D}\left(\sqrt{\mathbf{a}_s}/\left(\btau_y \odot \btau_a\right) \right)$. Now, for a valid Lyapunov function, we can take $\epsilon$ to be arbitrarily small, Eq.~\ref{eq:epsilon_arbitrarily_small}. Therefore, the \textit{energy} minimized by the dynamical system is given by $V(\x, \vv) = \x^\top \left(\lam \K\right) \x + \vv^\top \lam \vv$. This is a high-dimensional version of the energy of a damped harmonic oscillator. For a single oscillator, we have $V(x, v) = t(k x^2 + v^2)$ which is proportional to the total energy (kinetic + potential) of the oscillator. 
	
	We now express this \textit{energy} in terms of the variables relevant to ORGaNICs, i.e., we find $V(\y, \av)$. First, we denote the Jacobian matrices in RHS of Eq.~\ref{eq:energy_organics} \& \ref{eq:energy_damped} by $\A$ \& $\B$, respectively. We note the simple fact that $\A$ \& $\B$ are related by a similarity transformation (a change of basis). This means that there exists an invertible matrix $\U$, such that $\A=\U^{-1} \B \U$ and the corresponding transform is given by $[\x\; \vv]^\top = \U [\y-\y_s\; \av - \av_s]^\top$.  Assuming that $\U$ is invertible, we can write this equation as $\U \A=\B \U$. To solve this, we consider a block matrix representation of $\U$ and find the following solution,
	\begin{equation}
	\U = \left[\begin{matrix}
	\mathbf{D}\left(\frac{\sqrt{\mathbf{a}_s} \odot \btau_y}{\y_s}\right) & \mathbf{0} \\
	-\mathbf{D}\left(\frac{\mathbf{a}_s}{\y_s}\right) & -\frac{1}{2} \I
	\end{matrix}\right] 
	\end{equation}
	This change of basis gives us the transformation,
	\begin{equation}
	\left[\begin{matrix}
	\x \\
	\vv
	\end{matrix}\right] = 
	\left[\begin{matrix}
	\mathbf{D}\left(\frac{\sqrt{\mathbf{a}_s} \odot \btau_y}{\y_s}\right) & \mathbf{0} \\
	-\mathbf{D}\left(\frac{\mathbf{a}_s}{\y_s}\right) & -\frac{1}{2} \I
	\end{matrix}\right]  \left[\begin{matrix}
	\y - \y_s\\
	\av - \av_s
	\end{matrix}\right] 
	\end{equation}
	or,
	\begin{equation}
	\begin{split}
	\x &= \frac{\sqrt{\mathbf{a}_s} \odot \btau_y}{\y_s} \odot \left(\y - \y_s \right) \\
	\vv &= -\frac{\mathbf{a}_s}{\y_s} \odot \left(\y - \y_s \right) - \frac{\left(\av - \av_s \right)}{2}
	\end{split}
	\end{equation}
	Substituting these expressions into $V(\x, \vv) = \x^\top \left(\lam \K\right) \x + \vv^\top \lam \vv$, and assuming the diagonal entries of the matrix $\lam$ to be $t_i$ and substituting the diagonal entries of $\K$, $k_i \rightarrow \sqrt{{a_s}_i}/({\tau_y}_i{\tau_a}_i)$, we get the \textit{energy} in terms of $\y$ and $\av$,
	\begin{equation}
	V(\y, \av) = \sum_{i=1}^n t_i \left[\frac{{\tau_y}_i}{{\tau_a}_i}  \frac{{{a_s}_i}^{3/2}}{{{y_s}_i}^2} \left(y_i - {y_s}_i\right)^2 + \frac{{a_s}_i}{{{y_s}_i}^2}  \left(\sqrt{{a_s}_i} \left({y}_i - {y_s}_i\right) + \frac{{y_s}_i}{2 \sqrt{{a_s}_i}} \left({a}_i - {a_s}_i\right) \right)^2 \right]
	\end{equation}
	We notice that Taylor expanding the term $\sqrt{{a}_i}{y}_i$ about $\sqrt{{a_s}_i}{y_s}_i$ and ignoring the second order terms, we get,
	\begin{equation}
	\sqrt{{a}_i}{y}_i \approx \sqrt{{a_s}_i}{y_s}_i + \sqrt{{a_s}_i} \left({y}_i - {y_s}_i\right) + \frac{{y_s}_i}{2 \sqrt{{a_s}_i}} \left({a}_i - {a_s}_i\right)
	\end{equation}
	Therefore the \textit{energy} function, $V(\y, \av)$, is given by,
	\begin{equation}
	V(\y, \av) = \sum_{i=1}^n t_i \frac{{a_s}_i}{{{y_s}_i}^2}  \left[\frac{{\tau_y}_i}{{\tau_a}_i}  \sqrt{{a_s}_i}\left(y_i - {y_s}_i\right)^2 + \left(\sqrt{a_i} y_i - \sqrt{{a_s}_i}{y_s}_i \right)^2 \right] \label{eq:appendix_lyapunov_energy}
	\end{equation}
	Notice that $\sqrt{{a_s}_i}{y_s}_i = b_i z_i$. This gives us the following expression for the \textit{energy} function,
	\begin{equation}
	V(\y, \av) = \sum_{i=1}^n t_i \frac{{a_s}_i}{{{y_s}_i}^2}  \left[\frac{{\tau_y}_i}{{\tau_a}_i}  \sqrt{{a_s}_i}\left(y_i - {y_s}_i\right)^2 + \left(\sqrt{a_i} y_i - b_i z_i \right)^2 \right]
	\end{equation}
	Further, for an ORGaNICs model containing one $y$ and one $a$ neuron, after removing the proportionality constants, the \textit{energy} function is given by,
	\begin{equation}
	V(y,a) = \frac{\tau_y}{\tau_a} \sqrt{a_s} \, (y-y_s)^2 + (\sqrt{a} y - b z)^2 \label{eq:energy1}
	\end{equation}
	After plugging in the steady-state values, we get,
	\begin{equation}
	V(y,a) = \frac{\tau_y}{\tau_a} \sqrt{b_0^2 \sigma^2 + w b^2 z^2} \, \left(y-\frac{b z}{\sqrt{b_0^2 \sigma^2 + w b^2 z^2}}\right)^2 + (\sqrt{a} y - b z)^2 \label{eq:energy2}
	\end{equation}
	For this system, it is easy to verify that this is a valid Lyapunov function and is minimized by the dynamics of the circuit. We now demonstrate that it has the properties of a Lyapunov function. First, $V(y_s, a_s) = 0$ and $V(y_s, a_s) > 0 \; \forall \; y \neq y_s \; \&\; a \neq a_s$, this can be easily seen from Eq.~\ref{eq:energy1}. Second, we need to show that, $\dot{V}(y, a) < 0 \; \forall \; y \neq y_s \; \&\; a \neq a_s$. Using Eq.~\ref{eq:energy_organics} \& \ref{eq:energy2} $\dot{V}(y, a)$, we can write the total time derivative of the \textit{energy} to be,
	\begin{equation}
	\begin{split}
	\frac{\mathrm{d} V(y,a)}{\mathrm{d}t} &= \frac{\partial V}{\partial y} \frac{\mathrm{d} y}{\mathrm{d}t} + \frac{\partial V}{\partial a} \frac{\mathrm{d} a}{\mathrm{d}t} \\
	&= -\frac{\left(2 y a_s-3 a_s y_s+a y_s\right){}^2 \left(\sqrt{a_s} \tau _a-w y_s ^2 \tau _y+\tau _y\right)}{2 a_s \tau _a \tau _y}
	\end{split}
	\end{equation}
	Since $a_s>0$ and,
	\begin{equation}
	\begin{split}
	\left(\sqrt{a_s} \tau _a-w y_s^2 \tau _y+\tau _y\right) = \left(\sqrt{a_s} \tau _a + \tau_y \left(\frac{b_0^2 \sigma^2 }{b_0^2 \sigma^2 + w b^2 z ^2}  \right) \right) > 0,
	\end{split}
	\end{equation}
	for all the choices of parameters and $\forall \; y \neq y_s, a \neq a_s$, we have $\dot{V}(y, a) < 0$, therefore, it is a valid Lyapunov function and can be interpreted as the \textit{energy} that decreases with time via the dynamics of ORGaNICs.

	\section{Training details}
	The code (written in PyTorch \cite{paszke2017automatic}) to produce all the results can be found at \href{https://github.com/martiniani-lab/dynamic-divisive-norm}{https://github.com/martiniani-lab/dynamic-divisive-norm}. For both the static input and the sequential input, we train ORGaNICs on the MNIST handwritten digit dataset \cite{deng2012mnist}, and to the best of our knowledge, it does not pose any privacy concern and has been used widely by the ML community freely. The simulations were performed on an HPC cluster. All of the models were trained on a single A100 (80GB) GPU. We use Adam optimizer \cite{kingma2014adam} with default parameters for minimizing the loss function.
	
	\subsection{Static MNIST input}
	\label{appendix:training_MNIST}
	We performed a random split of 57,000 training samples and 3,000 validation samples and picked the model with the largest validation accuracy for testing. To make a direct comparison to SSN \cite{soo2022training}, we use the same architecture structure as theirs. First, we train an autoencoder (Table~\ref{tab:autoencoder}) to reduce the dimensionality of MNIST images to 40 by using a mean-squared loss function. Then, we use this 40-dimensional vector as input to ORGaNICs and train it using the cross-entropy loss function. We additionally make the input gain $\bb$ dependent on the input $\x$, $\bb = f(\W_{bx}\x)$, where $f$ is sigmoid. A layer of ORGaNICs is given by Eq.~\ref{eq:wr_not_identity},
	\begin{equation}
	\begin{split}
	\btau_{y} \odot \dot{\y} &= - \y  + f(\W_{bx}\x) \odot \left(\W_{zx} \x\right) + \left(\mathbf{1} - \sqrt{\lfloor\av\rfloor}\right)  \odot \left(\W_r \y\right) \\
	\btau_{a} \odot \dot{\av}  &= -\av + \bb_0^2 \odot \bsigma^2 + \W \, \left(\y^2 \odot\lfloor \av \rfloor\right)
	\end{split}
	\end{equation}
	The ``output" of a layer is the steady-state firing rate of the neuron with the positive receptive field, i.e., $\y^+_s = \lfloor \y_s \rfloor^2$. We parameterize $\W_r$ to have a maximum singular value of 1 and instead of simulating the dynamical system to find the fixed point, we use the iterative Algorithm~\ref{algo:iterative_main} with a maximum number of steps = 10. More details about the parameters are given in Table~\ref{tab:organics_init}; kaiming uniform initialization is used from \cite{he2015delving}. Additional hyperparameters are given in Table~\ref{tab:hparams}. We train ORGaNICs in a single-layer setting with the number of $\y$ neurons encoding the input, $N_1 = 50, 80$. We also train two-layer ORGaNICs (Table~\ref{tab:organics}) with $N_1 = 120$ and $N_2 = 60$ neurons in each layer. The model is trained using backpropagation and takes approximately 10 min to fully train.
	
	\begin{table}[h]
		\centering
		\caption{Autoencoder architecture} \label{tab:autoencoder}
		\begin{tabular}{lcc}
			\toprule
			\textbf{Layer} & \textbf{Shape} & \textbf{Nonlinearity} \\
			\midrule
			Input $\rightarrow$ encoder (layer-1) & 784 $\times$ 360 & ReLU \\
			encoder (layer-1) $\rightarrow$ encoder (layer-2) & 360 $\times$ 120 & ReLU \\
			encoder (layer-2) $\rightarrow$ embedding & 120 $\times$ 40 & sigmoid \\
			embedding $\rightarrow$ decoder (layer-1) & 40 $\times$ 120 & ReLU \\
			decoder (layer-1) $\rightarrow$ decoder (layer-2) & 120 $\times$ 360 & ReLU \\
			decoder (layer-2) $\rightarrow$ output & 360 $\times$ 784 & sigmoid \\
			\bottomrule
		\end{tabular}
	\end{table}
	
	\begin{table}[h]
		\centering
		\caption{ORGaNICs parametrization for static MNIST classification} \label{tab:organics_init}
		\begin{tabular}{lccc}
			\toprule
			\textbf{Parameter} & \textbf{Shape} & \textbf{Learned} & \textbf{Initialization} \\ 
			\midrule
			$\W_{zx}$ & $N \times M$ & yes & kaiming uniform \\
			$\W_{bx}$ & $N \times M$ & yes & kaiming uniform \\
			$\W_{r}$ & $N \times N$ & yes & identity \\
			$\W$ & $N \times N$ & yes & ones \\
			$\bb_0$ & $N$ & yes & random normal \\
			$\bsigma$ & $N$ & no & ones \\
			\bottomrule
		\end{tabular}
	\end{table}

	\begin{table}[h]
		\centering
		\caption{ORGaNICs architecture for static MNIST classification} \label{tab:organics}
		\begin{tabular}{lcc}
			\toprule
			\textbf{Layer} & \textbf{Shape} & \textbf{Nonlinearity} \\
			\midrule
			Input $\rightarrow$ ORGaNICs (layer-1) & 40 $\times$ $N_1$ & None \\
			ORGaNICs (layer-1) $\rightarrow$ ORGaNICs (layer-2) & $N_1$ $\times$ $N_2$ & None \\
			ORGaNICs (layer-2) $\rightarrow$ fully-connected & $N_2$ $\times$ 10 & None \\
			\bottomrule
		\end{tabular}
	\end{table}

	\subsection{Permuted and Unpermuted sequential MNIST}
	\label{appendix:training_sMNIST}
	We performed a random split of 57,000 training samples and 3,000 validation samples and picked the model with the largest validation accuracy for testing. The unpermuted sequential MNIST task is defined as follows: for a given $28 \times 28$ image, we flatten it to get a one-dimensional, 784 timestep-long input. Then these pixels are presented as an input ($\x_i$, one pixel at each time-step $i$) to the Euler discretized rectified ORGaNICs model (Eq.~\ref{eq:rectified_model}) with rectified input drive, given by the following equations,
	\begin{equation}
	\begin{split}
	\y_{i+1} &= \y_i + \frac{\Delta t}{\btau_{y}} \odot \left(-\y_i +  \bb_i \odot \left\lfloor\W_{zx} \x_i\right\rfloor  + \left(\mathbf{1} - \av_i^{+}\right) \odot \left\lfloor \W_r \y_i \right\rfloor\right) \\
	\av_{i+1} &= \av_i + \frac{\Delta t}{\btau_{a}} \odot \left(-\av_i + \bb_{0,i}^2 \odot \bsigma^2 + \W \, \left(\y_i^+ \odot \av_i^{+2}\right)\right) \\
	\bb_{i+1} &= \bb_i + \frac{\Delta t}{\btau_{b}} \odot \left(-\bb_i + f(\W_{bx} \x_i + \W_{by} \y_i + \W_{ba} \av_i)\right) \\
	\bb_{0, i+1} &= \bb_{0, i} + \frac{\Delta t}{\btau_{b_0}} \odot \left(-\bb_{0, i} + f(\W_{b_0 x} \x_i + \W_{b_0 y} \y_i + \W_{b_0 a} \av_i)\right)
	\end{split}
	\end{equation}
	When we are done presenting the pixels we use the last hidden state, i.e., $\y_{784}$, to make the predictions. To make this more challenging we also train ORGaNICs on permuted sMNIST where we first permute the pixels of all the images in some random order and the rest of the task is the same. Instead of parametrizing $\btau$, we parametrize $ {\Delta t}/{\btau_{y}} = 0.05 * f(\mathbf{p}_y)$, $ {\Delta t}/{\btau_{a}} = 0.01 * f(\mathbf{p}_a)$ and $ {\Delta t}/{\btau_{b}} = 0.1 * f(\mathbf{p}_b)$ and $ {\Delta t}/{\btau_{b_0}} = 0.1 * f(\mathbf{p}_{b_0})$, so we can control the dimensionless relative time constants. In practice, we find it is better to make the $\av$ neurons sluggish compared to $\y$. This is based on the intuition given by the two-dimensional phase portrait for different relative time constants Fig.~\ref{fig:phase_portraits_time_constants}. All the parameters (including $\W_r$) are unconstrained for this task with initialization specified in Table~\ref{tab:organics_init_sMNIST}. Since ORGaNICs are stable, we did not need to use gradient clipping for training, which is commonly used for LSTMs. Additionally, we train the model using a StepLR learning rate scheduler with parameters given in Table~\ref{tab:hparams}. The model is trained using backpropagation through time (BPTT) and takes approximately 30 hours to fully train.
	
	\begin{table}[h]
		\centering
		\caption{ORGaNICs parametrization for sequential MNIST classification} \label{tab:organics_init_sMNIST}
		\begin{tabular}{lccc}
			\toprule
			\textbf{Parameter} & \textbf{Shape} & \textbf{Learned} & \textbf{Initialization} \\ 
			\midrule
			$\W_{zx}$ & $N \times 1$ & yes & kaiming uniform \\
			$\W_{bx}$ & $N \times 1$ & yes & kaiming uniform \\
			$\W_{by}$ & $N \times N$ & yes & kaiming uniform \\
			$\W_{ba}$ & $N \times N$ & yes & kaiming uniform \\
			$\W_{b_0x}$ & $N \times 1$ & yes & kaiming uniform \\
			$\W_{b_0y}$ & $N \times N$ & yes & kaiming uniform \\
			$\W_{b_0a}$ & $N \times N$ & yes & kaiming uniform \\
			$\W_{r}$ & $N \times N$ & yes & identity \\
			$\W$ & $N \times N$ & yes & ones \\
			$\bsigma$ & $N$ & no & ones \\
			\bottomrule
		\end{tabular}
	\end{table}

	\begin{table}[h]
		\centering
		\caption{Hyperparameters} \label{tab:hparams}
		\begin{tabular}{lcc}
			\toprule
			\textbf{Hyperparameter} & \textbf{Static MNIST} & \textbf{Sequential MNIST} \\
			\midrule
			Batch size & 256  & 256\\
			Initial Learning rate & 0.001 & 0.01 \\
			Weight decay & $10^{-5}$ & $10^{-5}$ \\
			Step size (StepLR) & None & 30 epochs \\
			Gamma (StepLR) & None & 0.8 \\
			\bottomrule
		\end{tabular}
	\end{table}
	
	\newpage
	\section{Supplementary figures}
	\begin{figure}[htpb]
		\centering
		\includegraphics[width=\linewidth]{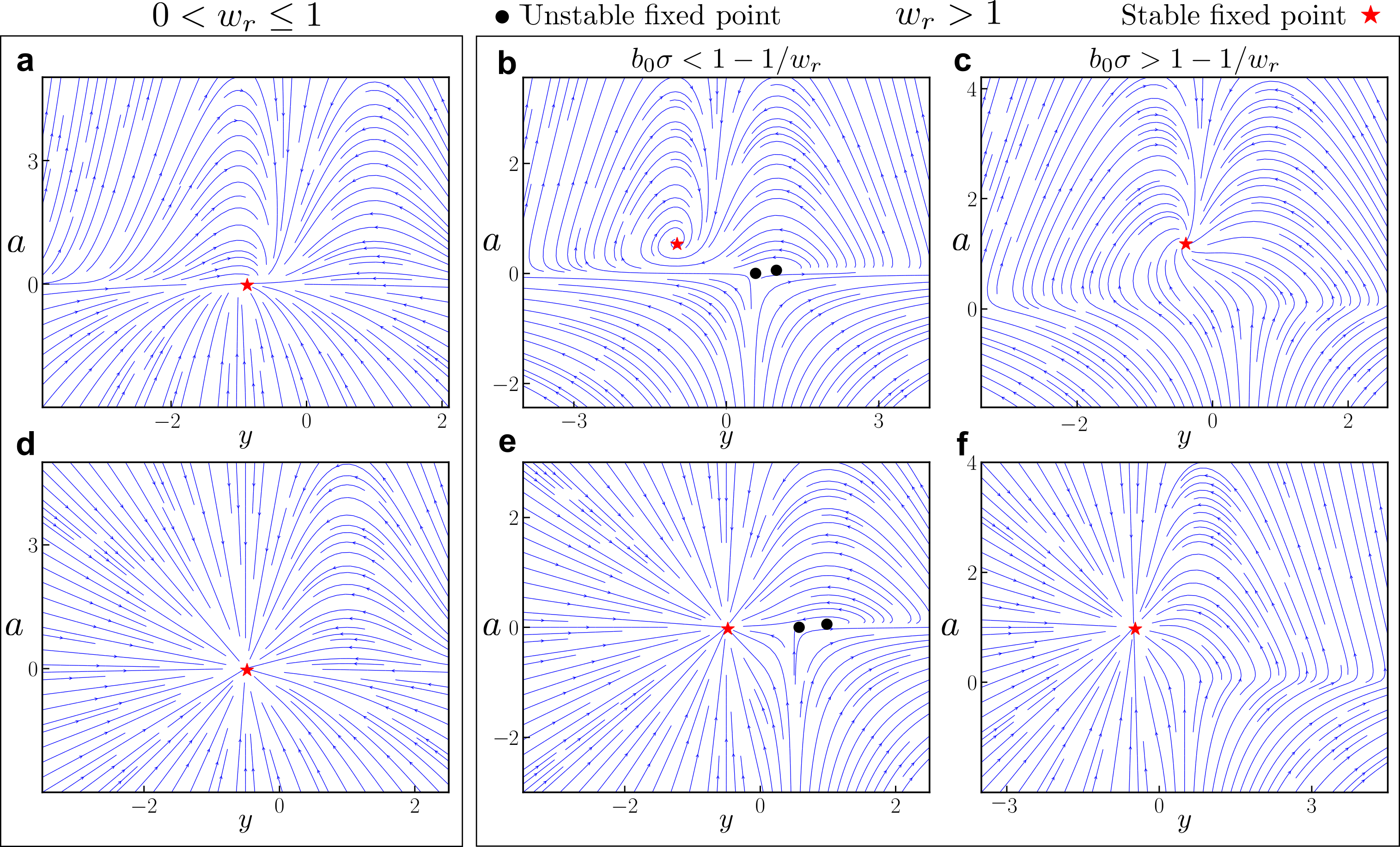}
		\caption{\textbf{Phase portraits for 2D ORGaNICs with negative input drive.} We plot the phase portraits of 2D ORGaNICs in the vicinity of the stable fixed point for contractive (\textbf{a, d}) and expansive (\textbf{b, c, e, f}) recurrence scalar $w_r$. A stable fixed point always exists, regardless of the parameter values. \textbf{(a-c)}, The main model (Eq.~\ref{eq:main_madel_2D_eqns}). \textbf{(d-f)}, The rectified model (Eq.~\ref{eq:2d_rectified_organics_eqn}). Red stars and black circles indicate stable and unstable fixed points, respectively. The parameters for all plots are: $b = 0.5$, $\tau_a = 2\,\text{ms}$, $\tau_y = 2\,\text{ms}$, $w = 1.0$, and $z = -1.0$. For \textbf{(a) \& (d)}, the parameters are $w_r = 0.5$, $b_0 = 0.5$, $\sigma = 0.1$; for \textbf{(b) \& (e)}, $w_r = 2.0$, $b_0 = 0.5$, $\sigma = 0.1$; and for \textbf{(c) \& (f)}, $w_r = 2.0$, $b_0 = 1.0$, $\sigma = 1.0$.}
		\label{fig:phase_portraits_negative}
	\end{figure}
	
	\begin{figure}[b]
		\centering
		\includegraphics[width=\linewidth]{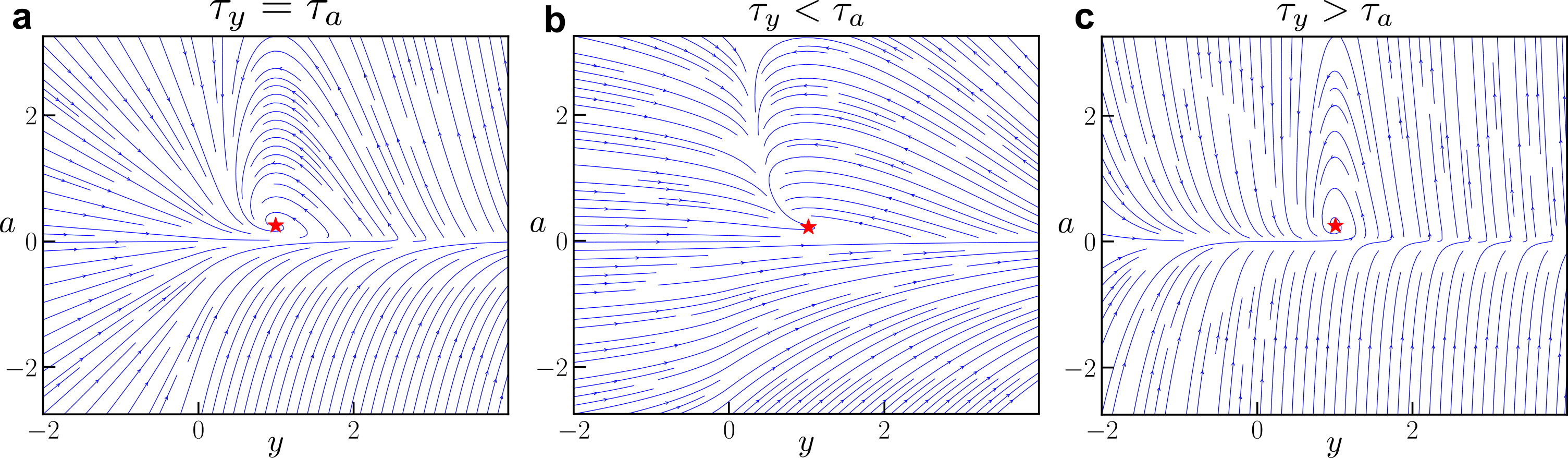}
		\caption{\textbf{Phase portraits for 2D rectified ORGaNICs for different time constants.} Red stars indicate stable fixed points. The parameters for all plots are: $w_r = 1.0$, $b_0 = 0.5$, $b = 0.5$, $\sigma = 0.1$, $w = 1.0$, and $z = 1.0$. For \textbf{(a)}, the time constants are $\tau_a = 2\,\text{ms}$, $\tau_y = 2\,\text{ms}$; for \textbf{(b)}, $\tau_a = 10\,\text{ms}$, $\tau_y = 2\,\text{ms}$; for \textbf{(c)}, $\tau_a = 2\,\text{ms}$, $\tau_y = 10\,\text{ms}$.}
		\label{fig:phase_portraits_time_constants}
	\end{figure}
	
	\begin{figure}[htpb]
		\centering
		\includegraphics[width=\linewidth]{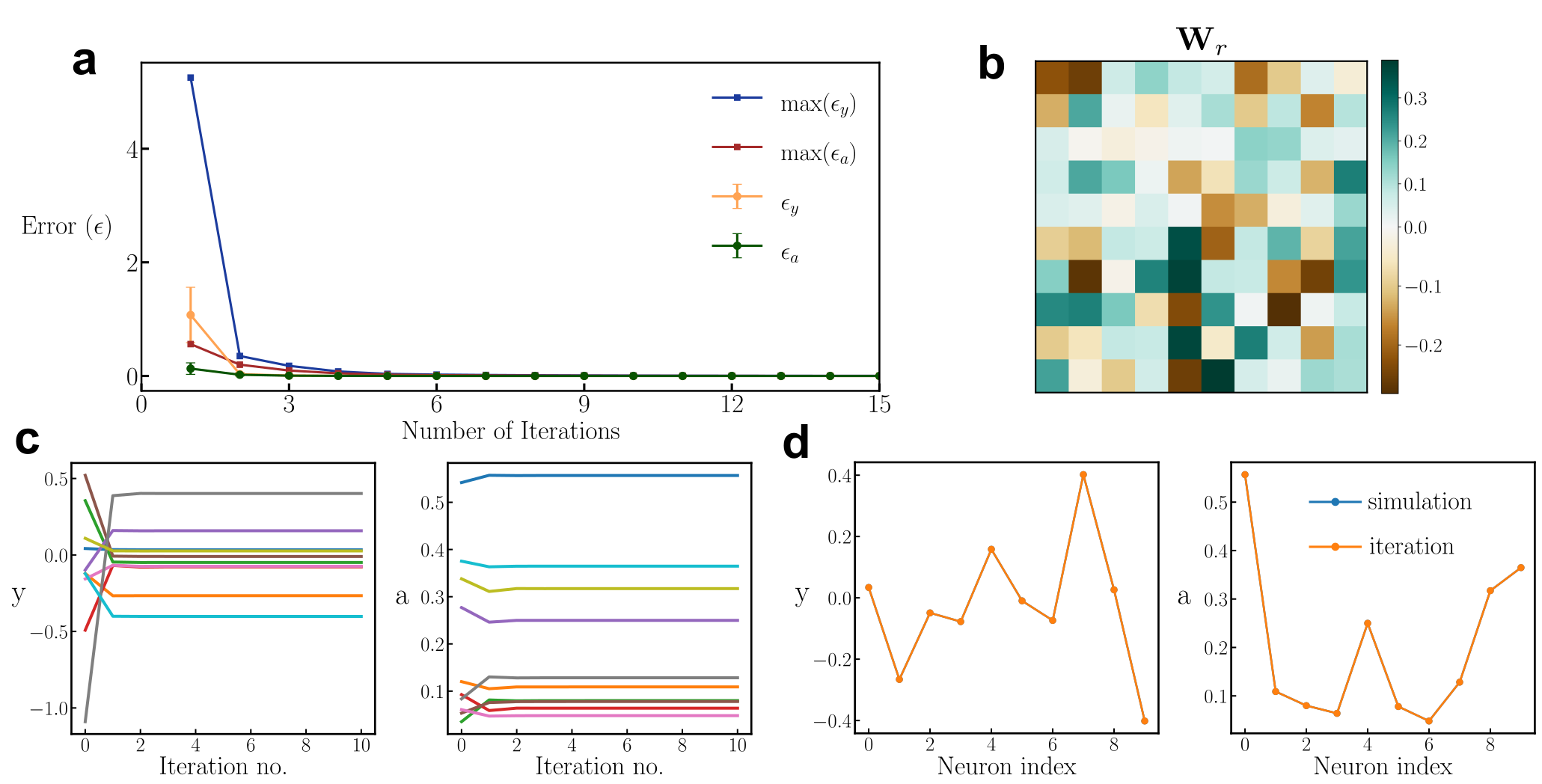}
		\caption{\textbf{Fast convergence of the iterative algorithm.} The results are for 20-dimensional ORGaNICs (10 $\y$ and 10 $\av$ neurons) with random parameters and inputs with the additional constraint of the maximum singular value of $\W_r$ equal to 1 and $||\z|| < 1$. \textbf{(a)}, Mean (with error bars representing 1-sigma S.D.) and maximum errors ($\epsilon$) as a function of number of iterations. $\epsilon$ is calculated as the norm of the difference between the true solution (found by simulation starting with random initialization) and the iteration solution. \textbf{(b)}, An example of a randomly sampled $\W_r$. \textbf{(c)}, Steady-state approximation as a function of iteration number. Different lines represent different neurons. \textbf{(d)}, Overlap between the iteration solution (after 15 iterations) and the true solution.}
		\label{fig:iteration_combined}
	\end{figure}
	
	\begin{figure}[b]
		\centering
		\includegraphics[width=\linewidth]{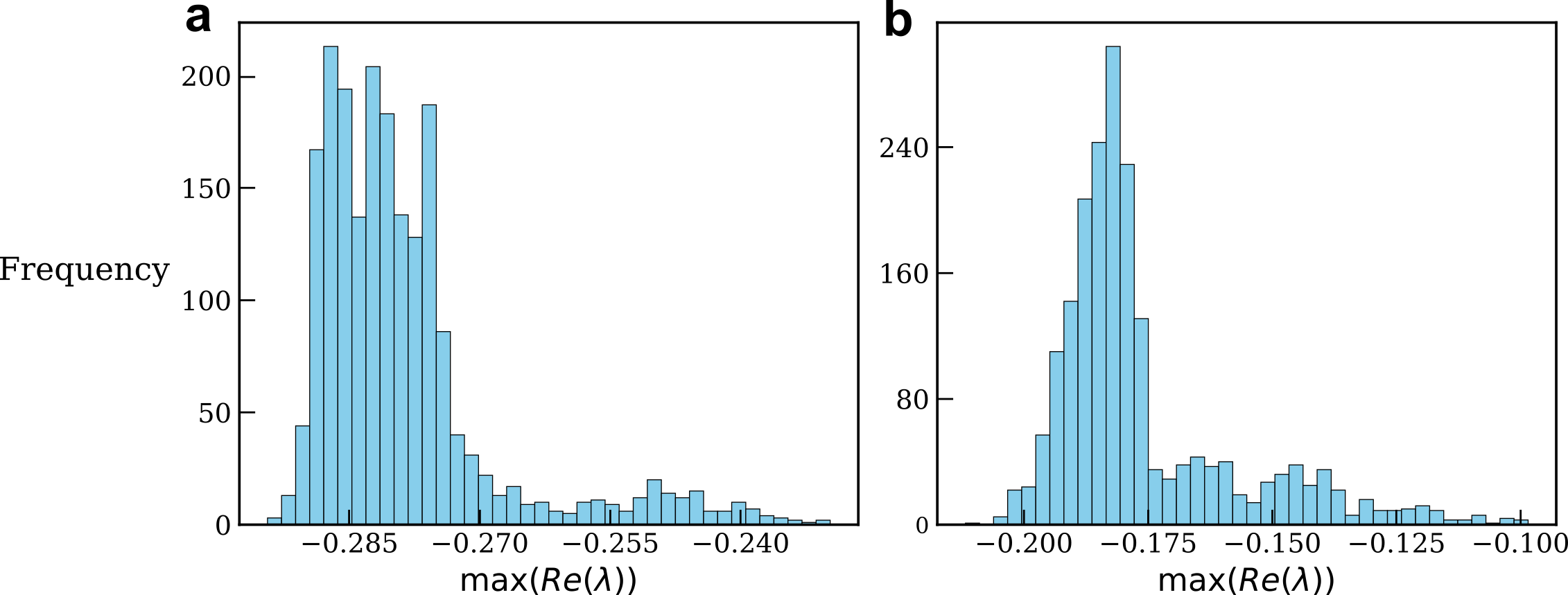}
		\caption{\textbf{Histogram for the eigenvalue with the largest real part}. We train two-layer ORGaNICs ($\tau_a = \tau_y = 2\,\text{ms}$) with a static MNIST input where $\W_r$ is constrained to have a maximum singular value of 1. We plot the histogram of eigenvalues of the Jacobian matrix with the largest real part, for inputs from the test set. We find that all the eigenvalues of the Jacobian have negative real parts, implying asymptotic stability. \textbf{(a)}, histogram for the first layer. \textbf{(b)}, histogram for the second layer. Note that since this is implemented in a feedforward manner, this is a cascading system with no feedback, hence we can perform the stability analysis of the two layers independently.
		}
		\label{fig:eigenvals_mnist}
	\end{figure}
	
	\begin{figure}[htpb]
		\centering
		\includegraphics[width=0.85\linewidth]{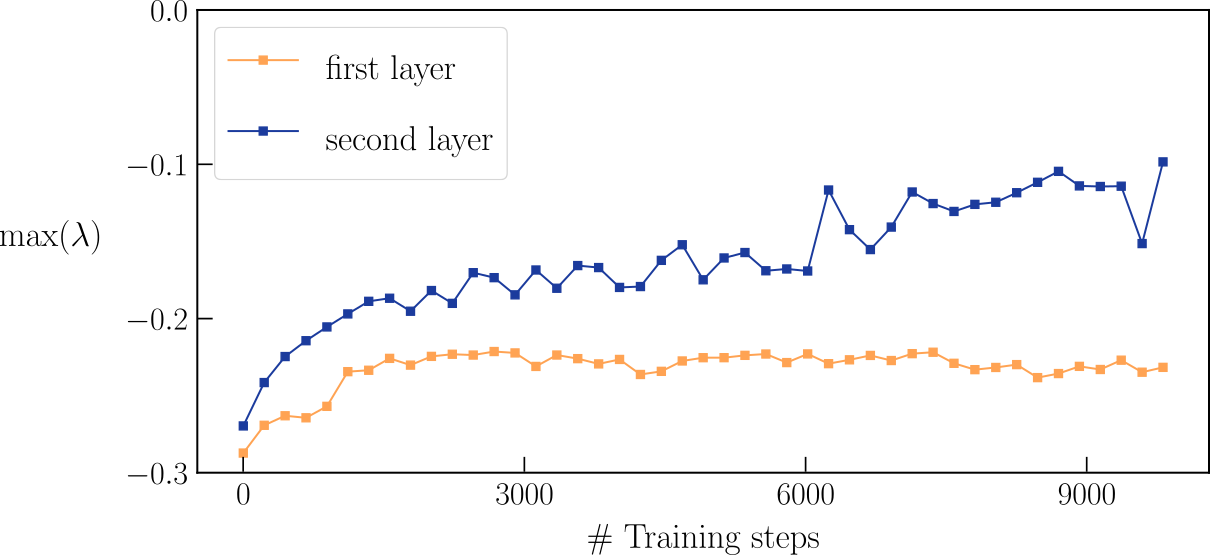}
		\caption{\textbf{Eigenvalue with the largest real part while training on static input (MNIST) classification task.} This plot shows the largest real part of eigenvalues across all test samples as training progresses. The fact that the largest real part consistently remains below zero indicates that the system maintains stability throughout training.}  \label{fig:stability_during_training}
	\end{figure}
	
	\begin{figure}[b]
		\centering
		\includegraphics[width=\linewidth]{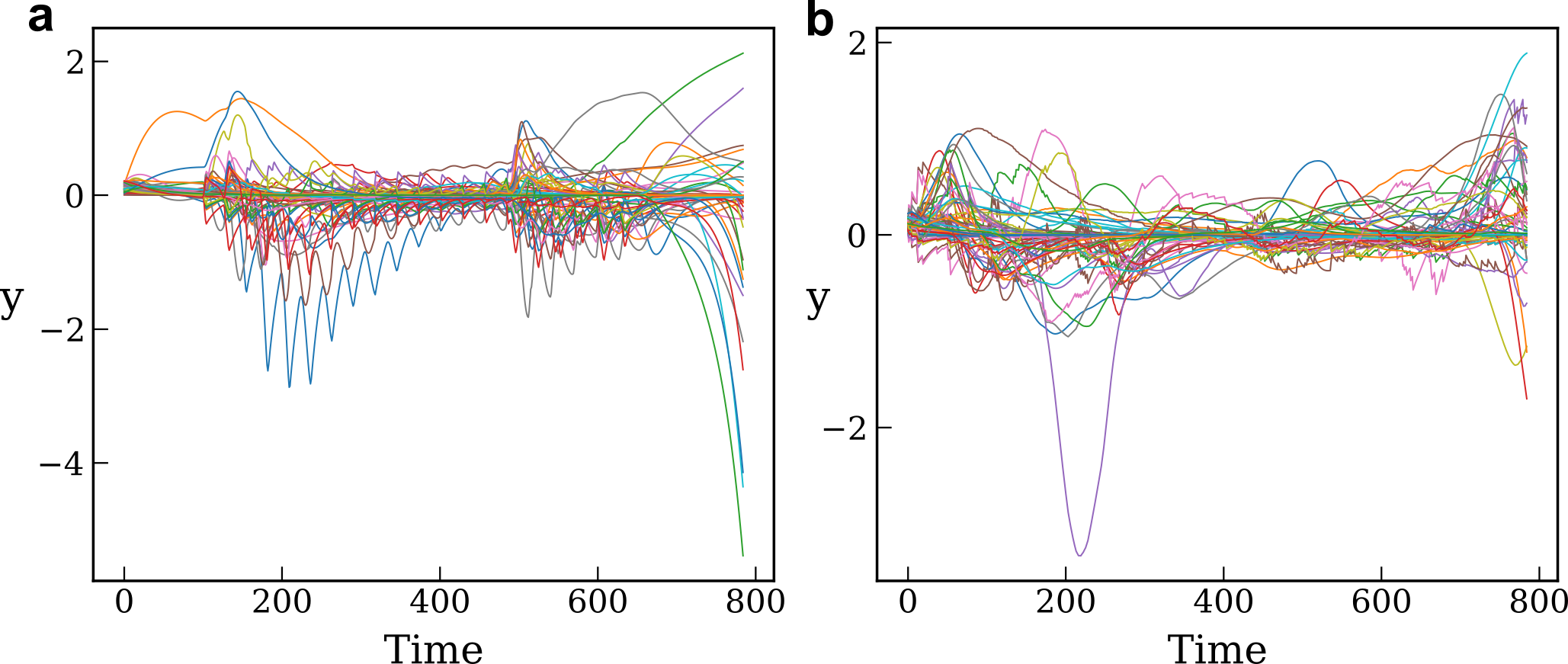}
		\caption{\textbf{Trajectories of the hidden states ($\y$).} This plot shows the dynamics of the hidden state as the input is being presented sequentially. We train ORGaNICs (128 units) as an RNN on \textbf{(a)}, unpermuted sequential MNIST and \textbf{(b)}, permuted sequential MNIST. The inputs are picked randomly from the test set. The hidden state trajectory remains bounded, indicating stability. 
		}
		\label{fig:trajectories}
	\end{figure}

\end{document}